\renewcommand{\cite}{\citet}
\renewcommand{\d}{\,\mathrm{d}}
\newcommand{\dd}{\overset{\mathrm{law}}{=}}
\newcommand{\E}{\mathbb{E}}    
\newcommand{\R}{\mathbb{R}}    
\newcommand{\N}{\mathbb{N}}    
  \newcommand{\id}{\mathds{1}} 
\newcommand{\I}{\mathcal{I}}   
\theoremstyle{plain}
\newtheorem{theorem}{Theorem}
\newtheorem{corollary}{Corollary}
\newtheorem{lemma}{Lemma}
\newtheorem{proposition}{Proposition}
\theoremstyle{definition}
\newtheorem{definition}{Definition}
\newtheorem{example}{Example}
\theoremstyle{remark}
\newtheorem{remark}{Remark}
\newcommand{\ee}{\varepsilon}
\newcommand{\n}[1]{\left\lVert#1\right\rVert}
\newcommand{\B}{\mathcal{B}}
\newcommand{\cE}{\mathcal{E}}
\newcommand{\bz}{{\mathbf{z}}}
\renewcommand{\P}{\mathcal{P}}
\newcommand{\bx}{\mathbf{x}}
\newcommand{\by}{\mathbf{y}}
\newcommand{\bmu}{\boldsymbol{\mu}}
\renewcommand{\bm}{\mathbf{m}}
\newcommand{\bn}{\mathbf{n}}
\newcommand{\bta}{\boldsymbol{\eta}}
\newcommand{\bnu}{\boldsymbol{\nu}}
\newcommand{\T}{\mathcal{T}}
\newcommand{\nub}{\bar{\nu}}
\newcommand{\mub}{\bar{\mu}}
\newcommand{\bpsi}{\boldsymbol{\psi}}
\def\d{\mathrm{d}}
\def\laweq{\buildrel \mathrm{law} \over =}
\def\lawis{\buildrel \mathrm{law} \over \sim}
\newcommand{\bone}{ {\mathbbm{1}} }
\renewcommand{\S}{\mathbb{M}}
\renewcommand{\P}{\mathcal{P}}
\newcommand{\M}{\mathcal{M}}
\newcommand{\C}{\mathcal{C}}
\newcommand{\cR}{\mathcal{R}}
\newcommand{\K}{\mathcal{K}}
\newcommand{\gicx}{\succeq_{\mathrm{icx}}}
\newcommand{\rd}{\mathbb{R}_+^d}
\newcommand{\nn}{m_{\bnu}}
\newcommand{\mm}{m_{\bmu}}
\newcommand{\W}{\mathcal{W}}
\newcommand{\V}{\mathcal{V}}
\newcommand{\one}{\mathbf{1}}
\newcommand{\lcx}{\preceq_{\mathrm{cx}}}
\newcommand{\gcx}{\succeq_{\mathrm{cx}}}
\renewcommand{\ge}{\geqslant}
\renewcommand{\le}{\leqslant}
\renewcommand{\geq}{\geqslant}
\renewcommand{\leq}{\leqslant}
\renewcommand{\epsilon}{\varepsilon}
\begin{document}
\title{Simultaneous optimal transport}
\author{Ruodu Wang\thanks{Department of Statistics and Actuarial Science, University of Waterloo, Canada. Email: \texttt{wang@uwaterloo.ca}.}\and Zhenyuan Zhang\thanks{Department of Mathematics, Stanford University, USA. Email: \texttt{zzy@stanford.edu}.}}
\maketitle 

\begin{abstract}
We propose a general framework of mass transport between non-negative vector-valued measures, which will be called simultaneous optimal transport (SOT). 
The new framework is motivated by the need to transport resources of different types simultaneously, i.e., in single trips, from specified origins to destinations; similarly,
 in economic matching, one needs to couple two groups, e.g., buyers and sellers, by equating supplies and demands of different goods at the same time. 
The mathematical structure of simultaneous transport is very different from the classic setting of optimal transport, leading to many new challenges. 
The Monge and Kantorovich formulations are contrasted and  connected. 
Existence conditions and duality formulas are established. 
More interestingly, 
 by connecting SOT to a natural relaxation of martingale optimal transport (MOT), 
 we introduce the MOT-SOT parity, which allows for explicit solutions of SOT in many interesting cases.

\textbf{Keywords}: Vector-valued measures, duality, martingale optimal transport, multivariate convex order, matching
\end{abstract}  
\tableofcontents

 \section{Introduction}
 
 Optimal transport theory, originally developed by Monge and Kantorovich (see \cite{V09} for a history), 
 has wide applications in various scientific fields, including economic theory, operations research, statistics, machine learning, and quantitative finance. 
A specialized treatment of optimal transport in economics is given by \cite{G16}. For a mathematical background on optimal transport and its applications, we refer to the textbooks of \cite{A03,S15} and \cite{V03, V09}.
 

In this paper, we propose a new framework of optimal transport, which will be called \emph{simultaneous optimal transport} (SOT). 
In contrast to the classic optimal transport theory, which studies transports between two measures on spaces $X$ and $Y$, a simultaneous transport (either Monge or kernel, with a precise formulation in Section \ref{sec:model}) moves mass from $d$ measures on $X$ to $d$ measures on $Y$ \emph{simultaneously}.

SOT provides powerful tools for matching problems with multiple distributional constraints.  A considerable amount of new challenges and relevant applications arise, which will gradually be revealed in this paper. 
The new framework,  being mathematically interesting itself, is motivated by several applications from economics, risk management, and stochastic modeling, which are discussed in Section  \ref{sec:model} and Appendix \ref{sec:equilibrium}.   
As a primary example (details in Example \ref{ex:1}), suppose that several factories need to supply $d$ types of products to several retailers, and each factory only has one truck to transport their products to one destination. Since each product type has its own supply and demand, the objective is to make a transport plan such that all demands are met.
In case $d=1$, we speak of the classic optimal transport problem.   
Another natural example is refugee resettlement, where refugee families are resettled to different affiliates while fulfilling various quotas and requirements (Example \ref{ex:refugee}). 

 We will explain below several sharp contrasts between the new and the classic frameworks, along with our contributions and results.   The following points are ordered by their natural logical appearance, although the  main mathematical results (Theorems \ref{infs}-\ref{2way}) come a bit later. 

First, inspired by the example above, the measures at origin (supplies) do not necessarily have the same mass as the measures at destination (demands to meet). Obviously, there does not exist a possible transport if the  demands (in any product type) are larger than the supplies, but there can be transports if the demands are smaller than the supplies. 
We will say that the SOT problem is \emph{balanced} if the vector of total masses at origin  is equal to the vector of total masses at destination, and otherwise it is \emph{unbalanced} (see Section \ref{sec:model} for a precise definition).
Unbalance is generally not an issue if $d=1$ since one can glue a point at the destination which incurs no transport cost to reformulate the problem as a balanced problem, but such a trick does not work in the SOT setting;  see Section \ref{sec:model} for an explanation. A connection between the balanced and unbalanced settings is established in Section \ref{sec:Kanto} via a continuity result (Proposition \ref{connect}).
 
Second, one needs to specify a reference measure with respect to which the transport cost is computed. In classic transport theory, the cost is integrated with respect to the measure at origin (supply). 
In the example above, it seems that none of the distributions of the product supplies is a natural benchmark for computing the cost; neither are their combinations. A separate benchmark measure needs to be introduced (see Section \ref{sec:model}), and it may cause extra technical subtlety depending on whether it is equivalent to a measure dominating the measures at origin. 

Third, for two given $d$-tuples of (probability) measures, a simultaneous transport may not exist, even if there are no atoms in these measures (transports between atomless probabilities always exist in case $d=1$).
As a trivial example, suppose that there are a continuum of factories, each supplying an equal amount of product A and product B, and a continuum of retailers, half  demanding a ratio of $2:1$ between products A and B and the other half demanding a ratio of $1:2$ between A and B. If the total demand vector is equal to the total supply vector, then there is obviously no possible transport plan; indeed, any transport plan would supply the same amount of A and B to any retailer, leading to over-supplying of one product for each supplier. However, if, instead of a $1:1$ ratio, half of the factories supply in a $3:1$ ratio between A and B, 
and the other half supply in a $1:3$ ratio, then transport plans exist, and we can choose from these plans to minimize the total transport cost. Moreover, it is easy to see from this example that the SOT problems are not symmetric in the measures at origin and the measures at destination, in sharp contrast to the classic problem.
Even if transport plans exist, the set which it can be chosen from is bound to additional constraints. 
The existence issue of simultaneous transport will be studied in Section \ref{S2} using the notions of joint non-atomicity and heterogeneity order (Proposition \ref{prop:ssww}), based on existing results of  \cite{T91} and \cite{SSWW19}. 
Several other interesting inequalities (e.g., Proposition \ref{2infp}) are also discussed in Section \ref{S2}.

Fourth, in the balanced setting, the classic transport problem can be conveniently written in the Kantorovich formulation as each transport corresponds to a joint probability measure with specified marginals but unspecified dependence structure (or a copula, see e.g., \cite{B10} and \cite{J14}). 
In the SOT framework, since there is no ``first marginal" or ``second marginal" of the problem (instead,  two vectors of   marginals), 
the Kantorovich formulation via joint distributions is less clear than in the classic case, and it is studied in Section \ref{sec:Kanto}.
Assuming joint non-atomicity, we prove that the Monge and Kantorovich (kernel) formulations have the same infimum cost (Theorem \ref{infs}). 

Fifth, a duality theorem for balanced SOT is obtained   in Section \ref{Dua}, which has a different form compared with the classic duality formula (Theorem \ref{duality1}).  Using the duality result, 
we construct in Appendix  \ref{sec:equilibrium} a labour market equilibrium model (see e.g., \cite{G16} for a classic equilibrium model in case $d=1$), where workers, each with several types of skills and seeking to optimize their wage, are matched with firms, each seeking to employ these skills of a certain cumulative amount to optimize their profit. The equilibrium wage function and the equilibrium profit function are obtained from the duality formula for  given distributions of the skills that workers supply and firms seek.



Sixth, and most importantly, SOT enjoys a unique connection to the active literature 
 of martingale optimal transport (MOT) between two probability measures, that is, classic optimal transport with a martingale constraint. The study of MOT, initialized by \cite{BHP13} in  discrete time and \cite{GHT14} in continuous time,  is motivated by   applications in  mathematical finance, in particular, in robust option pricing.  The theory is further reinforced by \cite{BJ16}, \cite{BNT17}, and \cite{DT19}, among many others; see also \cite{H17} for a recent survey. In Section \ref{W}, we discover an intriguing connection between SOT and MOT, which we call 
 the {MOT-SOT parity}, that connects SOT in the balanced case with a suitable relaxation of MOT (Theorem \ref{thm:rep}). 
 This connection allows us to apply techniques from MOT to SOT, thus bridging between two rich topics. 
In the special case of two-way transport, i.e., simultaneous transport is possible in both forward and backward directions, the MOT component of the problem is degenerate, and the SOT problem can be completely solved 
(Theorem \ref{2way}).

In Section \ref{sec:7} we conclude the paper with several other promising directions for future research and open challenges. 
In recent years there has been a growing interest in various generalizations of the classic Monge--Kantorovich optimal transport problem. A few generalizations of optimal transport in higher dimensions are related to our paper. To minimize distraction to the reader, we collect them in Appendix \ref{rev} with some detailed discussions. The closest to our framework is perhaps \cite{W19} who considered a similar setting to our simultaneous transport with a different focus and distinctive mathematical results.  

\section{Simultaneous optimal transport}
\label{sec:model}

We first briefly review the classic Monge--Kantorovich transport problem. 
 For a measurable space $X$ that is also a Polish space equipped with the Borel $\sigma$-field $\B(X)$, we denote by $\P(X)$ the set of all Borel probability measures on $X$. Consider Polish spaces $X,Y$, and probability measures $\mu\in \P(X) $ and $\nu\in\P(Y)$. 
 Although our results are formulated on general Polish spaces, it does not hurt to think of $X=\R^N$ and $Y=\R^N$ as the primary example.
 We will always equip $X\times Y$ with the product $\sigma$-field. In the following while writing $A\subseteq X,\ B\subseteq Y$, we always assume that $A,B$ are Borel measurable subsets. Given a cost function $c:X\times Y\to [0,\infty]$, the classic optimal transport problem raised by Monge asks for
$$\inf_{T\in \T(\mu,\nu)}\int_Xc(x,T(x))\mu(\d x),$$
    where $\T(\mu,\nu)$ consists of transport maps from $\mu$ to $\nu$,  i.e., measurable functions $T:X\to Y$ such that $\mu\circ T^{-1}=\nu$.
    
    Kantorovich later studied a relaxation of Monge's problem, that is, to solve for
    $$\inf_{\pi\in \Pi(\mu,\nu)}\int_{X\times Y}c(x,y)\pi(\d x,\d y),$$ where $\Pi(\mu,\nu)$ is the set of transport plans from $\mu$ to $\nu$,  i.e., 
    the set of probability measures $\pi\in\P(X\times Y)\text{ such that for any }A\subseteq X\text{ and } B\subseteq Y,\ \pi(A\times Y)=\mu(A)\text{ and } \pi(X\times B)=\nu(B).$ These are the celebrated Monge--Kantorovich optimal transport problems. 
     
\subsection{Simultaneous transport}

Throughout, components of vector-valued measures are non-negative (i.e., they are measures, not signed measures).
We denote by $d\in\N$ the dimension of a vector-valued measure,  where the more interesting case is when $d\geq 2$, and by $[d]=\{1,\dots,d\}$. We work with $d$-tuples of finite Borel measures $\bmu=(\mu_1,\dots,\mu_d)$ on $X$ and $\bnu=(\nu_1,\dots,\nu_d)$ on $Y$ such that for each $j \in [d] $, $\mu_j(X)\geq\nu_j(Y)>0$.  

We propose the new framework of \emph{simultaneous optimal transport} (SOT) by requiring that a certain transport \emph{map} or transport \emph{plan} sends $\mu_j$ to cover $\nu_j$ simultaneously for all $j \in [d]$. In this setup,  the set of all simultaneous transport maps is defined as 
$$\T(\bmu,\bnu):=\{T:X\to Y \mid \bmu\circ T^{-1}\geq \bnu\}.$$
Here and throughout, equalities and inequalities are understood component-wise, and for two measures $\mu$ and $\nu$ on the same space, $\mu\ge \nu$ means that $\mu(A)\ge \nu(A)$ for all measurable $A$.
If $\bmu(X)=\bnu(Y)$, then we speak of \emph{balanced} simultaneous transports.

The most natural and intuitive way to describe the set of all (simultaneous) transport plans is to use $\mathcal K(\bmu,\bnu)$, the set of all stochastic kernels $\kappa$ such that $\kappa_{\#} \bmu\geq \bnu$,
and defined as 
\begin{equation}
    \label{eq:kappadef}
\kappa_{\#} \bmu (\cdot) := \int_X \kappa(x;\cdot)  \bmu(\d   x) \ge \bnu(\cdot) .\end{equation}

Imagine that one would like to distribute goods from a (possibly infinitesimal) point $x\in X$ to different places in $Y$, then the measure $\kappa(x;\cdot)$ describes such a distribution. In view of this definition, the set of stochastic kernels $\mathcal K(\bmu,\bnu)$ can be written as an intersection:
$$\mathcal K(\bmu,\bnu)=\bigcap_{j=1}^d\mathcal K(\mu_j,\nu_j).$$In words, a simultaneous transport plan from $\bmu$ to $\bnu$ sends simultaneously $\mu_j$ to $\nu_j$ for any $j \in [d]$.
The non-emptyness of $\T(\bmu,\bnu)$ and $ \mathcal K(\bmu,\bnu)$ is not guaranteed generally, and will be explained later in Section \ref{exi}.  

In the case $d=1$ and $\mu(X)=\nu(Y)$, our problem reduces to the classic Monge--Kantorovich problem. 
We first illustrate an example of simultaneous transport problems, which sheds some light on the special structure and technical difference of our problem in contrast to the classic  problem.

  \begin{example}[Simultaneous transport of supplies]
  \label{ex:1}
    Suppose that there are $m$ factories; each factory $j$ has $a_j$ units of product A and $b_j$ units of product B.
  There are $m'$ retailers, each demanding $a_k'$ units of A and $b_k'$ units of B.
  We assume that the supply is enough to cover the demand, that is, with normalization,
  $$1=\sum_{j=1}^m a_j \ge  \sum_{k=1}^{m'} a_k' \mbox{~~and~~} 1=\sum_{j=1}^m b_j\ge \sum_{k=1}^{m'} b_k'.$$
  If we assume demand-supply clearance, then, with normalization, \begin{equation}
      \label{eq:supply-demand}\sum_{j=1}^m a_j= \sum_{k=1}^{m'} a_k'=\sum_{j=1}^m b_j=\sum_{k=1}^{m'} b_k'=1. 
  \end{equation} 
  Let $\mu_1$ be a probability such that $\mu_{1}(\{j\})=a_j$ for each $j$, and similarly, $\mu_{2}(\{j\})=b_j$ for each $j$, and $\nu_{1}(\{k\})=a'_k$ and $\nu_{2}(\{k\})=b'_k$ for each $k$. Write $\bmu=(\mu_1,\mu_2)$ and $\bnu=(\nu_1,\nu_2)$. 
   \begin{enumerate}
       \item   A transport in $\mathcal T( \bmu,\bnu)$ or $\mathcal K( \bmu,\bnu)$, if it exists, is an arrangement to send  products from factories to retailers to meet their demand. We cannot transport products within the $m'$ retailers or within the $m$ factories.
   \item  The transport in $\mathcal T( \bmu,\bnu)$  is required to be done in single trips:  One factory can only supply one retailer.
    This is illustrated in  Figure \ref{fig:2} (a). As a practical example, we may think of the situation where each factory only has one truck that goes to one destination in every production cycle.
       \item We may allow each factory to supply multiple retailers, e.g., a factory with multiple trucks. 
       In this case, we can use the formulation of transport kernels in $\mathcal K( \bmu,\bnu)$. 
       We note that a non-trivial constraint imposed by the formulation \eqref{eq:kappadef}  is that the amount of A and that of B are proportional in each truck departing from the same factory (e.g., bundled goods, or worker skills in Appendix \ref{sec:equilibrium} which are not  divisible).
       This is illustrated in  Figure \ref{fig:2} (b).
\item If demand-supply clearance \eqref{eq:supply-demand} holds, then the transport is balanced; otherwise it is unbalanced.  In case \eqref{eq:supply-demand} holds, one may consider the backward direction of transporting $\bnu$ to $\bmu$, and this leads to two-way transports treated in Section \ref{W}.    
   \end{enumerate}

\label{fac}
    \end{example}
    
\begin{figure}[h!]  
\begin{center}
\begin{subfigure}[b]{0.49\textwidth}
\centering
\begin{tikzpicture} 
\draw[gray, very thick] (-4.9,0)--(0.9,0);
\draw[gray, very thick] (-4.9,-5)--(0.9,-5);
\filldraw[blue, very thick] (-4.6,0) rectangle (-4.4,0.3);
\filldraw[red,ultra thick] (-4.6,0.3) rectangle (-4.4,1);
\filldraw[blue, very thick] (-3.6,0) rectangle (-3.4,0.8);
\filldraw[red,ultra thick] (-3.6,0.8) rectangle (-3.4,1);
\filldraw[blue, very thick] (-2.6,0) rectangle (-2.4,0.7);
\filldraw[red,ultra thick] (-2.6,0.7) rectangle (-2.4,1);
\filldraw[blue, very thick] (-1.6,0) rectangle (-1.4,0.1);
\filldraw[red,ultra thick] (-1.6,0.1) rectangle (-1.4,1);
\filldraw[blue, very thick] (-0.6,0) rectangle (-0.4,0.5);
\filldraw[red,ultra thick] (-0.6,0.5) rectangle (-0.4,1);
\filldraw[blue, very thick] (0.4,0) rectangle (0.6,0.6);
\filldraw[red,ultra thick] (0.4,0.6) rectangle (0.6,1);

\filldraw[blue, very thick] (-3.2,-5) rectangle (-2.9,-3.9);
\filldraw[red,ultra thick] (-3.2,-3.9) rectangle (-2.9,-2);
\filldraw[blue, very thick] (-1.1,-5) rectangle (-0.8,-3.1);
\filldraw[red,ultra thick] (-1.1,-3.1) rectangle (-0.8,-2);

\draw[very thick, ->](-4.5,-0.1)--(-3.3,-1.9);
\draw[very thick, ->](-3.5,-0.1)--(-1.2,-1.9);
\draw[very thick, ->](-2.5,-0.1)--(-3,-1.9);
\draw[very thick, ->](-1.5,-0.1)--(-2.8,-1.9);
\draw[very thick, ->](-0.5,-0.1)--(-0.9,-1.9);
\draw[very thick, ->](0.5,-0.1)--(-0.7,-1.9);
\end{tikzpicture}
\caption{Monge}
\end{subfigure} 
\begin{subfigure}[b]{0.49\textwidth}
\centering
\begin{tikzpicture}
 \draw[gray, very thick] (-4.9,-1)--(0.9,-1);
\draw[gray, very thick] (-4.9,-5)--(0.9,-5);  
\filldraw[blue,very thick] (-2.2,-1) rectangle (-1.9,0.32);
\filldraw[red,ultra thick] (-2.2,0.32) rectangle (-1.9,0.98);  
\filldraw[blue, opacity=0.2] (-3.2,-1) rectangle (-2.9,-0.3);
\filldraw[red,opacity=0.2] (-3.2,-0.3) rectangle (-2.9,0.98); 
 \filldraw[blue, opacity=0.2] (-4.2,-1) rectangle (-3.9,0.35);
\filldraw[red,opacity=0.2] (-4.2,0.35) rectangle (-3.9,0.98);  
\filldraw[blue,opacity=0.2] (-1.2,-1) rectangle (-0.9,0.2);
\filldraw[red,opacity=0.2] (-1.2,0.2) rectangle (-0.9,0.98);  
\filldraw[blue, opacity=0.2] (-0.2,-1) rectangle (0.1,-0.7);
\filldraw[red,opacity=0.2] (-0.2,-0.7) rectangle (0.1,0.98);  
\filldraw[blue, very thick] (-4,-5) rectangle (-3.8,-3.2);
\filldraw[red,ultra thick] (-4,-3.2) rectangle (-3.8,-2.3);
\filldraw[blue, very thick] (-2.2,-5) rectangle (-2,-4);
\filldraw[red,ultra thick] (-2.2,-4) rectangle (-2,-3.5);
\filldraw[blue, very thick] (-0.3,-5) rectangle (-0.1,-3.6);
\filldraw[red,ultra thick] (-0.3,-3.6) rectangle (-0.1,-2.9); 
\draw[very thick, ->](-2.05,-1.1)--(-3.8,-2.1); 
\draw[very thick, ->](-2.05,-1.1)--(-2.1,-3.2); 
\draw[very thick, ->](-2.05,-1.1)--(-0.3,-2.7); 
\end{tikzpicture}
\caption{Kernel}
\end{subfigure}

\end{center}

\caption{A showcase of simultaneous transport of supplies; red and blue represent different types of products.}
\label{fig:2}
\end{figure}

Example \ref{ex:1} and its continuous version will serve as  a primary example to facilitate the understanding of our new framework. To quantify the cost of simultaneous transports, a cost function will be associated with the simultaneous transport problem, as in the classic formulation. Throughout, we define the normalized average measures \begin{align}
    \mub:=\frac{\sum_{ j=1}^d \mu_j}{\sum_{ j=1}^d\mu_j(X)}\text{ and }\nub:=\frac{\sum_{ j=1}^d\nu_j}{\sum_{ j=1}^d\nu_j(Y)},\label{mubnub}
\end{align}
which are probability measures.
In case $\mu_1,\dots,\mu_d$ are themselves probability measures, $\mub$ is their arithmetic average. 
Consider a measurable function $c:X\times Y\to [0,\infty]$ and a \emph{reference} probability measure $\eta$ on $X$ such that $\eta\ll\mub$. We define the transport costs as follows: for $T\in \T(\bmu,\bnu)$, let
\begin{align}
\mathcal C_{\eta}(T) :=\int_X c(x,T(x))  \eta(\d x). \label{eq:cost}
 \end{align}
 Such a reference measure $\eta$ allows us the greatest generality in view of Example \ref{fac}: We allow nonlinear dependencies of $\eta$ in terms of $\bmu$, for example, when computing the petrol cost which is nonlinear in weights of the transported products. We impose the condition $\eta\ll\mub$ because it would be unreasonable to assign a cost where there is no transport. (For general $\eta\in \M(X)$, we can always normalize it to a probability without loss of generality.) This is also equivalent to multiplying the cost function $c$ by a factor $\d\eta/\d\mub(x)$ and using $\mub$ as the reference measure, but such a change may affect the continuity of the cost function. 
 
 In terms of $\kappa \in\mathcal K(\bmu,\bnu)$, we define the transport cost
    \begin{align}
 \label{eq:cost2}\mathcal C_{\eta}(\kappa):=   \int_{X\times Y} c(x,y)  \eta \otimes \kappa (\d x,\d y).
 \end{align}
 
 The   quantities of interest are the minimum (or infimum) costs 
     \begin{align*}
\inf_{T\in \mathcal T(\bmu,\bnu)}    \mathcal C_{\eta}(T) \mbox{~~~and~~~} \inf_{\kappa\in \mathcal K(\bmu,\bnu)}    \mathcal C_{\eta}(\kappa),
     \end{align*}
     as well as the optimizing transport map and kernel. 
    If $\T(\bmu,\bnu)$ or $\mathcal K(\bmu,\bnu)$ is an empty set, the corresponding minimum cost is set to $\infty$. 
    In dimension $d=1$ and when $\eta=\mub$, this cost coincides with the classic Monge--Kantorovich costs. In case $\eta=\mub$, we omit the subscript $\eta$ in \eqref{eq:cost} and \eqref{eq:cost2}.

\begin{example}[Refugee resettlement]
\label{ex:refugee} Refugee resettlement is an active problem in operations research (\cite{DKT16} and \cite{A21}).   Let $\mathcal F=\{F^1,\dots,F^{I}\}$ denote the set of refugee families, where each family $F^i$ consists of members $F^i=\{f^{i,1},\dots,f^{i,J_i}\}$. Our goal is to resettle these refugee families to affiliates $\mathcal L=\{L^1,\dots,L^{N}\}$, such as different cities across USA. 
There are various quotas  $\mathcal Q=\{Q^1,\dots,Q^{K}\}$ to be fulfilled by each family, such as the numbers of adults and children. Let $q^i_k$ denote the contribution of quota  $k$ by family $i$.  Each quota $Q^k$ must exceed $\underline{q}^\ell_k$ in the affiliate $L^\ell$. The constraints are twofold: each family member in a refugee family must be resettled to the same affiliation, and the quota requirements are satisfied. To each family-affiliation match is attached a quality score $v^i_\ell$, such as the total employment outcome. These lead to the following integer optimization problem:
\begin{align}
    \text{maximize }\ &\sum_i\sum_\ell v^i_\ell z^i_\ell,\nonumber\\
    \text{subject to }\ &z^i_{\ell}\in\{0,1\}\text{ and }\sum_\ell z^i_\ell\leq 1\text{ for all }i;\nonumber\\
    & \sum_i q^i_kz^i_{\ell}\geq \underline{q}^\ell_k\text{ for all }\ell,k.\label{eq:refugee}
\end{align}
 To see this is within the SOT framework \eqref{eq:kappadef}, we let $X=\mathcal F$ and $Y=\mathcal L$, and define measures $(\mu_1,\dots,\mu_{|\mathcal Q|})$ on $X$ by $\mu_k(\{F^{i}\})=q^i_k$ and $(\nu_1,\dots,\nu_{|\mathcal Q|})$ on $Y$ by $\nu_k(\{L^{\ell}\})=\underline{q}^\ell_k$. The cost function is $-v^i_\ell$. The condition $z^i_{\ell}\in\{0,1\}$ asserts that the problem is Monge, corresponding to the fact that each family may be resettled  only in one affiliate. Our formulation differs from the original formulations in \cite{DKT16} and \cite{A21} where the $``\geq"$ in \eqref{eq:refugee} is $``\leq"$, thus a ``dual SOT problem" unbalanced in an opposite direction.
\end{example}

In general, if the supports of $\mub,\nub$ are both finite (e.g., Example \ref{fac}), then the optimal transport problem  is equivalent to a finite-dimensional linear programming problem, which can be handled conveniently by linear programming solvers.
The dimension $d\ge 2$ of $\bmu$ and $\bnu$
leads to more constraints in this linear program compared to the classic case of $d=1$. These additional constraints are highly non-trivial. For instance, the additional constraints may rule out the existence of any transport, in contrast to the case $d=1$; see Section \ref{exi}.

\subsection{Balanced simultaneous transport}

Although we have set up the problem in greater generality with unbalanced measures, in some parts of this paper we will focus on the balanced case where $\bmu(X)=\bnu(Y)$. We may without loss of generality assume that each $\mu_j,\nu_j$ are probability measures. In this case, we have
$$\T(\bmu,\bnu)=\{T:X\to Y \mid \bmu\circ T^{-1}=\bnu\}$$and$$\mathcal K(\bmu,\bnu)=\{\kappa \mid \kappa_{\#} \bmu=\bnu\}.$$
The two examples below illustrate some particular applications of this setting, in addition to the supply-demand clearing case \eqref{eq:supply-demand} of Example \ref{ex:1}. 

\begin{example}[Financial cost efficiency with multiple distributional constraints]\label{opt}
Let $(X,\mathcal F)$ be a measurable space on which $\mu_1,\dots,\mu_d$ are $d$ probability measures and $\mathcal L$ be the set of random variables on $(X,\mathcal F)$. 
Let $\nu_1,\dots,\nu_d$ be $d$ distributions on $\R$ 
and define 
  $$\mathcal L_{\bnu}(\bmu):=\{L\in \mathcal L\mid  L \lawis_{\mu_i} \nu_i  , ~i\in [d] \}$$
where $L\lawis_\mu \nu$ means that $L$ has distribution $\nu$ under $\mu$. 
The set $\mathcal L_{\bnu}(\bmu)$ represents all possible financial positions which have distribution $\nu_j$ under a reference probability $\mu_j$.
As an example in case $d=2$, an investor may seek for an investment $L$ which has a target distribution $\nu_1$ under her subjective probability measure $\mu_1$ and is bound by regulation to have a distribution $\nu_2$ under a regulatory measure $\mu_2$; see \citet[Section 5]{SSWW19}.
The investor is interested in the optimization problem \begin{align}\min\left\{\E^\eta [f(L)] \mid  L\in \mathcal L_{\bnu}(\bmu)\right\},\label{m}\end{align} where $\eta\ll \mub$ and $f$ is a nonnegative measurable function. 
If the probability measure $\eta$ is a pricing measure on the financial market, then the optimization problem \eqref{m} is to find the cheapest financial position $f(L)$ with $L$ satisfying the distributional constraints. 
In case $d=1$, i.e., with only one distributional constraint, this problem  is the cost-efficient portfolio problem studied by \cite{D88}, which can be solved by the classic Fr\'{e}chet-Hoeffding inequality (e.g., \cite{R13}). 
For $d\ge 2$, the problem becomes much more complicated, and a special case of mutually singular  $\mu_1,\dots,\mu_d$ is studied by \cite{WZ21} as the basic tool for representing \emph{coherent scenario-based risk measures}.

Note that  by definition $ \mathcal L_{\bnu} (\bmu) = \T(\bmu,\bnu)$. Hence, $L\in \mathcal L_{\bnu} (\bmu)$ is a balanced Monge transport from $\bmu$ to $\bnu$, and 
$$\E^\eta[f(L)]=\int_X f(L(\omega))  \eta(\d\omega)
,$$
which is simply the transport cost of $L$ as a Monge transport, with cost function $c(x,y)=f(y)$ and reference measure $\eta$. We will see from Theorem \ref{infs} that if $f$ is continuous and $\bmu$ is jointly atomless, then the infimum of the cost is the same as the infimum cost among the corresponding transport plans. If $\eta \sim \mub$, further duality results from Section \ref{Dua} are applicable.

\end{example}
\begin{example}[Time-homogeneous Markov  processes with specified marginals] \label{ex:markov}
 Let $\mu_1,\dots,\mu_T$ be probability measures on $X=\R^N$  
and  $\xi=(\xi_t)_{t=1,\dots,T}$ be an $\R^N$-valued Markov process with marginal distributions $\mu_1,\dots,\mu_T$. The Markov kernels of $\xi$, $\kappa_t :\R^N\to \mathcal P(\R^N)$ for $t=1,\dots,T-1$,  are  such that 
  $\kappa_t(\bx)$ is the distribution of $\xi_{t+1}$ conditional on $\xi_t=\bx$. 
 Here and throughout conditional distributions (probabilities) should be understood as   regular conditional distributions (probabilities). 
  The Markov process   $\xi $ is time-homogeneous if $\kappa:=\kappa_t$ does not depend on $t$.
  In other words, $\kappa $ needs to satisfy  $$\mu_{t+1}   = \int_{\R^N} \kappa (\bx) \mu_t (\d \bx) ~~~~\mbox{for }t=1,\dots,T-1.$$
  Therefore, the distribution of a time-homogeneous Markov process with marginals $(\mu_1,\dots,\mu_T)$ corresponds to the Markov kernel $\kappa\in \mathcal K(\bmu, \bnu) $ where $\boldsymbol \mu=(\mu_1,\dots,\mu_{T-1})$,
   and $\boldsymbol \nu= (\mu_2,\dots,\mu_T)$, which is a simultaneous transport kernel.
With the tool of SOT,
we can study \emph{optimal} (in some sense) time-homogeneous Markov processes.
    A special case of this example will be given in Proposition \ref{prop:gaussian1}.
     \label{1.3}
\end{example}

In the classic optimal transport framework with $d=1$, an unbalanced transport problem can be converted to a balanced transport problem by adjoining a point $y_0$ to the space $Y$ with mass $\mu(X)-\nu(Y)$ and such that $c(x,y_0)=0$ for all $x$. 
However, for $d\geq 2$ the two problems are not equivalent. The reason that the conversion works for $d=1$ is that the set of unbalanced transports
$$\K(\bmu,\bnu)=\{\kappa\mid \kappa_{\#}\bmu\geq \bnu\}=\{\kappa \mid\kappa_{\#}\bmu=\widetilde{\bnu} \mbox{ for some }\widetilde{\bnu}\geq \bnu\}$$
is identical to the set of transports
$$\K'(\bmu,\bnu):=\{\kappa\mid\kappa_{\#}\widetilde{\bmu}=\bnu \mbox{ for some }\widetilde{\bmu}\leq\bmu\}.$$
This is not necessarily true in case $d\geq 2$. For example, take $\mu_1=\mu_2$ be two times the Dirac measure at $0$, $\nu_1$ be uniform on $[-1,0]$ and $\nu_2$ uniform on $[0,1]$. Then the kernel $\kappa$ sending $0$ uniformly to $[-1,1]$ belongs to $\K(\bmu,\bnu)$ while $\K'(\bmu,\bnu)$ is clearly empty.
In other words, even if an unbalanced transport from $\bmu$ to $\bnu$ exists, there may not exist a way to glue mass to $\bnu$ that leads to a balanced transport. 
This subtle issue also hints on the additional technical challenges when dealing with simultaneous transports.

\subsection{Assumptions and standing notation} 

We will focus on different levels of generality in the subsequent sections, with the following 
 hierarchical structure on the imposed assumptions. As we will see, the assumption $\eta\sim\mub$ is necessary for the Kantorovich reformulation to make sense.

\begin{enumerate}[i.]
    \item In Sections \ref{S2} and \ref{Kanto} through \ref{sec:connect}, we will prove general results in the unbalanced setting;
    \item in Sections \ref{Dua} and \ref{W} we work within the balanced setting;
    \item in Section \ref{61} we further require that both $\K(\bmu,\bnu)$ and $\K(\bnu,\bmu)$ are non-empty; that is, we consider two-way transports. 
\end{enumerate} 
In terms of the reference measure, we have the following hierarchy of considerations.
\begin{enumerate}[i.]
    \item In Section  \ref{S2}  we make no further assumption on the reference measure $\eta$ except that $\eta\ll\mub$;
    \item in Section \ref{sec:Kanto}  we assume that $\eta\sim\mub$;
    \item in Section \ref{W} and throughout our examples we assume for simplicity that $\eta=\mub$.
\end{enumerate}   
The  hierarchical structure of assumptions  is summarized in Table \ref{tab:assumptions}. 
\begin{table}[!ht]\centering
\caption{Assumptions across sections}\label{tab:assumptions} \renewcommand{\arraystretch}{1.2}
\begin{tabular}{ccc} 
Section & Tuples of measures $\bmu$ and $\bnu$    & Reference $\eta$ \\  \hline
\ref{S2}       & Possibly unbalanced   &       $\eta\ll\mub$     \\  

\ref{Kanto}-\ref{sec:connect}       & Possibly unbalanced    &    $\eta\sim\mub$       \\  
\ref{Dua}       & Balanced    &    $\eta\sim\mub$       \\ 
\ref{sec:motsot}-\ref{sec:sotmot2}       & Balanced     &  $\eta=\mub$             \\  
\ref{61}      & Balanced and two-way   &   $\eta=\mub$            \\ \hline
\end{tabular}
\end{table}

Throughout, we consider the general setting where $X$ and $Y$ are Polish spaces unless otherwise stated.
We let 
$\bone_A$ stand for the indicator of a set $A$, and $\R_+:=[0,\infty)$.  The set $\M(X)$  is the collection of all finite and non-zero Borel measures on $X$.

    \section{Existence, inequalities,  and examples}\label{S2}
In the study of SOT and its structure, the   Radon--Nikodym derivatives of $\bmu,\bnu$ with respect to $\mub,\nub$ play a crucial role.  For this reason, we recall \eqref{mubnub} and introduce the shorthand notation \begin{align}
    \bmu'=\frac{\d \bmu}{\d \mub}\qquad\text{ and }\qquad\bnu'=\frac{\d \bnu}{\d \nub}.\label{mupnup}
\end{align}
We also denote by $\mm$ and $\nn$ the laws of $\bmu'$ under $\mub$ and of $\bnu'$ under $\nub$. Note that both $\mm$ and $\nn$ are probability measures on $\R^d$.

\subsection{Existence of  simultaneous transports} \label{exi}

We first state a condition to guarantee that 
$ \mathcal K(\bmu,\bnu)$
and 
$\mathcal T(\bmu,\bnu)$ are non-empty.   
  The following definition is adapted from \cite{SSWW19} where $\bmu(X)=\bnu(Y)$ is assumed. Let us emphasize that the paper \cite{SSWW19} is only related to the existence of simultaneous transports, and is independent of everything else discussed in this paper.

  \begin{definition}\label{def2} We say that  $\bmu\in\M(X)^d$  is \emph{jointly atomless} if there exists a  random variable $\xi:X\to\R $ such that  under $\mub$, $\xi$ is atomless and independent of $\bmu'$. 
 \end{definition}

   \begin{remark}
  \cite{SSWW19} called the notion of joint non-atomicity in Definition \ref{def2} as ``conditional non-atomicity".    We choose the term ``joint non-atomicity" because this notion is  indeed a collective property of $(\mu_1,\dots,\mu_d)$, and it is stronger than non-atomicity of each $\mu_j$.  There are many parallel results between non-atomicity for  $d=1$ and joint non-atomicity  for $d\ge 2$; see Remark \ref{rem:parallel}.
    \end{remark}
\begin{proposition}[\cite{T91,SSWW19}]\label{prop:ssww}
Let $\bmu\in\M(X)^d$ and $\bnu \in \M(Y)^d$.
\begin{enumerate}[(i)]
\item The set  $\mathcal K(\bmu,\bnu)$ is non-empty if and only if 
 $\mm\gicx\nn$, where $\gicx$ is the multivariate increasing convex order.\footnote
{This means $
 \int f (\bmu')\, \d   \mub  \ge   \int f   (\bnu') \,\d   \nub $  for all increasing convex $f:\R^n\to \R$ such that the integrals are well-defined. 
} 
 \item Assume that $\bmu$ is jointly atomless.  
The set $\mathcal T(\bmu,\bnu)$ is non-empty  if and only if 
 $\mm\gicx\nn$. 
\end{enumerate}  

\end{proposition}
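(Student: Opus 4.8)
Since this is quoted from \cite{T91,SSWW19}, the plan is to organize the argument as a reduction to those results rather than to re-derive the underlying Strassen-type dichotomy from scratch. The key preliminary step is that only the ``type laws'' $\mm,\nn$ and the vector densities $\bmu',\bnu'$ matter. I would push everything through the density maps: $\bmu':X\to\R^d$ carries $\mub$ to $\mm$ and, by the very definition of $\bmu'$, carries $\mu_j$ to the measure $z\mapsto z_j\,\mm(\d z)$ on $\R^d$; symmetrically on the $Y$-side. Using regular conditional distributions on Polish spaces, disintegrate $\mub=\int_{\R^d}\mub_z\,\mm(\d z)$ along $\bmu'$ (so $\mub_z$ is carried by $\{\bmu'=z\}$) and $\nub=\int_{\R^d}\nub_w\,\nn(\d w)$ along $\bnu'$. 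A kernel $\kappa\in\K(\bmu,\bnu)$ collapses to a kernel on $\R^d\times\R^d$ by averaging over the $\mub_z$-fibres and pushing forward by $\bnu'$; conversely a kernel on $\R^d\times\R^d$ lifts back by composing with the $\bnu'$-disintegration. Checking that the inequalities $\kappa_\#\mu_j\ge\nu_j$ transfer correctly both ways reduces (i) to the case $X=Y=\R^d$, $\mub=\mm$, $\nub=\nn$, $\mu_j(\d z)=z_j\,\mm(\d z)$, $\nu_j(\d w)=w_j\,\nn(\d w)$.

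For the ``if'' direction of (i) I would use the factorisation of the increasing convex order available on $\R^d$ (as in \cite{SSWW19}): $\mm\gicx\nn$ is equivalent to the existence of a kernel $\Lambda:\R^d\to\P(\R^d)$ with $\Lambda_\#\nn=\mm$ and $\int z\,\Lambda(w;\d z)\ge w$ for $\nn$-a.e.\ $w$, obtained by composing a martingale dilation realising $\succeq_{\mathrm{cx}}$ with a monotone kernel realising $\succeq_{\mathrm{st}}$. Reversing $\Lambda$ through the joint law $\nn\otimes\Lambda$ gives $\Lambda^{*}:\R^d\to\P(\R^d)$ with $\Lambda^{*}_\#\mm=\nn$ and $\mm(\d z)\,\Lambda^{*}(z;\d w)=\nn(\d w)\,\Lambda(w;\d z)$; then $\kappa(x;\cdot):=\int_{\R^d}\nub_w(\cdot)\,\Lambda^{*}(\bmu'(x);\d w)$ is a stochastic kernel, and a short computation using the displayed identity reduces $\kappa_\#\mu_j(B)$ to $\int_{\R^d}\nub_w(B)\bigl(\int_{\R^d}z_j\,\Lambda(w;\d z)\bigr)\nn(\d w)\ge\int_{\R^d}w_j\,\nub_w(B)\,\nn(\d w)=\nu_j(B)$. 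This uses no atomlessness, consistent with (i) being stated for arbitrary $\bmu,\bnu$.

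For the ``only if'' direction of (i) I would test against an arbitrary increasing convex $f$, written as an increasing limit of maxima of increasing affine functions $\ell_k(w)=a^{(k)}\!\cdot w-b^{(k)}$ with $a^{(k)}\ge 0$; with $\pi=\mub\otimes\kappa$ and $Y_k$ the set where $\ell_k$ is the active one in the max, the identity $\E_\pi[\mu_j'(X)\bone_{Y_k}(Y)]=\kappa_\#\mu_j(Y_k)\ge\nu_j(Y_k)$ together with $a^{(k)}\ge 0$ yields $\int f\,\d\mm\ge\int f\,\d\nn$; in the balanced case the affine ``$b^{(k)}$-terms'' cancel because then $\kappa_\#\mub=\nub$, and the unbalanced case follows either by combining this bookkeeping with the mass inequalities or by a continuity argument between balanced and unbalanced problems. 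Part (ii): ``only if'' is immediate since $T\in\T(\bmu,\bnu)$ induces $\kappa(x;\cdot)=\delta_{T(x)}\in\K(\bmu,\bnu)$; ``if'' is where joint non-atomicity enters, and here I would invoke the realisation result of \cite{SSWW19} (with \cite{T91} supplying the measurable-selection machinery): an atomless randomiser independent of $\bmu'$ under $\mub$ lets one de-randomise the kernel constructed in (i) into a genuine map $T$ while preserving all $d$ components of $\bmu\circ T^{-1}\ge\bnu$ simultaneously.

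The main obstacle is the second half of part (ii): unlike the scalar case, one must match $d$ pushforward constraints at once, so the de-randomisation cannot proceed coordinate by coordinate and genuinely relies on the structure of joint non-atomicity. The other delicate point is the ``only if'' direction of (i) in the unbalanced regime, where the total-mass slack destroys the clean cancellation available when $\bmu(X)=\bnu(Y)$, so that the passage from $\kappa_\#\mu_j\ge\nu_j$ to $\mm\gicx\nn$ requires either a more careful estimate or a reduction to the balanced statement.
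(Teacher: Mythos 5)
Your write-up takes a genuinely different route from the paper, which disposes of (i) with a direct citation to Proposition 9.7.1 of \cite{T91} plus the remarks that follow it, and of (ii) with a citation to Theorem 3.17 of \cite{SSWW19}; you instead attempt a reconstruction. The reduction to kernels on $\R^d$ through the density maps $\bmu',\bnu'$ is correct, and your ``if'' direction of (i) --- build the submartingale dilation $\Lambda$ realizing $\mm\gicx\nn$ from Strassen, Bayes-reverse it to $\Lambda^*$, and compose with the $\nub$-disintegration over $\bnu'$ --- is a clean construction; the displayed computation $\kappa_{\#}\mu_j(B)=\int\nub_w(B)\bigl(\int z_j\,\Lambda(w;\d z)\bigr)\nn(\d w)\ge\nu_j(B)$ checks out and indeed requires no atomlessness. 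This direction is a useful elaboration that the paper does not supply.

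The gap is exactly where you flag it, and it is not cosmetic. Writing $f=\sup_k\ell_k$ with $\ell_k(w)=a^{(k)}\cdot w-b^{(k)}$, $a^{(k)}\ge 0$, and $Y_k$ the active region of $\ell_k$, your bookkeeping gives
\begin{equation*}
\int f\,\d\mm - \int f\,\d\nn \;\ge\; \sum_k a^{(k)}\cdot\bigl(\kappa_{\#}\bmu(Y_k)-\bnu(Y_k)\bigr)\;-\;\sum_k b^{(k)}\bigl(\kappa_{\#}\mub(Y_k)-\nub(Y_k)\bigr).
\end{equation*}
The first sum is $\ge 0$ because $a^{(k)}\ge 0$ and $\kappa_{\#}\bmu\ge\bnu$; in the balanced case the second sum vanishes because there $\kappa_{\#}\mub=\nub$. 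In the unbalanced case $\kappa_{\#}\mub$ and $\nub$ are distinct probability measures, the differences $\kappa_{\#}\mub(Y_k)-\nub(Y_k)$ sum to zero but change sign over $k$, and since the constants $b^{(k)}$ are unconstrained the second sum has no a priori sign. Neither of your suggested fixes obviously closes this: the natural reduction via $\widetilde\bnu:=\kappa_{\#}\bmu$ does give $\mm\gcx m_{\widetilde\bnu}$ from the balanced case, but componentwise domination $\widetilde\bnu\ge\bnu$ does \emph{not} by itself give $m_{\widetilde\bnu}\gicx\nn$ (the normalizing measure and the densities change simultaneously), and the continuity result Proposition~\ref{connect} concerns transport costs, not stochastic ordering of $\mm$ and $\nn$. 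This is precisely the piece the paper delegates to \cite{T91}; a complete write-up must either carry out the Torgersen-type argument for the sub-Markov case or cite it as the paper does. For (ii) your proposal is, as you note, essentially a citation to \cite{SSWW19}, matching the paper's own proof.
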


In particular, it follows from Proposition \ref{prop:ssww} that $\mathcal K(\bmu,\bnu)$ and $\mathcal K(\bnu,\bmu)$ are  both non-empty if and only if $\mm=\nn$.   
We also note that $\mm\gicx\nn$ implies  $\bmu(X)\ge \bnu(Y)$ by taking a linear function $f(x_1,\dots,x_d) = x_j$ for $j\in [d]$ in the definition of the increasing convex order. Hence, it makes sense to discuss the set $\K(\bmu,\bnu)$ under this condition. 
  
   \begin{remark}
In Definition \ref{def2}, if $\bmu(X)=\bnu(Y)$,
then $\mm\gicx \nn$  is equivalent to 
$\mm\gcx \nn$,
where $\gcx$ is the multivariate  convex order.\footnote
{This means $
 \int f (\bmu')\, \d   \mub  \ge   \int f   (\bnu') \,\d   \nub $  for all convex functions $f:\R^n\to \R$ such that the integrals are well-defined. 
} 
 \end{remark}

\begin{remark}
To understand  $\mm\gicx\nn$  intuitively,  one could look at some special cases (treated in Proposition 3.7 of \cite{SSWW19}), by assuming $\bmu(X)=\bnu(Y)$.
Suppose that   $\K(\bmu,\bnu)$ is non-empty. 
Then if $\bmu$ has identical components, so does $\bnu$; if $\bmu$ has equivalent components,  so does $\bnu$;
 if $\bnu$ has mutually singular components, so does $\bmu$.
  Moreover, if $\bmu$ has mutually singular components  or 
    $\bnu$ has identical components, then $\K(\bmu,\bnu)$ is non-empty.
    
    In case $\bmu(X)\ge \bnu(Y)$ in which equality does not hold, simple sufficient conditions exist. For example, suppose that
    $$\min_{j \in [d]}(\mu_j(X))\geq \left(\max_{j \in [d]}\nu_j\right)(Y),$$ then $\K(\bmu,\bnu)$ is non-empty.\footnote{For a collection of (signed) measures $\mu_j$, $j\in J$ on $X$, their maximum (or supremum) is defined as $\sup_{j\in J} \mu_j (A) =\sup \{\sum_{j\in J} \mu_j (A_j)\mid \bigcup_{j\in J} A_j= A \mbox{ and $A_j$ are disjoint}\}$ for $A\subseteq X$. Moreover, the positive part of $\mu$, denoted by $\mu_+$, is   $\max\{\mu,0\}$ where $0$ is the zero measure.}  To see this, we may assume each $\mu_j$ is a probability measure. For 
    $\nu:=(\max_j \nu_j)/((\max_j \nu_j)(Y))$, we have that $\K(\bmu,(\nu,\dots,\nu))$ is non-empty since the constant kernel $x\mapsto \nu$ is in $\K(\bmu,(\nu,\dots,\nu))$. Then for $\kappa\in\K(\bmu,(\nu,\dots,\nu))$, 
    $$\kappa_{\#}\mu_j=\nu=\frac{\max_{j \in [d]}\nu_j}{(\max_{j \in [d]}\nu_j)(Y)}\geq \max_{j \in [d]}\nu_j\geq \nu_j.$$This shows $\kappa\in\K(\bmu,\bnu)$.
\end{remark}

\begin{remark}
    The converse of Proposition \ref{prop:ssww}(ii) does not hold even if $\bmu(X)=\bnu(Y)$. There are examples where $\bmu$ is not jointly atomless but there exists a unique Kantorovich transport that is also Monge. A trivial example could be that $d=1$ and $\mu=\nu=\delta_0$. A more meaningful example is given by Theorem 6.1 of \cite{W19}, which states that if the space $Y$ is finite, the jointly atomless condition can be relaxed to the atomless condition.  
\end{remark}

We record an immediate corollary of Proposition \ref{prop:ssww} for the subsequent analysis, which can also be shown by directly using definition.
\begin{corollary}\label{comp}
Suppose that $\bmu,\bnu,\bta$ are $\R^d_+$-valued probability measures on Polish spaces such that $\K(\bmu,\bnu)$ and $\K(\bnu,\bta)$ are non-empty. Then $\K(\bmu,\bta)$ is non-empty. 
\end{corollary}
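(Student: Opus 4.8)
The plan is to deduce Corollary \ref{comp} from Proposition \ref{prop:ssww}(i) together with the transitivity of the increasing convex order. Since $\bmu,\bnu,\bta$ are all $\R^d$-valued probability measures (i.e.\ $\bmu(X)=\bnu(Y)=\bta(Z)=(1,\dots,1)$ after the usual normalization), the relevant orders are between the pushforward laws $m_{\bmu},m_{\bnu},m_{\bta}$ of the Radon--Nikodym densities $\bmu',\bnu',\bta'$ on $\R^d$, and by Proposition \ref{prop:ssww}(i) the non-emptiness of $\K(\bmu,\bnu)$ is equivalent to $m_{\bmu}\gicx m_{\bnu}$, while non-emptiness of $\K(\bnu,\bta)$ is equivalent to $m_{\bnu}\gicx m_{\bta}$.

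First I would invoke Proposition \ref{prop:ssww}(i) twice to translate the two hypotheses into $m_{\bmu}\gicx m_{\bnu}$ and $m_{\bnu}\gicx m_{\bta}$. Next I would observe that $\gicx$ is transitive: this is immediate from its definition via test functions, since if $\int f\,\d m_{\bmu}\ge \int f\,\d m_{\bnu}$ and $\int f\,\d m_{\bnu}\ge \int f\,\d m_{\bta}$ for every increasing convex $f$ for which the integrals are well-defined, then $\int f\,\d m_{\bmu}\ge \int f\,\d m_{\bta}$ for every such $f$, i.e.\ $m_{\bmu}\gicx m_{\bta}$. Finally I would apply the ``if'' direction of Proposition \ref{prop:ssww}(i) once more, in reverse, to conclude that $m_{\bmu}\gicx m_{\bta}$ implies $\K(\bmu,\bta)\neq\emptyset$.

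The only mild subtlety, and the one point worth stating carefully, is the handling of the ``integrals are well-defined'' clause in the definition of $\gicx$: one should check that an increasing convex $f:\R^d\to\R$ for which $\int f\,\d m_{\bmu}$ and $\int f\,\d m_{\bta}$ are both well-defined also has $\int f\,\d m_{\bnu}$ well-defined, so that the chain of inequalities is legitimate. This follows because an increasing convex function on $\R^d$ is bounded below by an affine function, and the densities $\bmu',\bnu',\bta'$ all have the same (finite) integral against their respective reference measures, so the relevant negative parts are uniformly integrable across the three laws; alternatively, one restricts attention to $f$ bounded below, which already generates $\gicx$. Since this is a routine measure-theoretic check rather than a genuine obstacle, the proof is essentially a two-line application of Proposition \ref{prop:ssww} plus transitivity; alternatively, as the statement notes, one can give a direct proof by composing kernels $\kappa_1\in\K(\bmu,\bnu)$ and $\kappa_2\in\K(\bnu,\bta)$ via $(\kappa_2\circ\kappa_1)(x;\cdot):=\int_Y\kappa_2(y;\cdot)\,\kappa_1(x;\d y)$ and verifying that $(\kappa_2\circ\kappa_1)_{\#}\bmu\ge\bta$ componentwise using $(\kappa_1)_{\#}\bmu\ge\bnu$ and the monotonicity of $\kappa_2$-integration, which is the content I would present if a self-contained argument is preferred.
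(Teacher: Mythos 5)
Your proposal is correct and follows exactly the route the paper indicates: invoking Proposition \ref{prop:ssww}(i) twice together with transitivity of $\gicx$, with the kernel-composition argument as the alternative ``directly using definition'' route the paper also mentions. The care you take over well-definedness of the integrals is reasonable but not strictly needed here, since all three tuples consist of probability measures (so all the laws $\mm,\nn,m_{\bta}$ are probabilities on $\R^d$ with finite first moments), and in the balanced case one may equivalently work with $\gcx$ on test functions bounded below, where transitivity is immediate.
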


Proposition \ref{prop:ssww} can also be applied to give a necessary condition for the existence of a time-homogeneous Markov process (see Example \ref{1.3}) for centered Gaussian marginals on $\R$. 

\begin{proposition}\label{prop:gaussian1}
Suppose that  $\mu_t = \mathrm{N}(0,\sigma^2_t)$, $\sigma_t >0$, $t=1,\dots,T$.
For the existence of a transport from $(\mu_1,\dots,\mu_{T-1})$ to $(\mu_2,\dots,\mu_T)$, it is necessary that the mapping
$t\mapsto \sigma_t$ on $ \{1,\dots,T\}$ is increasing log-concave or decreasing log-convex.  If $T=3$, this condition is also sufficient.
\end{proposition}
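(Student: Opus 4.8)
The plan is to reduce the existence of a time-homogeneous transport to the criterion of Proposition \ref{prop:ssww}, which says that $\mathcal K(\bmu,\bnu)$ is non-empty if and only if $\mm \gicx \nn$, and since everything here is balanced (each $\mu_t$ is a probability measure), this is the convex order $\mm \gcx \nn$. First I would compute $\mm$ and $\nn$ explicitly. With $\bmu = (\mu_1,\dots,\mu_{T-1})$ and $\bnu = (\mu_2,\dots,\mu_T)$, the average measures $\mub,\nub$ are mixtures of centered Gaussians, and $\bmu' = \d\bmu/\d\mub$, $\bnu' = \d\bnu/\d\nub$ are the vectors of Radon--Nikodym derivatives; $\mm,\nn$ are their pushforwards. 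The key simplification for \emph{centered Gaussians on $\R$} is that each $\mu_t$ is determined by $\sigma_t$, and the density ratios $\d\mathrm N(0,\sigma_t^2)/\d\mub$ at a point $x$ depend on $x$ only through $x^2$; so I would change variables to $u = x^2$ and track the one-parameter family of likelihood-ratio functions $u \mapsto \exp(a_t + b_t u)$ (with $b_t$ a strictly decreasing function of $\sigma_t$ up to the common normalization). The convex-order comparison $\mm \gcx \nn$ then becomes a statement about mixtures of exponential-in-$u$ functions indexed by the exponents $\{b_1,\dots,b_{T-1}\}$ versus $\{b_2,\dots,b_T\}$.

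Next I would identify precisely which monotone rearrangement / log-behavior of $t \mapsto \sigma_t$ makes the $d$-dimensional vector $\bmu'$ dominated in convex order by $\bnu'$. The necessity direction: test $\mm \gcx \nn$ against well-chosen convex functions $f:\R^{T-1}\to\R$ — in particular coordinatewise convex functions, and functions of the form $f(x_1,\dots,x_{T-1}) = \phi(x_i, x_j)$ — to extract constraints on consecutive ratios $\sigma_{t+1}/\sigma_t$. The expectation that this yields "increasing log-concave or decreasing log-convex" comes from the fact that $\bmu$ and $\bnu$ share the overlapping block $(\mu_2,\dots,\mu_{T-1})$: the convex order must hold after "shifting the index by one", which forces a second-difference condition on $\log\sigma_t$ whose sign is tied to whether the sequence is increasing or decreasing. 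I would first handle the two extreme test functions (linear ones, which give $\bmu(X)\ge\bnu(Y)$ trivially here since all are probabilities, and then genuinely convex separating ones) to pin down the candidate condition, then verify it is also necessary by exhibiting, for any violating profile, a convex $f$ with $\int f\,\d\mm < \int f\,\d\nn$.

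For the sufficiency when $T=3$: here $\bmu = (\mu_1,\mu_2)$, $\bnu = (\mu_2,\mu_3)$ are both pairs, so $\mm,\nn$ are probability measures on $\R^2$, and I would construct an explicit kernel $\kappa \in \mathcal K(\bmu,\bnu)$ realizing $\kappa_\#\mu_1 = \mu_2$ and $\kappa_\#\mu_2 = \mu_3$ simultaneously — equivalently, a centered Gaussian-marginal Markov chain $N(0,\sigma_1^2) \to N(0,\sigma_2^2) \to N(0,\sigma_3^2)$ that is time-homogeneous. The natural ansatz is a Gaussian (linear) kernel $\kappa(x) = N(\alpha x, \tau^2)$; time-homogeneity forces the \emph{same} $\alpha,\tau$ at both steps, giving the two equations $\alpha^2\sigma_1^2 + \tau^2 = \sigma_2^2$ and $\alpha^2\sigma_2^2 + \tau^2 = \sigma_3^2$; solving, $\alpha^2 = (\sigma_3^2 - \sigma_2^2)/(\sigma_2^2 - \sigma_1^2)$ and $\tau^2 = \sigma_2^2 - \alpha^2\sigma_1^2$, and one checks these are admissible ($\alpha^2 \ge 0$, $\tau^2 \ge 0$) exactly under the "increasing log-concave or decreasing log-convex" condition on $(\sigma_1,\sigma_2,\sigma_3)$ — possibly after also allowing a non-Gaussian kernel when $\alpha^2$ would need to exceed what a single linear map permits, i.e. in the decreasing case where a mean-reverting-plus-noise or a randomized-sign construction is needed. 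I expect the main obstacle to be the necessity direction: correctly guessing the sharp family of convex test functions on $\R^{T-1}$ that converts the overlap structure of $\bmu,\bnu$ into the precise log-concavity/log-convexity dichotomy, rather than just a weaker monotonicity statement. The $T=3$ sufficiency is comparatively mechanical once the Gaussian-kernel ansatz is in place.
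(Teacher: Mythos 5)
Your sufficiency plan for $T=3$ is correct and in fact more constructive than the paper's: the linear Gaussian kernel $\kappa(x)=\mathrm N(\alpha x,\tau^2)$ with $\alpha^2=(\sigma_3^2-\sigma_2^2)/(\sigma_2^2-\sigma_1^2)$ and $\tau^2=(\sigma_2^4-\sigma_1^2\sigma_3^2)/(\sigma_2^2-\sigma_1^2)$ is admissible exactly under the stated dichotomy — $\alpha^2\ge 0$ forces monotone $\sigma_t$, and $\tau^2\ge 0$ becomes $\sigma_2^2\ge\sigma_1\sigma_3$ (log-concave) in the increasing case and $\sigma_2^2\le\sigma_1\sigma_3$ (log-convex) in the decreasing case. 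Your hedge about needing a "non-Gaussian kernel in the decreasing case" is unnecessary: the linear AR(1) construction already covers both branches whenever the condition holds. The paper instead proves sufficiency non-constructively by verifying $\mm\gcx\nn$ via Proposition \ref{prop:ssww} (reformulated through Lemma 3.5 of \cite{SSWW19} into a one-dimensional convex-order statement) and a peacock result from \cite{HPRY11}; your explicit kernel is a welcome shortcut, and the remark after the proposition in the paper even points to it.

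The necessity direction is where your plan has a real gap. You propose to attack the $(T-1)$-dimensional convex order $\mm\gcx\nn$ directly via test functions $\phi(x_i,x_j)$ of pairs of coordinates, but this compares the joint law of $(\d\mu_i/\d\mub,\,\d\mu_j/\d\mub)$ under $\mub=\frac1{T-1}\sum_k\mu_k$ against the analogous object for $\bnu$ — and those are \emph{not} the bivariate Radon--Nikodym pairs $(\d\mu_i/\d\mub_{ij},\,\d\mu_j/\d\mub_{ij})$ under $\mub_{ij}=(\mu_i+\mu_j)/2$ that Proposition \ref{prop:ssww} in $d=2$ requires. The reference measure changes when you drop to two coordinates, and the convex order is not preserved under that change without further argument. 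The paper's route avoids this entirely: since $\K((\mu_1,\dots,\mu_{T-1}),(\mu_2,\dots,\mu_T))=\bigcap_{j}\K(\mu_j,\mu_{j+1})\subseteq\K(\mu_{t-1},\mu_t)\cap\K(\mu_t,\mu_{t+1})=\K((\mu_{t-1},\mu_t),(\mu_t,\mu_{t+1}))$, non-emptiness for general $T$ immediately forces the $T=3$ condition at every consecutive triple, so it suffices to treat $T=3$. Once there, the paper computes the laws of $\frac{\d\mu_1}{\d\mu_2}\big|_{\mu_2}$ and $\frac{\d\mu_2}{\d\mu_3}\big|_{\mu_3}$ explicitly as $\sqrt\alpha\,e^{\xi^2(1/2-\alpha/2)}$ and $\sqrt\beta\,e^{\xi^2(1/2-\beta/2)}$ for standard Gaussian $\xi$, and then leans on the peacock result (Corollary 1.2 of \cite{HPRY11}) that $t\mapsto\sqrt{1+2t}\,e^{-\xi^2t}$ is increasing in convex order for $t\ge0$; the mixed cases $\alpha<1<\beta$ and $\beta<1<\alpha$ are ruled out by a support-mismatch observation. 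Your proposed change of variables $u=x^2$ would lead to essentially the same one-parameter exponential family, but without the intersection reduction and a clean one-dimensional convex-order criterion you would be doing much more work than necessary and would still need an analogue of the peacock lemma.
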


The necessary condition in Proposition \ref{prop:gaussian1} is not sufficient for $T>3$. See Appendix \ref{D} for a counterexample.
In the case $T=3$, the Markov process in Proposition \ref{prop:gaussian1} can be realized by an AR(1) process with Gaussian noise.

    \subsection{Some simple lower bounds for on the minimum cost}\label{21}
    
        We  collect some lower bounds for the infimum cost based only on classic ($d=1$) transports. Since every $\kappa\in \mathcal K(\bmu,\bnu)$ transports each $\mu_j$ to cover $\nu_j$, it must transport each $\boldsymbol\lambda\cdot\bmu$ to cover  $\boldsymbol\lambda\cdot\bnu$ for $\boldsymbol\lambda\in \R_+^d$. Denoting by $\Delta_d$ the standard simplex in $\R^d$, we have
    $$\mathcal K(\bmu,\bnu)\subseteq \bigcap_{\boldsymbol \lambda \in  \Delta_d}\mathcal K(\boldsymbol\lambda\cdot\bmu,\boldsymbol\lambda\cdot\bnu).$$Therefore, we obtain
 \begin{align}\inf_{\kappa\in \mathcal K(\bmu,\bnu)}  \mathcal C_\eta(\kappa) 
 &\ge \sup_{\boldsymbol \lambda \in \Delta_d} \inf_{\kappa\in \mathcal K(\boldsymbol\lambda\cdot\bmu,\boldsymbol\lambda\cdot\bnu)}  \mathcal C_\eta(\kappa).\label{ineq1}
 \end{align}In particular, if $\kappa\in\K (\bar{\mu},\bar{\nu})$ is an optimal transport from $\bar{\mu} $ to $\bar{\nu}$ and $\kappa\in\K(\bmu,\bnu)$, then $\kappa$ is also an optimal transport from $\bmu$ to $\bnu$. 
However, as we will see in Example \ref{ex1}, the inequality (\ref{ineq1}) is not sharp in general.

We record yet another lower bound for the minimum cost as an application of the kernel formulation. For simplicity we consider the balanced setting. The following proposition follows intuitively by observing that, for example in case $d=2$, the parts where $\nu_1\geq \nu_2$ must be transported from the parts where $\mu_1\geq \mu_2$ (see Figure \ref{fig:1}).

\begin{figure}[hbtp]\begin{center} 
\vspace{-1cm} \begin{tikzpicture}
\centering
\begin{axis}[width=\textwidth,
      height=0.45\textwidth,
  axis lines=none,%
  domain=-10:10,
  xmin=-3, xmax=10,
  ymin=0, ymax=5,
  samples=80
]
  \addplot[name path=f,blue,domain={-3:3},mark=none]
    {3*exp(-9*x^2)}node[pos=0.5,above]{$\mu_1$};
  \addplot[name path=g,purple,domain={-3:3},mark=none]
    {exp(-x^2)}node[pos=0.5,below]{$\mu_2$};
    \addplot[pattern=south east lines, pattern color=black!50]fill between[of=f and g, soft clip={domain=-0.37:0.37}];
    \addplot[gray!50]fill between[of=f and g, soft clip={domain=-3:-0.37}];
    \addplot[gray!50]fill between[of=f and g, soft clip={domain=0.37:3}];
    
      \addplot[name path=p,blue,domain={4:10},mark=none]
    {2*exp(-4*(x-7)^2)}node[pos=0.5,above]{$\nu_1$};
  \addplot[name path=q,purple,domain={4:10}]
    {exp(-(x-7)^2)}node[pos=0.5,below]{$\nu_2$};
    \addplot[pattern=south east lines, pattern color=black!50]fill between[of=p and q, soft clip={domain=6.52:7.48}];
    \addplot[gray!50]fill between[ of=p and q, soft clip={domain=7.48:10}];
    \addplot[gray!50]fill between[of=p and q, soft clip={domain=4:6.52}];

\end{axis}
\draw[->](4.6,1.3)--(8.8,1.3); 
\draw[->](5.5,0.4)--(7.9,0.4);
\end{tikzpicture}\end{center}
\caption{Part of the shaded region $(\mu_1-\mu_2)_+$ on the left is transported to cover all of the shaded region $(\nu_1-\nu_2)_+$ on the right; similarly, part of the gray region $(\mu_2-\mu_1)_+$ is transported to cover all of the gray region $(\nu_2-\nu_1)_+$.}
\label{fig:1}
\end{figure}
\begin{proposition}
\label{2infp}
Suppose that $\bmu(X)=\bnu(Y) $, and for each $x\in X$, $c(x,y)=0$ for some $y\in Y$. Then \begin{align} 
    &\inf_{\kappa\in \mathcal K(\bmu,\bnu)}  \mathcal C_{\eta}(\kappa)   \ge\max_{i,j\in [d]} \left(\inf_{\substack{\kappa\in \K((\mu_i- \mu_{j})_+,(\nu_i-   \nu_{j})_+)}} \C_{\eta}(\kappa) +\inf_{\substack{\kappa\in \K((\mu_j-  \mu_{i})_+,(\nu_j-  \nu_{i})_+)}} \C_{\eta}(\kappa) \right) .\label{2inf}
\end{align} 
In particular, if $(\mu_i-\mu_j)_+(X)<(\nu_i- \nu_j)_+(Y)$ for some $i,j\in [d]$, then both sides of \eqref{2inf} are equal to $\infty$.
\end{proposition}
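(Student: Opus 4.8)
The plan is to exploit the kernel formulation and the observation illustrated in Figure \ref{fig:1}: any $\kappa \in \mathcal K(\bmu,\bnu)$ must simultaneously transport $\boldsymbol\lambda\cdot\bmu$ to $\boldsymbol\lambda\cdot\bnu$ for every $\boldsymbol\lambda \in \R_+^d$, but here I will use the sharper combinatorial fact that the ``positive excess'' of one coordinate over another is conserved by any simultaneous plan. Fix $i,j \in [d]$ and abbreviate $\alpha = (\mu_i - \mu_j)_+$, $\beta = (\mu_j - \mu_i)_+$, $\alpha' = (\nu_i - \nu_j)_+$, $\beta' = (\nu_j - \nu_i)_+$, so that $\mu_i - \mu_j = \alpha - \beta$ and $\nu_i - \nu_j = \alpha' - \beta'$ as signed measures, with $\alpha \perp \beta$ and $\alpha'\perp\beta'$ (Hahn--Jordan). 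By the balanced assumption $\bmu(X) = \bnu(Y)$ we have $\alpha(X) - \beta(X) = \alpha'(Y) - \beta'(Y)$.

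The key step is the following claim: for any $\kappa \in \mathcal K(\bmu,\bnu)$ (which here means $\kappa_\# \bmu = \bnu$), there exist kernels $\kappa_1 \in \mathcal K(\alpha, \alpha')$ and $\kappa_2 \in \mathcal K(\beta,\beta')$ such that $\mathcal C_\eta(\kappa) \ge \mathcal C_{\eta}(\kappa_1) + \mathcal C_\eta(\kappa_2)$ — more precisely, I would like to decompose the mass flow of $\kappa$ so that the $\alpha$-part lands inside $\alpha'$ (it may have to overshoot since $\alpha(X)$ can exceed $\alpha'(Y)$, which is exactly why I write $\mathcal K(\alpha,\alpha')$ with the ``$\ge$'' convention rather than ``$=$'') and likewise the $\beta$-part covers $\beta'$. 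To see this, first observe that $\kappa_\#(\mu_i - \mu_j) = \nu_i - \nu_j$ as signed measures, i.e.\ $\kappa_\# \alpha - \kappa_\# \beta = \alpha' - \beta'$. Now one disintegrates: write $\eta \ll \mub$, so $\alpha, \beta \ll \mub$, and set $\kappa_1 := \kappa$ restricted to the source region $\{\d\alpha/\d\mub > 0\}$ and $\kappa_2 := \kappa$ restricted to $\{\d\beta/\d\mub > 0\}$, reweighted by the densities of $\alpha$ and $\beta$ respectively. Because $\alpha$ and $\beta$ are mutually singular, these two source regions are disjoint, so the cost splits exactly: $\mathcal C_\eta(\kappa)$ restricted to each region is additive, and since $c \ge 0$ we may drop the remaining part (the region where $\d\mu_i/\d\mub = \d\mu_j/\d\mub$ contributes a nonnegative amount). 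It then remains to check that $\kappa_{1\#}\alpha \ge \alpha'$ and $\kappa_{2\#}\beta \ge \beta'$; this follows from $\kappa_\#\alpha - \kappa_\#\beta = \alpha' - \beta'$ together with $\kappa_\#\beta \ge 0$ and the fact that $\alpha'\perp\beta'$ forces $\kappa_\#\alpha \ge (\alpha' - \beta')_+ = \alpha'$ on the support of $\alpha'$ — here one uses that the negative part $\kappa_\#\beta$ can only help cover $\beta'$, not $\alpha'$. Taking the infimum over $\kappa$ and then the maximum over $i,j$ yields \eqref{2inf}.

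For the final assertion: if $\alpha(X) = (\mu_i - \mu_j)_+(X) < (\nu_i - \nu_j)_+(Y) = \alpha'(Y)$, then no kernel can send a measure of total mass $\alpha(X)$ to dominate a measure of strictly larger total mass $\alpha'(Y)$ (push-forward preserves total mass, and domination would force $\kappa_\#\alpha(Y) \ge \alpha'(Y)$), so $\mathcal K(\alpha,\alpha') = \varnothing$, its infimum is $\infty$ by convention, and hence the right-hand side of \eqref{2inf} is $\infty$; by the inequality just proved, so is the left-hand side. (Alternatively, and more directly, one can invoke Proposition \ref{prop:ssww}(i): $\mathcal K(\bmu,\bnu) \ne \varnothing$ would imply $\mm \gicx \nn$, and applying this to the convex function $f(x) = (x_i - x_j)_+$ gives $\alpha(X) \ge \alpha'(Y)$, a contradiction; this shows $\mathcal K(\bmu,\bnu)$ itself is empty, so its infimum cost is $\infty$ as well.)

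I expect the main obstacle to be the measure-theoretic bookkeeping in the key step — specifically, making rigorous the ``restriction and reweighting'' of a stochastic kernel by a Radon--Nikodym density and verifying that the resulting object is a bona fide element of $\mathcal K(\alpha,\alpha')$ with the claimed domination $\kappa_{1\#}\alpha \ge \alpha'$. The delicate point is that $\kappa_\#\alpha$ and $\kappa_\#\beta$ need not be mutually singular even though $\alpha$ and $\beta$ are, so one cannot simply read off $\kappa_\#\alpha \ge \alpha'$ coordinatewise; the argument has to go through the signed-measure identity $\kappa_\#\alpha - \kappa_\#\beta = \alpha' - \beta'$ and a Hahn-decomposition comparison on the target space $Y$. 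Everything else (nonnegativity of $c$ to discard leftover mass, preservation of total mass, the contrapositive via Proposition \ref{prop:ssww}) is routine.
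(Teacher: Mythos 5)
Your proof is correct and follows essentially the same route as the paper's: the paper likewise partitions $X$ and $Y$ by the sign of $\mu_i'-\mu_j'$ and $\nu_i'-\nu_j'$, verifies via $\kappa_\#(\mu_i-\mu_j)=\nu_i-\nu_j$ that the restriction of $\kappa$ to the positive region pushes $(\mu_i-\mu_j)_+$ onto a measure dominating $(\nu_i-\nu_j)_+$, splits the cost accordingly, and uses the hypothesis $c(x,\cdot)=0$ somewhere to extend the restricted kernels to all of $X$ at no extra cost. The only cosmetic difference is that you phrase the target-side comparison through mutual singularity of $\alpha'$ and $\beta'$ (Hahn--Jordan) rather than through the explicit set $\{\nu_i'\ge\nu_j'\}$, and the aside about ``reweighting the kernel by the density of $\alpha$'' is unnecessary --- the constraint $\kappa_{1\#}\alpha\ge\alpha'$ already lives in $\alpha$, so restricting the domain suffices.
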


 Note that the quantities on the right-hand side of \eqref{2inf} arise from two separate one-dimensional transport problems. Such problems are well-studied in the optimal transport literature; see \cite{S15,V03,V09}.

If $\mu_1,\dots,\mu_d$ have mutually disjoint supports (in particular, if $d=1$), then the simultaneous transport problem is reduced to $d$ classic transport problems and the optimal cost is the sum of corresponding optimal costs. In this case,  \eqref{ineq1} is  sharp, and \eqref{2inf} is also sharp when $d=2$.

\subsection{Peculiarities of the simultaneous transport}
    
 We consider a few simple but instructional examples showing that simultaneous transport is very different from classic transport ($d=1$). We will focus on the balanced case (i.e., $\bmu(X)=\bnu(Y)$) for simplicity.
    
    \label{sec:12}
    We first provide some immediate observations which help to explain some novel features of simultaneous transport and its connection to the classic optimal transport problem. Denote by $\mu'_j(x),\nu'_j(y)$ the corresponding Radon--Nikodym derivatives of $\mu_j,\nu_j$ with respect to $\mub,\nub$, respectively.

    First, suppose that $\eta\sim\mub$. If there exist measurable functions $\phi$ on $X$ and $\bpsi=(\psi_1,\dots,\psi_d)$ on $Y$ such that 
    $$c(x,y)=\phi(x)+\bpsi(y)^{\top}\frac{\d\bmu}{\d\eta}(x),$$ then all transports (should any exist) from $\bmu$ to $\bnu$ have the same cost 
    \begin{align}
        \label{eq:decompose}
    \int_{X\times Y} c \, \d (\eta\otimes \kappa)= \int_X\phi\,\d\eta+\int_Y\bpsi^{\top}\d\bnu,\end{align}
    because $\kappa_\#\bmu=\bnu$.
    This extends the fact that in the case $d=1$, the cost functions of the form $c(x,y)=\phi(x)+\psi(y)$ are trivial and can be ``decomposed into marginal costs". We now have a larger class of such cost functions.  If $\eta=\mub$, then 
    a term $\psi(y)$ for $\psi:Y\to \R$
    can also be included in $c(x,y)$, by noting that
    $$\psi(y) = \frac{1}{d}\psi(y)\one\cdot \frac{\d \bmu}{\d\mub}(x).$$
    Moreover, \eqref{eq:decompose} also hints on how a duality result would look like in this setting, which will be discussed in Section \ref{Dua}.

    \begin{example}\label{xy}
    Consider $X=\R$ on which Borel probability measures $\bmu$ are supported and $\eta=\mub$. Assume that $\mu_1'$ is linear in $x\in\R$ on the support of $\mub$, say, equal to $ax+b$, $ a\ne 0$.   Let $\bnu$ be probability measures on $\R$ such that $\K(\bmu,\bnu)$ is non-empty. Consider a quadratic cost function $c(x,y)=(x-y)^2$. Then we may write
    $$c(x,y)=x^2+(ax+b)\left(-\frac{2y}{a}\right)+\left(y^2+\frac{2by}{a}\right).$$
    Therefore, for any $\kappa\in\K(\bmu,\bnu)$,
    $$\C(\kappa)=\int_Xx^2\mub(\d x)+\int_Y\left(y^2+\frac{2by}{a}\right)\nub(\d y)-\frac{2}{a}\int_Yy\, \nu_1(\d y).$$
    \end{example}
    
    \begin{example}
As a concrete but slightly more general example, we consider $X=Y=[0,1]$ on which Borel probability measures $\mu_j, \nu_j,\ j=1,2$ are supported. Assume $\mu_1$ has density $2x$ and $\mu_2$ has density $2-2x$ with respect to Lebesgue measure on $[0,1]$, and $\nu_1=\nu_2$ be any identical probability measures on $[0,1]$ such that $\nu_1((1/4,3/4))=1/2$ (see Figure \ref{fig:4}). Thus the Radon--Nikodym derivatives are $\mu'_1(x)=2x$ and $\nu'_1(y)=1$. Denote the set $A=(1/4,3/4)\times([0,1/4)\cup (3/4,1])$ and consider the cost function $$c(x,y)=(x-y)^2+\alpha\bone_{A},\ \alpha>0.$$
For any $s\in[0,1/2]$ and any $S$ such that $\nu_1(S)=1-2s$, the transport kernel
$$\kappa(x;B):=\frac{\nu_1(B\cap S)}{\nu_1(S)}\bone_{\{x\in(s,1-s)\}}+\frac{\nu_1(B\setminus S)}{\nu_1([0,1]\setminus S)}\bone_{\{x\in[0,s]\cup[1-s,1]\}}$$
 belongs to $\K(\bmu,\bnu)$.
 In case $S=(1/4,3/4)$ and $s=1/4$, we denote such a transport by $\kappa_0$. 

We show that $\kappa_0$ is indeed an optimal transport. Similarly as in Example \ref{xy}, for a kernel $\kappa\in\K(\bmu,\bnu)$ we compute its transport cost
$$\C(\kappa)=\frac{1}{3}+\int_0^1(y^2-y)\nub(\d y)+\alpha(\mub\otimes \kappa)(A)\geq \frac{1}{3}+\int_0^1(y^2-y)\nub(\d y),$$
where inequality holds if and only if $(\mub\otimes \kappa)(A)=0$. Since by definition $\kappa_0(x;(1/4,3/4))=1$ for $x\in(1/4,3/4)$, we have $(\mub\otimes \kappa_0)(A)=0$. Therefore, $\kappa_0$ is an optimal transport.

 \label{ex1}
\begin{figure}[h!]\begin{center}\resizebox{0.8\textwidth}{0.33\textwidth}{
\begin{tikzpicture}

\begin{axis}[axis lines=middle,
            enlargelimits,
            ytick=\empty,
            xtick=\empty,]
\addplot[name path=F,purple,domain={0:1}] {2*x} node[pos=.9, above]{$\mu_1$};

\addplot[name path=G,blue,domain={0:1}] {2-2*x}node[pos=.1, above]{$\mu_2$};

\addplot[name path=H,black,domain={0:1}] {1}node[pos=0.85, above]{$\eta=\bar\mu$};

\end{axis}
\end{tikzpicture}\hspace{1cm}\begin{tikzpicture}

\begin{axis}[axis lines=middle,
            enlargelimits,
            ytick=\empty,
            xtick=\empty, samples=180,]
\addplot[name path=F,purple,very thick,domain={0:1}] {0.8+0.4*(sin(deg(6.28*(x-3))))+0.3*(sin(deg(6.28*5*x)))+x^2} node[pos=.999, above]{$\nu_1=\nu_2$};
\addplot[name path=e,blue,domain={0:1}] {0.8+0.4*(sin(deg(6.28*(x-3))))+0.3*(sin(deg(6.28*5*x)))+x^2} node[pos=.999, above]{};
\addplot[name path=e2,black,domain={0.25:0.75}] {0} node[pos=.85, above]{\begin{minipage}{0.3\textwidth} total mass of the\\ gray area is $1/2$\end{minipage}};
    \addplot[black!20] fill between[of=F and e2, soft clip={domain=0.25:0.75}];

\end{axis}
\end{tikzpicture}

}\end{center}

\caption{Densities of $\bmu$ and $\bnu$ in Example \ref{ex1}.}
\label{fig:4}
\end{figure}
\end{example}
Trivial as it looks, Examples \ref{xy} and \ref{ex1} provide us with some interesting aspects of the SOT in contrast to the classic optimal transport.
\begin{enumerate}[i.]
    \item It is well-known that for the classic Kantorovich transport problem in $\R$, if the cost function is a convex function in $y-x$,  then the comonotone map is always optimal (see e.g., Theorem 2.9 of \cite{S15}). However, this effect no longer exists in simultaneous transport, since there may not exist an admissible comonotone map.
    \item It is also easy to see that the equality in (\ref{ineq1}) may not hold, for example when $\nu_1$ is uniform on $[0,1]$. In addition, the inequality (\ref{2inf}) becomes trivial since it gives a lower bound $0$.
\end{enumerate} 

    After developing our theory, we discuss a few more interesting examples in Section \ref{61}.

 \section{General properties of simultaneous optimal transport}\label{sec:Kanto}

Recall we consider $d$-tuples of finite measures $\bmu=(\mu_1,\dots,\mu_d)$ on $X$ and $\bnu=(\nu_1,\dots,\nu_d)$ on $Y$, and a reference measure $\eta\sim\mub$. Also  recall that  $\bmu',\bnu'$ denote the Radon--Nikodym derivatives of $\bmu,\bnu$ with respect to $\mub,\nub$. 

\subsection{The Kantorovich formulation}\label{Kanto}
Sometimes it is mathematically more convenient to adopt the Kantorovich formulation, which describes the set of all transport plans as probability measures in $\P(X\times Y)$. More precisely, for a probability measure $\eta\sim\mub$, we define
$$\Pi_\eta(\bmu,\bnu):=\{\eta\otimes\kappa\mid\kappa_\#\bmu\geq\bnu\}.$$The subscript $\eta$ incorporates the way we calculate costs: see (\ref{eq:cost}) and (\ref{eq:cost2}). It is immediate that 
\begin{align}\inf_{\pi\in\Pi_\eta(\bmu,\bnu)}\C(\pi):=\inf_{\pi\in\Pi_\eta(\bmu,\bnu)}\int_{X\times Y}c(x,y)\pi(\d x,\d y)=\inf_{\kappa\in \K(\bmu,\bnu)}\mathcal C_\eta(\kappa).\label{sameinf}\end{align}

Equivalently, we have the following reformulation for $\Pi_\eta(\bmu,\bnu)$. We call this the Kantorovich reformulation, whose reasons are explained below.
\begin{proposition}\label{prop4.1}
For each $\eta\sim\mub$, we have
\begin{align}
\Pi_\eta(\bmu,\bnu)&=\Bigg\{\pi\in\mathcal P(X\times Y)\mid  \pi(\d x \times Y)=\eta (\d x) \text{   and} \int_{X  }\frac{\d \bmu}{\d\eta}(x)\pi(\d x,\d y)\geq\bnu (\d y) \Bigg\}.\label{newpi}
\end{align} 
\end{proposition}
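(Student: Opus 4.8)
The plan is to prove the two set inclusions separately, translating between the kernel description on the left and the marginal/constraint description on the right. Write $\Pi_\eta(\bmu,\bnu)$ for the left-hand side and $\widetilde\Pi$ for the right-hand side of \eqref{newpi}.

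\emph{The inclusion $\Pi_\eta(\bmu,\bnu)\subseteq\widetilde\Pi$.} Take $\pi=\eta\otimes\kappa$ with $\kappa_\#\bmu\ge\bnu$. The first-marginal condition $\pi(\d x\times Y)=\eta(\d x)$ is immediate since $\kappa(x;\cdot)$ is a probability kernel. For the second condition, I would disintegrate: for a test set $B\subseteq Y$,
\[
\int_X \frac{\d\bmu}{\d\eta}(x)\,\kappa(x;B)\,\eta(\d x)=\int_X \kappa(x;B)\,\bmu(\d x)=\kappa_\#\bmu(B)\ge\bnu(B),
\]
using $\eta\sim\mub$ so that $\d\bmu/\d\eta$ exists and $\bmu\ll\eta$; the first equality is just the change of measure, and the middle expression is exactly $\int_{X}\frac{\d\bmu}{\d\eta}(x)\,\pi(\d x\times B)$. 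This gives the componentwise inequality $\int_X \frac{\d\bmu}{\d\eta}(x)\,\pi(\d x,\d y)\ge\bnu(\d y)$ as measures on $Y$.

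\emph{The inclusion $\widetilde\Pi\subseteq\Pi_\eta(\bmu,\bnu)$.} Given $\pi\in\widetilde\Pi$, since its first marginal is $\eta$ and $X,Y$ are Polish, disintegrate $\pi(\d x,\d y)=\eta(\d x)\,\kappa(x;\d y)$ for some probability kernel $\kappa$ (regular conditional distribution exists on Polish spaces). Then I need $\kappa_\#\bmu\ge\bnu$. For $B\subseteq Y$,
\[
\kappa_\#\bmu(B)=\int_X \kappa(x;B)\,\bmu(\d x)=\int_X \kappa(x;B)\,\frac{\d\bmu}{\d\eta}(x)\,\eta(\d x)=\int_{X}\frac{\d\bmu}{\d\eta}(x)\,\pi(\d x\times B)\ge\bnu(B),
\]
which is the defining inequality of $\widetilde\Pi$ read on the set $B$. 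Hence $\kappa\in\K(\bmu,\bnu)$ and $\pi=\eta\otimes\kappa\in\Pi_\eta(\bmu,\bnu)$.

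\emph{Main obstacle.} The only genuinely delicate point is the measure-theoretic bookkeeping in the second inclusion: one must check that the disintegration $\kappa$ can be chosen so that $x\mapsto\kappa(x;\cdot)$ is a genuine measurable family of probability measures (so that $\kappa_\#\bmu$ is well-defined and the Radon--Nikodym reweighting is legitimate), and that the inequality $\int_X\frac{\d\bmu}{\d\eta}(x)\,\pi(\d x,\d y)\ge\bnu(\d y)$, which a priori is an inequality between $\R^d$-valued measures on $Y$, transfers to the pointwise-on-Borel-sets inequality $\kappa_\#\bmu(B)\ge\bnu(B)$ for every $B$. Both are routine given the Polish assumption and $\eta\sim\mub$ (the equivalence $\eta\sim\mub$, rather than mere $\eta\ll\mub$, is what makes $\d\bmu/\d\eta$ finite $\eta$-a.e.\ and the reweighting reversible), but it is where the proof should be written carefully rather than waved through. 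The identity \eqref{sameinf} is then immediate from this reformulation since the cost integrand $c(x,y)$ is unchanged under the correspondence $\pi\leftrightarrow\kappa$.
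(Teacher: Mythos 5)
Your argument is correct and matches the paper's proof in all essentials: both inclusions proceed by the same change-of-measure identity $\int_X \kappa(x;B)\,\bmu(\d x)=\int_X \kappa(x;B)\,\tfrac{\d\bmu}{\d\eta}(x)\,\eta(\d x)=\int_{X\times B}\tfrac{\d\bmu}{\d\eta}\,\d\pi$, with the converse direction relying on the disintegration theorem for Polish spaces to recover the kernel from the first marginal $\eta$. The paper is slightly more explicit (building the product measure via Carath\'eodory and establishing the integration identity via indicator functions and monotone convergence), but the route is identical; your closing remark correctly pinpoints that $\eta\sim\mub$ (not merely $\eta\ll\mub$) is what makes $\d\bmu/\d\eta$ exist and the reweighting reversible.
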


In a  way  similar to Proposition \ref{prop4.1}, in the balanced case, i.e., $\bmu(X)=\bnu(Y)$, we have
\begin{align}
\Pi_\eta(\bmu,\bnu)&=\Bigg\{\pi\in\mathcal P(X\times Y)\mid  \pi(\d x\times Y)=\eta(\d x)\text{  and }\int_{X}\frac{\d \bmu}{\d\eta}(x)\pi(\d x,\d y)=\bnu(\d y)\Bigg\}.\label{o1}
\end{align} In particular, if  $\eta=\mub$, we denote by $\Pi(\bmu,\bnu)=\Pi_{\mub}(\bmu,\bnu)$,
and \eqref{o1} reads as
\begin{align}
    \Pi(\bmu,\bnu)&=\Bigg\{\pi\in\mathcal P(X\times Y)\mid  \pi(\d x\times Y)=\mub(\d x)\text{  and }\int_{X}\bmu'(x)\pi(\d x,\d y)=\bnu(\d y)\Bigg\}.\label{o}
\end{align} 

    
It seems worthwhile to explain the similarities and differences of (\ref{o}) compared to the classic definition $\Pi(\mu,\nu)$ in the case $d=1$ (see (\ref{c}) below). First, by summing over and normalizing the second constraint in (\ref{o}), we see that $\pi$ is a transport from $\mub$ to $\nub$. Thus, one may think of $\pi(A\times B)$ as the amount of $\mub$-mass moving from $A$ to $B$. With $j\in [d]$ fixed, the second constraint in (\ref{o}) means that the mass sent from the contribution of $\mu_j$ covers exactly the corresponding portion of $\nu_j$ in $Y$. 

We can reformulate \eqref{o} as \begin{align}\Pi(\bmu,\bnu)&=\Bigg\{\pi\in\mathcal P(X\times Y)\mid \int_{X\times Y}f(x)\pi(\d x,\d y)=\int_Xf\, \d \mub \text{ and}\nonumber\\
 &\qquad \int_{X\times Y}\bmu'(x)g(y)\pi(\d x,\d y)=\int_Yg \,\d \bnu \text{ for all measurable }f,g\Bigg\}.\label{oo}\end{align}
In the case $d=1$, our formulation coincides with the classic Kantorovich formulation, where the admissible transports are defined as
\begin{align}
    \widetilde{\Pi}(\mu,\nu):=\{\pi\in\mathcal P(X\times Y)\mid  \pi(\d x\times Y)=\mu(\d x)\text{ and } \pi(X\times \d y)=\nu(\d y)\}.\label{c}
\end{align}In some sense, one can also recover transports in $\widetilde{\Pi}(\mu_j,\nu_j)$ from $\Pi(\bmu,\bnu)$. For example, taking $f(x)=\bone_{\{x\in A\}}\mu'_j(x)$ and $g(y)=\bone_{\{y\in B\}}$ in (\ref{oo}), we have for any $j \in [d]$, the measure $\mu'_j(x)\pi(\d x,\d y)$ belongs to $\widetilde{\Pi}(\mu_j,\nu_j)$.

Unlike the classic Kantorovich optimal transport problem in the case $d=1$, the simultaneous transport problem is not symmetric with respect to the measures $\bmu,\bnu$, as expected from \Cref{prop:ssww}. It seems unlikely that $\Pi(\bmu,\bnu)$ can be defined in a similar way as (\ref{c}) using only projections of measures. The fact that the classic Kantorovich formulation uses projections and is symmetric, is nothing more than a nice consequence of the kernel formulation and does not reflect the general structure.


\subsection{Equivalence between Monge and Kantorovich costs}  \label{rel}
 
 Below, we prove that under suitable conditions, the set of transport maps and plans have the same infimum cost.  This serves as an extension of Theorem 2.1 of \cite{A03} in the case $d=1$ and  we also assume for simplicity that $\mub,\nub$ have compact supports. As expected from Proposition \ref{prop:ssww}, joint non-atomicity plays an important role since it guarantees the existence of Monge transports.

 We first prove the following more general result using the kernel formulation. Observe that for a Monge transport $T\in\T(\bmu,\bnu)$, we can associate a kernel $\kappa_T\in\K(\bmu,\bnu)$ defined by $\kappa_T(x;B):=\bone_{\{T(x)\in B\}}$. In view of (\ref{eq:cost}) and (\ref{eq:cost2}), they have the same transport cost.
\begin{theorem}[Cost equality]
\label{infs}
Let $\eta\sim\mub$. Suppose that $X,Y$ are compact spaces on which $\bmu,\bnu$   are supported, $\bmu$ is jointly atomless, and $c$ is continuous. Then the transport plans and transport maps admit the same infimum cost. That is,
$$\inf_{\kappa\in \K(\bmu,\bnu)}\C_\eta(\kappa)= \inf_{T\in \mathcal T(\bmu,\bnu)} \mathcal C_\eta(T).$$
\end{theorem}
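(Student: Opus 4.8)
The plan is to prove the two inequalities separately, with the nontrivial direction being $\inf_{\kappa}\C_\eta(\kappa)\ge \inf_{T}\C_\eta(T)$, since the reverse inequality follows immediately by associating to each $T\in\T(\bmu,\bnu)$ the kernel $\kappa_T(x;\cdot)=\delta_{T(x)}(\cdot)$ (as noted just above the statement), which lies in $\K(\bmu,\bnu)$ and has the same cost. So it suffices to show that every kernel can be approximated in cost by Monge transports. First I would fix $\kappa\in\K(\bmu,\bnu)$ (assuming $\K(\bmu,\bnu)\neq\emptyset$, else there is nothing to prove) and the associated plan $\pi=\eta\otimes\kappa\in\Pi_\eta(\bmu,\bnu)$; by continuity of $c$ on the compact set $X\times Y$ and the reformulation \eqref{newpi}, the cost $\C(\pi)=\int c\,\d\pi$ depends only on $\pi$, and I want to produce Monge transports $T_n$ with $\C_\eta(T_n)\to \C(\pi)$.

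The key idea is to use joint non-atomicity to ``split'' the kernel into a measurable map. By Definition \ref{def2} there is a random variable $\xi:X\to\R$ that is atomless and independent of $\bmu'$ under $\mub$; this auxiliary randomness is exactly what lets us convert a randomized transport into a deterministic one while preserving the constraint $\bmu\circ T^{-1}\ge\bnu$ — the constraint is a condition on $\bmu'$-weighted pushforwards, so independence of $\xi$ from $\bmu'$ guarantees nothing is disturbed. Concretely, I would (i) partition $Y$ (compact, hence totally bounded) into finitely many small Borel pieces $B_1,\dots,B_m$ of diameter $<\epsilon$; (ii) discretize $\kappa$ accordingly, approximating $\pi$ weakly by a plan $\pi^\epsilon$ supported on $X\times\{y_1,\dots,y_m\}$ (picking representatives $y_i\in B_i$) that still respects the marginal and the $\bmu'$-weighted constraints — here I may need to allow a small slackening and then correct it, since the discretized pushforward must still dominate $\bnu$; (iii) on each atom of the discretization, use the conditional atomlessness provided by $\xi$ to realize the finitely-supported conditional law $\sum_i \kappa^\epsilon(x;\{y_i\})\delta_{y_i}$ as $\delta_{T^\epsilon(x)}$ for a measurable map $T^\epsilon$ — this is a standard measurable-selection / layer-cake construction on $(X,\mub)$ using a map to $[0,1]$ built from $\xi$, analogous to Theorem 2.1 of \cite{A03} in the case $d=1$, but carried out ``fiberwise over the value of $\bmu'$''. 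Then $\C_\eta(T^\epsilon)=\int c\,\d(\eta\otimes\kappa^\epsilon)\to \int c\,\d\pi = \C(\pi)$ as $\epsilon\to 0$, by uniform continuity of $c$ on $X\times Y$ (the cost only sees that $T^\epsilon(x)$ lies within $\epsilon$ of a point matching the $\kappa$-barycenter structure). Taking the infimum over $\kappa$ closes the argument.

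The main obstacle I anticipate is step (iii): making precise the claim that joint non-atomicity lets one realize an \emph{arbitrary} finitely-supported kernel consistent with $\bmu\circ T^{-1}\ge\bnu$ as a Monge map, with the constraint exactly preserved (not merely approximately). The subtlety is that the simultaneous constraint couples all $d$ components through the single map $T$, so the naive ``for each $j$, use atomlessness of $\mu_j$'' does not work — one genuinely needs the independence of $\xi$ from the \emph{vector} $\bmu'$, and one must check that the set of $x$ assigned to each target point $y_i$ has the correct $\mu_j$-mass for \emph{every} $j$ simultaneously. I expect this to reduce to the construction underlying Proposition \ref{prop:ssww}(ii) (existence of Monge transports under joint non-atomicity), applied not to $(\bmu,\bnu)$ directly but to the discretized problem; essentially one invokes the same theorem of \cite{SSWW19} as a black box on the finite target space and then glues. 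A secondary technical point is handling the inequality (unbalanced) constraint: when discretizing, the dominated measure $\widetilde{\bnu}=\bmu\circ T^{-1}\ge\bnu$ may shift, so I would first fix a witness $\widetilde{\bnu}$, discretize the \emph{balanced} problem $(\bmu,\widetilde{\bnu}^\epsilon)$ with $\widetilde{\bnu}^\epsilon\ge\bnu$, and only then apply the existence result; continuity of $c$ absorbs the resulting error.
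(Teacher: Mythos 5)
Your overall strategy for the easy direction and the decision to build a Monge map from a kernel via joint non-atomicity are both correct, and your instinct that the construction must be done ``fiberwise over $\bmu'$'' is also on the right track. However, step (iii) has a genuine gap, and it stems from partitioning the wrong space. You partition the \emph{target} $Y$ into small pieces $B_1,\dots,B_m$ and then invoke the existence result (Proposition~\ref{prop:ssww}(ii)) to produce some Monge map $T^\epsilon$ with $\bmu\circ(T^\epsilon)^{-1}\ge\bnu^\epsilon$. But the simultaneous Monge constraint only pins down the $\bmu$-weighted pushforward of $T^\epsilon$, not the joint law of $(x,T^\epsilon(x))$. The transport cost $\int c(x,T^\epsilon(x))\,\eta(\d x)$ depends on \emph{which} $x$'s are sent to each $y_i$, and there is nothing in the existence result that ties the Monge map it hands you to the specific kernel $\kappa^\epsilon$ you started from. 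Already with $d=1$, $X=[0,1]$, $\mu$ Lebesgue, $\nu=\tfrac12(\delta_0+\delta_1)$, $c(x,y)=|x-y|$, and $\kappa(x;\cdot)=\tfrac12(\delta_0+\delta_1)$, the kernel cost is $1/2$ but Monge maps with the correct pushforward have costs ranging over $[1/4,3/4]$ — so the claimed equality $\C_\eta(T^\epsilon)=\int c\,\d(\eta\otimes\kappa^\epsilon)$ is simply false. Discretizing $Y$ does not help: it only makes the cost nearly constant in the $y$-variable within each $B_i$, whereas the unconstrained freedom is in the $x$-variable.

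The correct fix — and what the paper actually does — is to partition the \emph{source} $X$ into pieces $\{K_{i,n}\}$ of small diameter (chosen so that $\d\eta/\d\mub$ oscillates by at most a factor $1+1/n$ on each piece). On each piece one sets $\bmu^{i,n}:=\bmu|_{K_{i,n}}$ and defines the target $\bnu^{i,n}$ to be exactly the pushforward of $\bmu^{i,n}$ under $\kappa$; then Lemma~\ref{ca} (joint non-atomicity is inherited on sub-pieces) plus Proposition~\ref{prop:ssww}(ii) give a Monge map $T_{i,n}\colon K_{i,n}\to Y$ hitting $\bnu^{i,n}$ exactly. Gluing yields $T_n\in\T(\bmu,\bnu)$ automatically (the pushforward dominates $\bnu$ because $\kappa$'s does). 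The cost comparison then works because the measures $\eta\otimes\kappa_n$ and $\eta\otimes\kappa$ put nearly equal mass on each rectangle $K_{i,n}\times K_{\ell,n}$, and $c$ is nearly constant on each such rectangle; one majorizes $c$ by the piecewise-constant envelope $\bar c_n$ and lets $n\to\infty$. Note also that the unbalanced constraint, which you flag as a secondary concern, never needs a separate treatment: since $\bnu^{i,n}$ is defined as the kernel pushforward of $\bmu|_{K_{i,n}}$, the glued map automatically satisfies $\bmu\circ T_n^{-1}=\kappa_\#\bmu\ge\bnu$. Finally, the general case $\eta\sim\mub$ (without a uniform lower bound on $\d\eta/\d\mub$) needs an additional limiting argument replacing $\eta$ by $\eta+\delta\mub$ and sending $\delta\to 0$, which your proposal does not address.
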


Combining with (\ref{sameinf}) yields the following.
\begin{corollary}\label{2infss}
Consider a reference measure $\eta\sim\mub$. Suppose that $X,Y$ are compact spaces on which $\bmu,\bnu$ are supported, $\bmu$ is jointly atomless, and $c$ is continuous, then Monge and Kantorovich transport costs have the same infimum value. That is,
$$\inf_{\pi\in\Pi_\eta(\bmu,\bnu)}\C(\pi)= \inf_{T\in \mathcal T(\bmu,\bnu)} \mathcal C_\eta(T).$$
\end{corollary}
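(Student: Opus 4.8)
The statement is a formal consequence of Theorem~\ref{infs} combined with the identity~\eqref{sameinf}, so the proof amounts to chaining two equalities; I record the short argument and indicate where the actual content lies.

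First I would verify~\eqref{sameinf}, that is, $\inf_{\pi\in\Pi_\eta(\bmu,\bnu)}\C(\pi)=\inf_{\kappa\in\K(\bmu,\bnu)}\C_\eta(\kappa)$. By the very definitions, $\Pi_\eta(\bmu,\bnu)=\{\eta\otimes\kappa\mid\kappa\in\K(\bmu,\bnu)\}$, since the defining condition in both cases is precisely $\kappa_\#\bmu\ge\bnu$. Moreover, for $\pi=\eta\otimes\kappa$ one has, directly from~\eqref{eq:cost2},
$$\C(\pi)=\int_{X\times Y}c(x,y)\,\eta\otimes\kappa(\d x,\d y)=\C_\eta(\kappa).$$
Hence $\kappa\mapsto\eta\otimes\kappa$ is a cost-preserving surjection of $\K(\bmu,\bnu)$ onto $\Pi_\eta(\bmu,\bnu)$ (it need not be injective, as kernels agreeing $\eta$-a.e.\ give the same plan, but this is irrelevant for infima), so the two infima coincide. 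This step is purely formal and uses none of the topological hypotheses.

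Next, the hypotheses of the corollary---$\eta\sim\mub$, $X,Y$ compact carrying $\bmu,\bnu$, $\bmu$ jointly atomless, $c$ continuous---are exactly those of Theorem~\ref{infs}, which therefore applies verbatim and yields $\inf_{\kappa\in\K(\bmu,\bnu)}\C_\eta(\kappa)=\inf_{T\in\T(\bmu,\bnu)}\C_\eta(T)$. Combining this with~\eqref{sameinf} gives
$$\inf_{\pi\in\Pi_\eta(\bmu,\bnu)}\C(\pi)=\inf_{\kappa\in\K(\bmu,\bnu)}\C_\eta(\kappa)=\inf_{T\in\T(\bmu,\bnu)}\C_\eta(T),$$
which is the assertion. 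The only genuine mathematical content---the inequality $\inf_{\kappa}\C_\eta(\kappa)\ge\inf_{T}\C_\eta(T)$, obtained by approximating an arbitrary kernel by Monge maps using joint non-atomicity together with continuity of $c$---is the substance of Theorem~\ref{infs}; the reverse inequality is immediate from the associated kernel $\kappa_T$. Consequently there is no real obstacle at the level of this corollary beyond correctly matching the hypotheses and recalling~\eqref{sameinf}. If a self-contained write-up were preferred, one would simply inline the two-line verification above in place of quoting~\eqref{sameinf}.
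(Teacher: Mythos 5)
Your proof is correct and matches the paper's approach exactly: the paper introduces this corollary with the one-line remark that it follows from Theorem~\ref{infs} combined with~\eqref{sameinf}, which is precisely the chain of equalities you wrote out. The only addition you make is spelling out why~\eqref{sameinf} holds from the definitions, which is harmless and accurate.
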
 

The proof of Theorem \ref{infs} follows a similar path as the classic result in the case $d=1$, except that we need a few new lemmas on joint non-atomicity. Recall from the classic proof that non-atomicity allows us to  approximate a transport plan using a transport map on each small piece of $X$. In our setting, we need joint non-atomicity to achieve this; see Proposition \ref{prop:ssww}.

\begin{remark}\label{rem:parallel}
Heuristically, there is a parallel between  non-atomicity  in the classic setting   and joint non-atomicity  in our setting. For example, \begin{enumerate}[i.]
    \item Under joint non-atomicity, a Monge transport exists if and only if a Kantorovich transport exists (Proposition \ref{prop:ssww}). In the case $d=1$ with non-atomicity, this equivalence also holds, although a Kantorovich transport between $\mu$ and $\nu$ exists as soon as $\mu$ and $\nu$ have the same mass. 
    
    \item Marginal non-atomicity is equivalent to the existence of a uniform random variable and joint non-atomicity is equivalent to the existence of a uniform random variable independent of a $\sigma$-field (Lemma \ref{lem:useful}).
    \item The joint non-atomicity condition enables us to conclude Monge and Kantorovich problems have the same infimum (Corollary \ref{2infss}), which is true in the case $d=1$ given marginal non-atomicity.
\end{enumerate}
\end{remark}

 \subsection{Connecting the balanced and unbalanced settings}
 \label{sec:connect}
 So far we have discussed SOT in the unbalanced setting.
In real applications such as the setting of Example \ref{ex:1}, it likely holds that $\bmu(X)\ge \bnu(Y)$ with strict inequality in some components. 
 For instance, in an economy, the total demand for each product may be approximately $95\%$ of the total supply, leading to $\bnu(Y)\approx 0.95\times  \bmu(X)$.

 As we will see in the subsequent sections, 
results on duality, equilibria,  and the MOT-SOT parity  
will be obtained in the setting of balanced transport, since the balanced setting has a much richer mathematical structure than the unbalanced setting. 

Nevertheless, we show below that the balanced setting of simultaneous transport can be used as an approximation of the unbalanced setting. 
A special situation is when $\bnu(Y)\approx  (1-\epsilon)\times  \bmu(X)$ for a small $\epsilon>0$, which is more realistic in applications. 

Suppose that  $\bnu^n\le \bnu$ for $n\in \N$ and $\bnu^n\to \bnu$ weakly as $n\to \infty$. 
By definition, $\K(\bmu,\bnu)\subseteq \K(\bmu,\bnu^n)$, which means that each transport from $\bmu$ to $\bnu$ is also a transport from $\bmu$ to $  \bnu^n$.  
Moreover,  under a continuity assumption, the minimum transport cost from $\bmu$ to $\bnu$ is the limit of that from $\bmu$ to $\bnu^n$.  
Therefore, an optimal transport from $\bmu$ to $\bnu$ can be seen as a nearly optimal transport from  $\bmu$ to $\bnu^n$. 
Note that $\bmu(X)=\bnu(Y)$ is not needed for this continuity result.

 \begin{proposition}\label{connect}
 Suppose that  $X,Y$ are compact Polish spaces, $\bmu\in\M(X)^d$, and
  $\d\bmu/\d\eta$ and $c $ are  continuous. Suppose that $(\bnu^n)_{n\in\N}\subseteq \M(Y)^d$ is a sequence of measures converging weakly to $\bnu\in\M(Y)^d$ such that $\bnu^n\leq \bnu$ for each $n\in \N$. Then
 $$\lim_{n\to\infty}\inf_{\pi\in\Pi_{\eta}(\bmu,\bnu^n)}\C(\pi)=\inf_{\pi\in\Pi_{\eta}(\bmu,\bnu)}\C(\pi).$$
 \end{proposition}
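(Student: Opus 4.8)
The plan is to prove the two inequalities
\[
\limsup_{n\to\infty}\inf_{\pi\in\Pi_\eta(\bmu,\bnu^n)}\C(\pi)\le \inf_{\pi\in\Pi_\eta(\bmu,\bnu)}\C(\pi)
\quad\text{and}\quad
\liminf_{n\to\infty}\inf_{\pi\in\Pi_\eta(\bmu,\bnu^n)}\C(\pi)\ge \inf_{\pi\in\Pi_\eta(\bmu,\bnu)}\C(\pi).
\]
The first (``$\le$'') inequality is easy: since $\bnu^n\le\bnu$, any $\kappa\in\K(\bmu,\bnu)$ satisfies $\kappa_\#\bmu\ge\bnu\ge\bnu^n$, so $\Pi_\eta(\bmu,\bnu)\subseteq\Pi_\eta(\bmu,\bnu^n)$ for every $n$; taking infima gives $\inf_{\Pi_\eta(\bmu,\bnu^n)}\C\le\inf_{\Pi_\eta(\bmu,\bnu)}\C$ for all $n$, hence the $\limsup$ bound. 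So the content is in the reverse inequality, which is a lower-semicontinuity-type statement.

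For the ``$\ge$'' direction I would argue by a compactness/subsequence extraction. First dispose of the trivial case: if $\inf_{\Pi_\eta(\bmu,\bnu)}\C=\infty$ there is nothing to prove, so assume it is finite, which by Proposition~\ref{prop:ssww} forces $\mm\gicx\nn$ and in particular $\K(\bmu,\bnu)\ne\emptyset$. Pass to a subsequence along which $\inf_{\Pi_\eta(\bmu,\bnu^{n_k})}\C(\pi)$ converges to $\liminf_n \inf_{\Pi_\eta(\bmu,\bnu^n)}\C$; WLOG this liminf is finite (else done). For each $k$ pick a near-optimizer $\pi_k\in\Pi_\eta(\bmu,\bnu^{n_k})$ with $\C(\pi_k)\le \inf_{\Pi_\eta(\bmu,\bnu^{n_k})}\C+1/k$. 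Each $\pi_k$ is a probability measure on the compact space $X\times Y$ with first marginal $\eta$ (fixed), hence the family $\{\pi_k\}$ is tight, so by Prokhorov we may extract a further subsequence with $\pi_k\to\pi^*$ weakly for some $\pi^*\in\mathcal P(X\times Y)$. Since $c$ is continuous and nonnegative on the compact space $X\times Y$ it is bounded, so $\C(\pi_k)=\int c\,\d\pi_k\to\int c\,\d\pi^*=\C(\pi^*)$ (in fact boundedness gives exact convergence, not just lower semicontinuity). It remains to check $\pi^*\in\Pi_\eta(\bmu,\bnu)$, and then $\inf_{\Pi_\eta(\bmu,\bnu)}\C\le\C(\pi^*)=\lim_k\C(\pi_k)=\liminf_n\inf_{\Pi_\eta(\bmu,\bnu^n)}\C$, as desired.

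The membership $\pi^*\in\Pi_\eta(\bmu,\bnu)$ is the main obstacle, and here I would use the Kantorovich reformulation from Proposition~\ref{prop4.1}: $\pi^*\in\Pi_\eta(\bmu,\bnu)$ iff its $X$-marginal is $\eta$ and $\int_X \tfrac{\d\bmu}{\d\eta}(x)\,\pi^*(\d x,\d y)\ge\bnu(\d y)$. The $X$-marginal condition passes to the weak limit since each $\pi_k$ has $X$-marginal $\eta$ and projection is weakly continuous. For the second condition, fix a nonnegative bounded continuous $g:Y\to\R$ and a component $j\in[d]$; then by the hypothesis that $\tfrac{\d\mu_j}{\d\eta}$ is continuous (hence bounded on compact $X$), the function $(x,y)\mapsto \tfrac{\d\mu_j}{\d\eta}(x)g(y)$ is bounded continuous on $X\times Y$, so
\[
\int_{X\times Y}\frac{\d\mu_j}{\d\eta}(x)g(y)\,\pi^*(\d x,\d y)
=\lim_k \int_{X\times Y}\frac{\d\mu_j}{\d\eta}(x)g(y)\,\pi_k(\d x,\d y)
\ge \lim_k \int_Y g\,\d\nu^{n_k}_j
=\int_Y g\,\d\nu_j,
\]
where the middle inequality is exactly $\pi_k\in\Pi_\eta(\bmu,\bnu^{n_k})$ tested against $g$, and the last equality is the weak convergence $\nu^{n_k}_j\to\nu_j$. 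Since this holds for all nonnegative $g\in C_b(Y)$, the measure $\int_X \tfrac{\d\mu_j}{\d\eta}(x)\,\pi^*(\d x,\cdot)$ dominates $\nu_j$; doing this for each $j$ gives the vector inequality $\kappa^*_\#\bmu\ge\bnu$ where $\pi^*=\eta\otimes\kappa^*$ (disintegration is legitimate since the $X$-marginal of $\pi^*$ is $\eta$). Hence $\pi^*\in\Pi_\eta(\bmu,\bnu)$, completing the proof. The only subtlety to be careful about is that the continuity of $\d\bmu/\d\eta$ on compact $X$ is genuinely used to keep the test integrands bounded and continuous, and that comparing measures via nonnegative $C_b$ test functions does characterize the order $\ge$ on finite Borel measures on a Polish space (a standard fact).
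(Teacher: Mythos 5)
Your proposal is correct and follows essentially the same path as the paper's proof: compactness of $X\times Y$ gives tightness of (near-)optimizers, Prokhorov's theorem yields a weak limit $\pi^*$, continuity of $\d\bmu/\d\eta$ lets the constraints in the Kantorovich reformulation \eqref{newpi} pass to the limit so that $\pi^*\in\Pi_\eta(\bmu,\bnu)$, and the monotonicity $\Pi_\eta(\bmu,\bnu)\subseteq\Pi_\eta(\bmu,\bnu^n)$ supplies the easy direction. You are in fact more explicit than the paper on two points it leaves implicit — choosing $\pi_k$ as near-optimizers rather than arbitrary elements, and verifying membership of $\pi^*$ by testing against nonnegative $C_b(Y)$ functions — so your write-up is, if anything, the more careful of the two.
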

 
Proposition \ref{connect} provides a link between two settings, allowing us to use results in  the balanced setting to approximate the unbalanced setting.  
 Starting from the next section, we concentrate on the balanced setting.

\subsection{Duality for simultaneous optimal transport}\label{Dua}
Consider $\R^d_+$-valued measures $\bmu,\bnu$ on Polish spaces $X,Y$ satisfying $\bmu(X)=\bnu(Y)$ (e.g., when they are probability measures), a reference probability  $\eta\sim\mub$, and $\bmu,\bnu$ are absolutely continuous with respect to $\mub, \nub$ with densities  $\bmu'$ on $X$ and $\bnu'$ on $Y$ respectively. Also, recall that (\ref{o1}) is the set of all transport plans from the vector-valued measure $\bmu$ to the vector-valued measure $\bnu$. 

We give a duality theorem for SOT on Polish spaces. A detailed proof will be provided in Appendix \ref{AA}. 
\begin{theorem}[Duality]
Suppose that $X,Y$ are Polish spaces, $\eta\sim\mub$ with both $\d\bmu/\d\eta$ and $\d\eta/\d\mub$ bounded  continuous, and $c:X\times Y\to[0,\infty]$ is lower semi-continuous.\footnote{Recall that a function $f$ is lower semi-continuous if and only if for any $y\in\R$, $\{\bx\mid f(\bx)>y\}$ is open.} 
Duality holds as
\begin{align} \inf_{\pi\in\Pi_\eta(\bmu,\bnu)}\int_{X\times Y}c \, \d \pi =\sup_{(\phi,\bpsi)\in \Phi_c}\int_X\phi\,  \d\eta+\int_Y \bpsi^\top \d \bnu, \label{dua}\end{align}
where  \begin{align*} \Phi_c  =\Bigg\{(\phi,\bpsi)\in C(X)\times C (Y)^d\mid \phi (x)+\bpsi(y)\cdot 
\frac{\d\bmu}{\d\eta}(x)\le c(x,y)\Bigg\}.\end{align*}  
Moreover, the infimum in \eqref{dua} is attained.
\label{duality1}
\end{theorem}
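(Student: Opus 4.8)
The plan is to establish the duality equality \eqref{dua} via the two standard inequalities, and then to obtain attainment of the infimum by a weak compactness argument. Set $h:=\d\eta/\d\mub$; the hypotheses force $\d\bmu/\d\eta$, $h$ and $1/h$ to be simultaneously bounded continuous and bounded away from $0$ (indeed $\d\mub/\d\eta=\tfrac1d\,\one\cdot(\d\bmu/\d\eta)$ is bounded continuous, and a bounded positive function whose reciprocal is also bounded is bounded away from $0$). First I would reduce to $\eta=\mub$: the map $\pi\mapsto\tilde\pi:=h(x)^{-1}\,\pi(\d x,\d y)$ is a bijection from $\Pi_\eta(\bmu,\bnu)$ onto $\Pi(\bmu,\bnu)$ (the set \eqref{o}), carrying $\int c\,\d\pi$ to $\int\tilde c\,\d\tilde\pi$ with $\tilde c(x,y):=h(x)c(x,y)$ (again lower semi-continuous and $[0,\infty]$-valued), while $(\phi,\bpsi)\mapsto(h\phi,\bpsi)$ is a bijection $\Phi_c\to\Phi_{\tilde c}$ preserving the dual objective (using $h\cdot(\d\bmu/\d\eta)=\bmu'$). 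Hence we may assume $\eta=\mub$, in which case summing the second constraint in \eqref{o} over $j$ shows $\Pi(\bmu,\bnu)\subseteq\widetilde\Pi(\mub,\nub)$, i.e.\ every admissible $\pi$ has marginals $\mub$ and $\nub$ — the key structural simplification. Weak duality (``$\ge$'' in \eqref{dua}) is then immediate: integrate $\phi(x)+\bpsi(y)\cdot\bmu'(x)\le c(x,y)$ against any $\pi\in\Pi(\bmu,\bnu)$ and use the two constraints in \eqref{o}.

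For the reverse inequality I would first treat the case where $X,Y$ are compact and $c\in C(X\times Y)$. On $\mathcal P(X\times Y)\times\bigl(C(X)\times C(Y)^d\bigr)$ define the Lagrangian
\[
L\bigl(\pi,(\phi,\bpsi)\bigr):=\int_{X\times Y}\!\bigl(c(x,y)-\phi(x)-\bpsi(y)\cdot\bmu'(x)\bigr)\,\pi(\d x,\d y)+\int_X\phi\,\d\mub+\int_Y\bpsi^\top\d\bnu .
\]
Then $\sup_{(\phi,\bpsi)}L(\pi,\cdot)=\int c\,\d\pi$ when $\pi\in\Pi(\bmu,\bnu)$ and $+\infty$ otherwise, so $\inf_\pi\sup_{(\phi,\bpsi)}L$ is the primal value; and $\inf_\pi L(\cdot,(\phi,\bpsi))=\int\phi\,\d\mub+\int\bpsi^\top\d\bnu+\min_{X\times Y}\bigl(c-\phi-\bpsi\cdot\bmu'\bigr)$, and shifting $\phi$ by this slack minimum shows $\sup_{(\phi,\bpsi)}\inf_\pi L$ equals $\sup_{\Phi_c}\bigl(\int\phi\,\d\mub+\int\bpsi^\top\d\bnu\bigr)$, the dual value. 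Since $\mathcal P(X\times Y)$ is convex and weak-$*$ compact, $L$ is affine in each of its two arguments, and $L(\cdot,(\phi,\bpsi))$ is weak-$*$ continuous (because $c,\bmu',\phi,\bpsi$ are all bounded continuous on the compact product), Sion's minimax theorem applies and yields $\inf_\pi\sup_{(\phi,\bpsi)}L=\sup_{(\phi,\bpsi)}\inf_\pi L$, i.e.\ primal $=$ dual. (Note this route avoids any appeal to the ``decomposable cost'' identity \eqref{eq:decompose} and works also when $\Pi(\bmu,\bnu)=\varnothing$.)

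Next I would remove the two restrictions. For a general lower semi-continuous $c:X\times Y\to[0,\infty]$ (still compact $X,Y$), choose bounded continuous $c_n\uparrow c$ (e.g.\ Moreau--Yosida regularizations). The previous step gives $\mathrm{primal}(c_n)=\mathrm{dual}(c_n)$; both are nondecreasing in $n$, $\mathrm{dual}(c_n)\le\mathrm{dual}(c)\le\mathrm{primal}(c)$ by weak duality, and $\lim_n\mathrm{primal}(c_n)=\mathrm{primal}(c)$: taking near-optimal $\pi_n\in\Pi(\bmu,\bnu)$ for $c_n$, the set $\Pi(\bmu,\bnu)$ is tight (it lies in $\widetilde\Pi(\mub,\nub)$) and weakly closed (the functionals $\pi\mapsto\int\mu'_j(x)g(y)\,\d\pi$ are weakly continuous for $g\in C_b(Y)$, since $\mu'_j=(\d\mu_j/\d\eta)\,h$ is bounded continuous), so a subsequence $\pi_{n_k}\to\pi_*\in\Pi(\bmu,\bnu)$ weakly; then $\int c_m\,\d\pi_*\le\liminf_k\int c_{n_k}\,\d\pi_{n_k}=\lim_n\mathrm{primal}(c_n)$ for every $m$, and $m\to\infty$ gives $\mathrm{primal}(c)\le\int c\,\d\pi_*\le\lim_n\mathrm{primal}(c_n)$; hence all four quantities coincide. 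Finally, to drop compactness of $X,Y$ I would use a standard truncation/exhaustion argument based on the tightness of $\mub$ and $\nub$ (as in \cite{V09}), transporting the equality from compact approximating spaces.

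For attainment of the infimum in \eqref{dua}: $\Pi_\eta(\bmu,\bnu)$ (equivalently $\Pi(\bmu,\bnu)$ via the reduction) is weakly compact, being a weakly closed subset of the tight set $\widetilde\Pi(\mub,\nub)$; and $\pi\mapsto\int c\,\d\pi$ is weakly lower semi-continuous because $c$ is lower semi-continuous and bounded below (write $c=\sup_n c_n$ with $c_n$ bounded continuous, so this functional is a supremum of weakly continuous maps). A lower semi-continuous functional attains its minimum on a nonempty weakly compact set, so an optimal $\pi$ exists (if $\Pi_\eta(\bmu,\bnu)=\varnothing$ the infimum is $+\infty$ and there is nothing to prove). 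I expect the main obstacle to be the passage from bounded continuous to lower semi-continuous cost together with the removal of compactness of $X$ and $Y$: these are the steps where one must carefully exploit tightness and weak closedness of the admissible set. The minimax step and the reduction to $\eta=\mub$ are comparatively routine, the latter being precisely what supplies the boundedness-below of $\d\mub/\d\eta$ that makes the limiting and compactness arguments work.
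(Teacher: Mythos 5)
Most of your outline is sound and runs parallel to the paper's proof: the reduction to $\eta=\mub$ via $\pi\mapsto h(x)^{-1}\pi$ with $h=\d\eta/\d\mub$ is correct (and is a legitimate simplification the paper does not bother to make), weak duality is immediate, the monotone approximation $c=\sup_n c_n$ for lower semi-continuous costs matches the paper's argument, and your attainment argument (tightness of $\Pi_\eta(\bmu,\bnu)$ from the boundedness of $\d\eta/\d\mub$, weak closedness from continuity and boundedness of $\d\bmu/\d\eta$, weak lower semi-continuity of $\pi\mapsto\int c\,\d\pi$) is exactly the paper's. The genuine gap is in your strong-duality step. You apply Sion's minimax with $\pi$ ranging over all of $\mathcal P(X\times Y)$, which is weak-$*$ compact only when $X\times Y$ is compact; the theorem, however, is stated for general Polish spaces, and your entire route therefore rests on the final sentence ``drop compactness of $X,Y$ by a standard truncation/exhaustion argument''. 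That step is asserted, not proved, and it is not standard in the simultaneous setting: restricting $\bmu,\bnu$ to large compact sets changes the Radon--Nikodym derivatives and can destroy feasibility altogether (the increasing convex order condition of Proposition \ref{prop:ssww} need not survive restriction, so the truncated primal may be $+\infty$ while the original is finite), and conversely admissible dual pairs $(\phi,\bpsi)$ constructed on $K\times L$ do not extend to $X\times Y$ with the pointwise constraint and with controlled error, since their sup-norms are not a priori bounded in terms of the data. So as written the proof only establishes \eqref{dua} for compact $X,Y$.

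The paper's own proof is built precisely to avoid this: it performs the minimax not over $\mathcal P(X\times Y)$ but over $\Pi_\eta(\mub,\nub)$, which is weakly compact on arbitrary Polish spaces thanks to the assumptions ($\d\eta/\d\mub$ bounded gives tightness, $\d\bmu/\d\eta$ bounded continuous gives closedness), at the price of introducing an auxiliary variable $p\in C_b(X\times Y)$ in the Lagrangian; the inner problem $\min_{\pi\in\Pi_\eta(\mub,\nub)}\int p\,\d\pi$ is then no longer a pointwise minimum of the integrand, and is handled by invoking the classic ($d=1$) Kantorovich duality on Polish spaces as a black box, after which the two dual layers are recombined. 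To repair your argument you would either need to supply the missing non-compact extension in detail (a nontrivial task here), or adopt the paper's device: keep your Lagrangian but constrain $\pi$ to the compact set $\Pi_\eta(\mub,\nub)$ and resolve the resulting inner linear problem via classic duality.
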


\begin{remark}
   Under more restrictive assumptions ($X,Y$ compact and $c$ continuous), Theorem \ref{duality1} also follows from an abstract duality theorem established in \cite{G20} for Banach space-valued measures. 
\end{remark}

In the case $d=1$ and $\eta=\mu$, this recovers Theorem 1.3 in \cite{V03} under the assumption of compactness. If $\eta=\mub$ and $\bmu'$ is upper semi-continuous, this result is a special case of the more general moment-type duality formula; see \cite{R98}.

 If $\d\eta/\d\mub$ is bounded (e.g., when $\eta=\mub$), even if $X,Y$ are not compact, we still have $\Pi_{\eta}(\bmu,\bnu)$ is tight and hence weakly relatively compact.  This follows from the definition of tightness and$$\nub(B)=\int_{X\times B}\frac{\d\mub}{\d\eta}(x)\pi(\d x,\d y)\geq \left(\sup_{x\in X}\frac{\d\eta}{\d\mub}(x)\right)^{-1}\pi(X\times B).$$ 
Furthermore, if $\eta=\mub$, $\Pi(\bmu,\bnu)$ is weakly compact if $\bmu'$ is assumed to be continuous, as can be seen by taking limits in (\ref{oo}).


In the case where $\eta\ll\mub$ but $\mub\not\ll\eta$, we still have the lower bound
\begin{align*}\inf_{\kappa\in\mathcal K(\bmu,\bnu)}\int_{X\times Y}c(x,y)\eta \otimes \kappa(\d x,\d y)\geq \sup_{(\phi,\bpsi)\in \Phi_c}\int_X\phi\, \d\eta +\int_Y \bpsi^\top \d \bnu  ,\end{align*}
where
\begin{align*} \Phi_c 
&:=\Bigg\{ (\phi,\bpsi)\in C(X)\times C^d(Y)\mid  \phi(x)\d\eta(x)+\bpsi(y)^{\top} \d\bmu(x) \leq c(x,y)\d\eta(x)\Bigg\}.\end{align*}
This is because for $(\phi,\bpsi)\in \Phi_c$ and $\kappa\in\mathcal K(\bmu,\bnu)$,
\begin{align*}
    \int_X\phi\, \d\eta +\int_Y \bpsi^{\top}\d \bnu &= \int_{X\times Y}\phi(x)\eta\otimes \kappa(\d x,\d y)+\int_{X\times Y}\bpsi(y)^{\top}\bmu\otimes \kappa(\d x,\d y)\\
    &\leq \int_{X\times Y}c(x,y)\eta \otimes \kappa(\d x,\d y).
\end{align*}

\section{MOT-SOT parity}\label{W}

Consider $\bmu\in\M(X )^d$ and $\bnu\in\M(Y )^d$ where for simplicity $X,Y$ are Euclidean spaces, and recall our notation  $\mub,\nub,\bmu',\bnu',\mm,\nn$ from \eqref{mubnub} and \eqref{mupnup}. By the disintegration theorem, there exist measures $\{\mu_{\bf z}\}_{{\bf z}\in\R_+^d}$ such that
$$\mu_{\bf z}(X \setminus A_{\bf z}):=\mu_{\bf z}\left(X \setminus (\bmu')^{-1}({\bf z})\right)=0$$
and for any Borel measurable function $f:X \to[0,\infty)$,
$$\int_X  f(x)\mub(\d x)=\int_{\R_+^d}\int_{A_{\bf z}}f(x)\mu_{\bf z}(\d x)\mm(\d {\bf z}).$$
Moreover, the family of measures $\{\mu_{\bf z}\}_{{\bf z}\in\R_+^d}$ is uniquely determined for $\mm$-a.s.~${\bf z}\in\R_+^d$. Similarly for ${\bf z'}\in\R_+^d$ we define $B_{\bf z'}\subseteq Y $ and a probability measure $\nu_{\bf z'}$ on $Y $.

We recall that given probability measures $\mu,\nu$, a coupling $(\xi_\mu,\xi_\nu)$ of $\mu,\nu$ is called a \emph{martingale transport} if $(\xi_\mu,\xi_\nu)$ forms a martingale, and we denote by $\M(\mu,\nu)$ the set of all such couplings, which can be further identified as stochastic kernels.

We have seen from Proposition \ref{prop:ssww} that the existence of a (balanced) simultaneous transport from $\bmu$ to $\bnu$ is equivalent to 
 $\mm\gcx\nn$. 
This naturally gives rise to a martingale transport from $\nn$ to $\mm$ in view of Strassen's theorem (\cite{S65}). Such a martingale transport, seen as a coupling, encodes the way we take combinations of the ($\R^d_+$-valued) derivatives $\bmu'$ to form the derivatives $\bnu'$. The martingale constraint is equivalent to the constraint of mixing $\bmu'$ to get $\bnu'$. Essentially, at this step, we do not ``distinguish" the points in $A_\bz$ since $\bmu'$ is constant there, but treat the set $A_{\bz}$ as a single point; the same applies to $B_\bz$.  Of course, such a martingale transport may not be unique (in fact, for two-way transports it is unique). This choice of a martingale transport can be seen
as the first layer of freedom for a simultaneous transport from $\bmu$ to $\bnu$. The second layer of freedom is how to  transport on each slice from  $A_\bz$ to $B_\bz$, where now we do not treat them as single points, but equip the measure $\mu_\bz,~\nu_{\bz}$ on them. In comparison, the martingale transport treats them as points, which can be regarded as an ``integrated version". In particular, this extends our results on the two-way transport. In some nice cases, the optimization problem reduces to classic optimization problems that admit explicit solutions.

In this section, we will often encounter transport 
from $(\rd\times[0,1],\mm\times \tau)$ to  $(\rd\times[0,1],\nn\times \tau)$. To simplify formulas, 
we will write $\kappa^x(\cdot)$ as $\kappa(x;\cdot)$ for a stochastic kernel $\kappa$, where $x$ often has two components. 

\subsection{Connecting MOT and SOT}
\label{sec:motsot}

Let $\tau$ be the Lebesgue measure on $[0,1]$ and write $\cR=\rd\times[0,1]$. 
The set of couplings between $(\cR ,\mm\times \tau)$ and  $(\cR ,\nn\times \tau)$ that are backward martingale in the first marginal is given by
$$\{((X,U),(X',U'))\in\widehat \Pi(\mm\times\tau,\nn\times\tau)\mid \E[X|(X',U')]= X'\},$$
where $\widehat \Pi$ is the set of random vectors having distributions in $ \Pi$. 
We disintegrate such couplings into stochastic kernels, and denote by $\S_{b,1}$ the corresponding collection of stochastic kernels. Formally,
$ \S_{b,1}$ is the set
\begin{align*}
     \Bigg\{\hat{\kappa}\in\K(\mm\times\tau,\nn\times\tau) \mid \int_{\cR }\hat{\kappa}^{(\bz,u)}(Z'\times V)\bz\tau(\d u)\mm(\d\bz)= \int_{Z'}\bz'\tau(V)\nn(\d\bz')~\forall Z'\times V\subseteq \cR \Bigg\}.
\end{align*}
For $\bz,\bz'\in\rd$, define also the sets of stochastic kernels
$$\K_{\bz}=\K(\mu_\bz,\delta_\bz\times\tau)\quad\text{ and }\quad\widetilde{\K}_{\bz'}=\K(\delta_{\bz'}\times\tau,\nu_{\bz'}).$$
Since $\tau$ is atomless, there exist kernels $\kappa_{\bz}\in \K_{\bz},~\widetilde{\kappa}_{\bz'}\in \widetilde{\K}_{\bz'}$ that are backward Monge and Monge, respectively. 


\begin{theorem}[MOT-SOT parity]
\label{thm:rep}
Suppose that $\bmu(X)=\bnu(Y)$. Fix arbitrary  collections of kernels $\kappa_{\bz}\in \K_{\bz},~\widetilde{\kappa}_{\bz'}\in \widetilde{\K}_{\bz'}$ indexed by $\bz,\bz'\in\rd$,   where $\kappa_{\bz}$ is backward Monge,  $\widetilde{\kappa}_{\bz'}$ is Monge, and $\bz\mapsto \kappa_{\bz}$ and $\bz'\mapsto\widetilde{\kappa}_{\bz'}$ are measurable.    Every $\kappa\in\K(\bmu,\bnu)$ can be represented as
\begin{align}
    \label{eq:rep}
\kappa^x(B)=\int_{\cR }\int_{[0,1]}\kappa_{\bmu'(x)}^x(\bmu'(x),\d u)\hat{\kappa}^{(\bmu'(x),u)}&(\d \bz',\d u')\widetilde{\kappa}_{\bz'}^{(\bz',u')}(B),~~x\in X,\,B\subseteq Y,
\end{align}
for some $\hat{\kappa}\in \S_{b,1}$.
 Conversely, given any  $\hat{\kappa}\in \S_{b,1}$, the equation \eqref{eq:rep} defines a simultaneous transport kernel $\kappa\in\K(\bmu,\bnu)$ from $\bmu$ to $\bnu$.
\end{theorem}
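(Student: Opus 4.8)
The plan is to prove the two directions of the MOT-SOT parity separately, exploiting the disintegration of $\mub$ and $\nub$ along the level sets of $\bmu'$ and $\bnu'$ respectively. Throughout, I will freely use the characterization of $\K(\bmu,\bnu)$ via $\mm \gcx \nn$ (Proposition \ref{prop:ssww}) and Strassen's theorem to know that $\S_{b,1}$ is non-empty, and the fact that $\tau$ being atomless guarantees the existence of the backward-Monge kernels $\kappa_\bz$ and Monge kernels $\widetilde\kappa_{\bz'}$.

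\medskip\noindent\textbf{The ``conversely'' direction (sufficiency).} Given $\hat\kappa\in\S_{b,1}$, I would first check that the right-hand side of \eqref{eq:rep} genuinely defines a stochastic kernel $\kappa$ from $X$ to $Y$: this is a routine measurability-and-normalization check, using the hypothesized joint measurability of $\bz\mapsto\kappa_\bz$, $\bz'\mapsto\widetilde\kappa_{\bz'}$, and the fact that composition (integration) of stochastic kernels again yields a stochastic kernel. Then I must verify $\kappa_\#\bmu = \bnu$, i.e.\ $\kappa_\#\mu_j = \nu_j$ for each $j\in[d]$, equivalently $\int_X \mu_j'(x)\,\kappa^x(B)\,\mub(\d x) = \nu_j(B)$ for all $B\subseteq Y$. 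The strategy is to unfold this integral through the disintegration $\int_X (\cdot)\,\mub(\d x) = \int_{\rd}\int_{A_\bz}(\cdot)\,\mu_\bz(\d x)\,\mm(\d\bz)$, noting that $\bmu'(x) = \bz$ is constant on $A_\bz$; the inner integral over $A_\bz$ against $\mu_\bz$ of $\kappa_\bz^x(\bz,\d u)$ collapses (by $\kappa_\bz\in\K(\mu_\bz,\delta_\bz\times\tau)$) to $\tau(\d u)$; next the $\hat\kappa$-integral against $\bz\,\tau(\d u)\,\mm(\d\bz)$ is rewritten using the defining backward-martingale constraint of $\S_{b,1}$ to produce $\bz'\,\tau(\d u')\,\nn(\d\bz')$; finally the $\widetilde\kappa_{\bz'}$-integral against $\delta_{\bz'}\times\tau$ yields $\nu_{\bz'}(B)$ by $\widetilde\kappa_{\bz'}\in\K(\delta_{\bz'}\times\tau,\nu_{\bz'})$, and reassembling via the disintegration of $\nub$ gives exactly $\nu_j(B)$. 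This is a bookkeeping computation once the right order of integration is fixed.

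\medskip\noindent\textbf{The ``every $\kappa$ can be represented'' direction (necessity).} This is the delicate part and I expect it to be the main obstacle. Given $\kappa\in\K(\bmu,\bnu)$, I need to manufacture a suitable $\hat\kappa\in\S_{b,1}$ so that \eqref{eq:rep} reproduces $\kappa$. The idea is that $\kappa$, pushed forward through $\bmu'$ on the source and $\bnu'$ on the target, induces a coupling of $\mm$ and $\nn$ which is automatically a (backward) martingale because the constraint $\kappa_\#\mu_j=\nu_j$ for all $j$ is precisely the statement that averaging $\bmu'$ over a $\kappa$-fiber returns $\bnu'$ --- this realizes the ``first layer of freedom''. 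The auxiliary $[0,1]$-coordinates are there to carry the ``second layer'': within each level set $A_\bz$ one uses the atomless structure (here $\kappa_\bz$ backward Monge gives a measure-isomorphism-type identification of $(A_\bz,\mu_\bz)$ with $([0,1],\tau)$) to encode where in $A_\bz$ the mass sat, and similarly $\widetilde\kappa_{\bz'}$ spreads the point-mass at $\bz'$ back out over $B_{\bz'}$ according to $\nu_{\bz'}$. Concretely I would: (i) disintegrate $\kappa$ itself, fiberwise over $\bz = \bmu'(x)$, against the target disintegration indexed by $\bz'=\bnu'(y)$, to extract for $\mm$-a.e.\ $\bz$ a kernel describing how mass from the $\bz$-fiber is distributed over target fibers $B_{\bz'}$ and within them; (ii) pre-compose with the (backward-Monge, hence a.s.\ invertible in the appropriate sense) maps $\kappa_\bz$ and post-compose with the right inverse of $\widetilde\kappa_{\bz'}$ to strip off the intra-fiber data, leaving a kernel on $\rd\times[0,1]$; (iii) check this kernel lies in $\S_{b,1}$ --- the backward-martingale identity follows from $\kappa\in\K(\bmu,\bnu)$ by testing against sets $Z'\times V$ and running the same disintegration computation as in the sufficiency direction in reverse; (iv) verify that substituting this $\hat\kappa$ into \eqref{eq:rep} recovers $\kappa$, which amounts to the identity $\kappa_\bz$ composed with its ``inverse'' acting as identity on $\mu_\bz$-a.e.\ point, a consequence of $\kappa_\bz$ being backward Monge and $\tau$ atomless.

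\medskip The technical heart --- and where I would be most careful --- is step (ii) above: making precise the ``inversion'' of the backward-Monge kernel $\kappa_\bz$ and the Monge kernel $\widetilde\kappa_{\bz'}$, and checking that all the constructions depend measurably on $\bz$ and $\bz'$ so that the disintegration theorem applies uniformly. The measurable-selection / measurable-dependence issues are exactly the kind of thing that is routine in the $d=1$ Monge-Kantorovich equivalence (cf.\ the proof of Theorem \ref{infs}) but needs the joint-non-atomicity machinery and Lemma \ref{lem:useful}-type statements to go through here; I would isolate these as a preliminary lemma rather than inline them. Everything else is the two symmetric unfolding-the-disintegration computations described above.
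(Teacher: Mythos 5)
Your proposal follows essentially the same route as the paper's proof: the sufficiency direction is the same disintegration bookkeeping, and for the necessity direction the paper likewise defines $\hat\kappa$ by composing the inverse $T_\bz$ of $\kappa_\bz$ with $\kappa$ and then with the backward kernel of $\widetilde\kappa_{\bz'}$, checks the marginal and backward-martingale identities to place it in $\S_{b,1}$, and recovers $\kappa$ using the Monge/backward-Monge invertibility exactly as you outline. The measurability concerns you flag are indeed addressed in the paper's surrounding remarks (items (i) and (ii) after Theorem~\ref{thm:rep}), so your instinct to isolate them as a preliminary lemma matches the paper's treatment.
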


\begin{remark} Write $f:\cR  \to \mathcal P(Y)$, $   (\bz',u') \mapsto \widetilde{\kappa}_{\bz'}^{(\bz',u')} $. 
Then \eqref{eq:rep} can be written as, for $x\in X$, 
\begin{align*}
 \kappa^x =
 \E[ f (\zeta, \xi)   ]  = \E\left [  \widetilde{\kappa}_{\zeta }^{(\zeta ,\xi)}\right]
\end{align*} 
where $(\zeta,\xi)  \lawis \int_0^1   \hat{\kappa}^{(\bmu'(x) ,u)}  \kappa_{\mathbf z}^x  (\d u) $.
\end{remark} 

Let us first explain the intuition.
The uniform measure $\tau$ on $[0,1]$ can be regarded as a parameterization space to keep the information of the space $(A_\bz,\mu_{\bz})$ when we map it to a single point on $\rd$.\footnote{When $\bmu'$ (resp.~$\bnu'$) is injective, we may remove the parameterization space on the $\mm$ (resp.~$\nn$) side.} It can be replaced by any atomless measure. The Monge property of $\kappa_\bz,\widetilde{\kappa}_{\bz'}$ will allow us to reconstruct the original simultaneous transport because it guarantees that no information is lost at the step where we encode $(A_\bz,\mu_{\bz})$ using a single point.\footnote{To see the Monge property is crucial, imagine we use independent couplings for both---it will not yield the set of all simultaneous transports.} Theorem \ref{thm:rep} says that the way we parameterize this information does not matter---it is possible to fix two collections of parameterizations a priori, as long as they have the Monge properties and are measurable.  See also Figure \ref{fig:commute} for a pictorial representation. 
\begin{figure}[t]
    \centering
    \begin{tikzcd}[row sep=2em, column sep = 6em]
X \arrow{r}{\kappa} \arrow[shift right, swap]{d}{x\mapsto(\bmu'(x),\kappa_{\bmu'(x)}^x)}& Y  \arrow[shift left,swap]{d}{(\bz',u')\mapsto S_{\bz'}(u')~\,}\\
(\cR ,\mm\times\tau) \arrow{r}{\hat{\kappa}} \arrow[shift right, swap]{u}{(\bz,u)\mapsto T_\bz(u)} & (\cR ,\nn\times\tau)\arrow[shift left,swap]{u}{~\, y\mapsto (\bnu'(y),\widetilde{\kappa}_{\bnu'(y)}^y)}
\end{tikzcd}

    \caption{MOT-SOT parity illustrated with commutative diagram: the double arrows connects the upper half (simultaneous transport) and lower half (martingale transport).  Note that the downward arrows are not  given by a map in general, which explains why  Monge transport may not always exist.}
    \label{fig:commute}
\end{figure}

We also need a few technical considerations. To see that Theorem \ref{thm:rep} actually makes sense, we need the following.
\begin{enumerate}[(i)]
    \item Existence of a measurable selection of $\{\kappa_\bz\}_{\bz\in\rd}$ and $\{\widetilde{\kappa}_{\bz'}\}_{\bz'\in\rd}$ satisfying the Monge properties. When $X ,Y $ are Euclidean spaces, this is guaranteed by a measurable selection of optimal plans (\cite[Corollary 5.22]{V09}) for   the quadratic cost, which are given by deterministic maps; see \cite{GM96}.\footnote{More precisely, we replace $\tau$ by $[0,1]^\ell$, where $\ell$ is the larger dimension of $X$ and $Y$. As commented above, this will not affect the result.} A sufficient condition for Polish spaces is given by \cite[Theorem 5.30]{V09}.
    \item Joint measurability of $\widetilde{\kappa}_{\bz'}^{(\bz',u')}(B)$ in $(\bz',u')$ (so that the integral in \eqref{eq:rep} makes sense).  To see this, denote the transport plan corresponding to $\widetilde{\kappa}_{\bz'}$ by $\widetilde{\pi}_{{\bz'}}$, which is a probability measure on $\{\bz'\}\times[0,1]\times B_{\bz'}$. Define $\widetilde{\pi}=\int\widetilde{\pi}_{\bz'}\nn(\d \bz')$, which is a probability measure on $\cR \times Y $ and is well-defined by (i). Next, disintegrate $\widetilde{\pi}$ in the first two coordinates, $(\bz',u')$ to get a family of measures $\{\widetilde{\kappa}_{\bz',u'}(\cdot)\}$ that is jointly measurable in $(\bz',u')$. By uniqueness of disintegration and since $B_{\bz'},\,\bz'\in\rd$ are disjoint, we must have $\widetilde{\kappa}_{\bz',u'}(\cdot)=\widetilde{\kappa}_{\bz'}^{(\bz',u')}(\cdot)$.
\end{enumerate}

\begin{example}
Let $X =Y =\{0,1\}$ and $\bmu(\{0\})=(1/3,2/3)$, $\bmu(\{1\})=(2/3,1/3)$, $\bnu(\{0\})=(1/3,1/3)$, and $\bnu(\{1\})=(2/3,2/3)$. We have $\mm=(\delta_{(4/3,2/3)}+\delta_{(2/3,4/3)})/2$ and $\nn=\delta_{(1,1)}$. The  backward  martingale transport from $\mm$ to $\nn$ is unique. Therefore by Theorem \ref{thm:rep}, the simultaneous transport from $\bmu$ to $\bnu$ is unique, and we can easily check that  it is given by $\kappa(0,\{0\})=\kappa(1,\{0\})=1/3$ and $\kappa(0,\{1\})=\kappa(1,\{1\})=2/3.$

    
\end{example}


    

    

A first immediate consequence is the following commutative relation. This can be seen as a special case of a more general commutative relation illustrated by Figure  \ref{fig:commute} below.

\begin{corollary}\label{coro:commute}
    Let $\bmu\in\P(X)^d$ and $\bnu \in \P(Y)^d$ satisfy that $\K(\bmu,\bnu)$ is non-empty.  Suppose that $\bmu$ is jointly atomless and $\mm$ is atomless. Then there exists a backward martingale coupling between $\mm$ and $\nn$ that is also Monge.\footnote{A coupling $(\xi_1,\xi_2)$ is \emph{backward martingale} if $\E[\xi_1|\xi_2]=\xi_2$, that is, $(\xi_2,\xi_1)$ forms a martingale.} Moreover, if we denote by $h$ the map that induces this Monge transport, then there exists a simultaneous transport map $f\in\T(\bmu,\bnu)$ satisfying
$$\bnu'(f(x))=h(\bmu'(x)),~x\in X.$$
\end{corollary}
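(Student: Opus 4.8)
The plan is to extract Corollary \ref{coro:commute} as a direct consequence of the MOT-SOT parity (Theorem \ref{thm:rep}) together with Strassen's theorem and the measurable-selection machinery already set up before the statement. First I would observe that since $\K(\bmu,\bnu)$ is non-empty and $\bmu(X)=\bnu(Y)$ (both tuples being probability vectors here), Proposition \ref{prop:ssww} gives $\mm\gcx\nn$, and hence by Strassen's theorem there is a (backward) martingale coupling between $\mm$ and $\nn$, i.e.\ $\S_{b,1}$ is non-empty. The real content is to upgrade this to a \emph{Monge} backward martingale coupling. Here I would invoke that $\mm$ is atomless: an atomless source measure together with the existence of \emph{some} coupling with the martingale constraint lets one apply the selection/Monge-realization results cited in item (i) after Theorem \ref{thm:rep} (the measurable selection of optimal quadratic-cost maps of \cite{GM96}, or \cite[Theorem 5.30]{V09}), combined with the standard fact that on an atomless space any kernel can be realized by a map. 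Concretely I would run the quadratic-cost MOT between $\mm$ and $\nn$ (reversed so the martingale is in the first marginal): by atomlessness of $\mm$ the optimizer is induced by a map $h$, and it lies in $\S_{b,1}$ by construction, giving the first assertion.

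Next I would feed this Monge $\hat\kappa\in\S_{b,1}$ into the representation \eqref{eq:rep} of Theorem \ref{thm:rep}. Because $\bmu$ is jointly atomless, Proposition \ref{prop:ssww}(ii) (and the Monge-selection discussion) lets me choose the slice kernels $\kappa_{\bz}\in\K_{\bz}$ and $\widetilde\kappa_{\bz'}\in\widetilde\K_{\bz'}$ to be backward Monge and Monge respectively, and measurable in $\bz,\bz'$. Plugging these three Monge ingredients into \eqref{eq:rep}, every step in the composition $X\to(\cR,\mm\times\tau)\xrightarrow{\hat\kappa}(\cR,\nn\times\tau)\to Y$ is a deterministic map, so the resulting $\kappa\in\K(\bmu,\bnu)$ is itself a Monge transport, i.e.\ induced by some $f\in\T(\bmu,\bnu)$; this is exactly the commutative diagram of Figure \ref{fig:commute} read along deterministic arrows. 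It then remains to read off the identity $\bnu'(f(x))=h(\bmu'(x))$: tracing a point $x$ through the diagram, the first arrow records $\bmu'(x)$, the map $h$ sends it to a point $\bz'\in\rd$, the slice map $\widetilde\kappa_{\bz'}$ lands inside $B_{\bz'}=(\bnu')^{-1}(\bz')$, so $\bnu'(f(x))=\bz'=h(\bmu'(x))$, using that $\bnu'$ is constant equal to $\bz'$ on $B_{\bz'}$.

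I would present the argument in this order: (1) $\mm\gcx\nn$ and non-emptiness of $\S_{b,1}$; (2) atomlessness of $\mm$ plus the quadratic-cost MOT selection yields a \emph{Monge} backward martingale coupling with inducing map $h$; (3) joint atomlessness of $\bmu$ yields Monge slice selections; (4) apply Theorem \ref{thm:rep} to assemble $f$ and verify $f\in\T(\bmu,\bnu)$ together with the level-set identity.

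The main obstacle I anticipate is step (2): carefully justifying that atomlessness of $\mm$ alone (without any regularity on $\nn$) guarantees the backward martingale optimal transport between $\mm$ and $\nn$ is attained by a map. One has to be a little careful that reversing the roles (the martingale constraint is on the first, i.e.\ \emph{source}, marginal, which is the non-standard direction) still falls under the cited selection theorems, and that the quadratic cost indeed forces Monge solutions in this multivariate setting; the cleanest route is to note that $\S_{b,1}$ is a non-empty, convex, weakly compact set of kernels, that minimizing a strictly convex functional over it gives a unique optimizer, and that on the atomless space $(\rd,\mm)$ this optimizer's disintegration is degenerate (a Dirac kernel) by the same Gangbo–McCann-type argument invoked in item (i) after Theorem \ref{thm:rep}. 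A secondary, more routine obstacle is the joint measurability of all the kernels being composed in \eqref{eq:rep}, but this is already handled by item (ii) following Theorem \ref{thm:rep}, so I would simply cite it.
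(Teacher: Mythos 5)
Your steps (1), (3), (4), (5) — invoking $\mm\gcx\nn$, choosing Monge slice kernels $\kappa_{\bz},\widetilde\kappa_{\bz'}$ using joint atomlessness of $\bmu$, feeding the Monge $\hat\kappa$ into \eqref{eq:rep}, and reading off $\bnu'(f(x))=h(\bmu'(x))$ from the commutative diagram — match the paper's proof. The proof in the paper also factors through \eqref{eq:rep2}, which is exactly the specialization of \eqref{eq:rep} you describe (identity on the $[0,1]$-coordinate, $h$ on the $\rd$-coordinate), and the level-set argument for the final identity is the same.

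The genuine gap is in your step (2), which you correctly flag as the main obstacle, but the resolution you sketch does not hold up. The existence of a \emph{Monge} backward martingale coupling from $\mm$ to $\nn$ when $\mm$ is atomless is precisely Theorem 2.1 of \cite{NWZ22}, and the paper cites that result; it is a nontrivial theorem, not a consequence of the selection machinery set up around Theorem \ref{thm:rep}. Your proposed shortcut — "run the quadratic-cost MOT", then appeal to a "Gangbo--McCann-type argument" to get a Dirac disintegration — fails for at least three reasons. First, over the set of (backward) martingale couplings the quadratic cost $\int|x-y|^2\,\d\pi$ is \emph{constant} (the martingale constraint fixes $\E[XY]$), so it does not select anything; the paper itself notes this degeneracy (it is the observation behind Example \ref{xy}). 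Second, $\pi\mapsto\int c\,\d\pi$ is linear, not strictly convex, so "minimizing a strictly convex functional over a convex compact set yields a unique optimizer" does not describe the situation; even for a genuinely convex cost like $|x-y|^p$, $p>2$, the Brenier/Gangbo--McCann cyclical-monotonicity characterization is not available once the martingale constraint is imposed, and the optimizer of a martingale OT problem is typically \emph{not} Monge (e.g.\ the left-curtain coupling). Third, "on an atomless space any kernel can be realized by a map" is a statement about unconstrained transports: gluing a kernel into a map with the same marginals destroys the joint law, hence in general destroys the backward martingale property. The correct route really does require the dedicated result of \cite{NWZ22}; short of re-proving it, the argument cannot be closed with the ingredients you list.
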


Finally, we mention a more general version of Theorem \ref{thm:rep} for the unbalanced setting. Apart from the constraints $\kappa\in\K(\bmu,\bnu)$ we add an extra constraint that $\kappa\in\K(\mub,\nub)$ (which is automatically satisfied in the balanced case; see \eqref{step}), i.e., define 
$$\widetilde{\K}(\bmu,\bnu)=\{\kappa\in\K(\bmu,\bnu)\mid \kappa\in\K(\mub,\nub)\}.$$
There exists a parity relation between $\widetilde{\K}(\bmu,\bnu)$ and the set of stochastic kernels from $(\cR ,\mm\times \tau)$ to $(\cR ,\nn\times \tau)$ that is backward \emph{submartingale}  in the first marginal. The proof is very similar to the proof of Theorem \ref{thm:rep} and we omit the details.

\subsection{Optimality of simultaneous transport and examples}\label{sec:sotmot2}
In this section, for simplicity we will keep the reference measure $\eta=\mub$. The general case $\eta\ll\mub$ follows by modifying the cost function $c(x,y)$.

Suppose that we are given $\kappa_{\bz}\in \K_{\bz},~\widetilde{\kappa}_{\bz'}\in \widetilde{\K}_{\bz'},~\bz,\bz'\in\rd$ measurable as in the setup of Theorem \ref{thm:rep}. For $\kappa\in\K(\bmu,\bnu)$ with representation \eqref{eq:rep}, let us compute the cost of the associated simultaneous transport:
\begin{align}\C(\kappa) &=\int_{X \times Y }c(x,y)\kappa^x(\d y)\mub(\d x)\nonumber\\
    &=\int_{\rd}\int_{A_\bz}\int_Y  c(x,y)\kappa^x(\d y)\mu_{\bz}(\d x)\mm(\d \bz)\nonumber\\
    &=\int_{\rd}\int_{A_\bz}\int_Y  c(x,y)\int_{\cR }\int_{[0,1]}\kappa_\bz^x(\{\bz\}\times \d u)\hat{\kappa}^{(\bz,u)}(\d \bz',\d u') \widetilde{\kappa}_{\bz'}^{(\bz',u')}(\d y )\mu_{\bz}(\d x)\mm(\d \bz)\nonumber\\
    &=\int_{\cR }\int_{\cR }\hat{\kappa}^{(\bz,u)}(\d \bz',\d u') \left(\int_{A_\bz}\int_Y  c(x,y)\kappa_\bz^x(\{\bz\}\times \d u)\widetilde{\kappa}_{\bz'}^{(\bz',u')}(\d y )\mu_{\bz}(\d x)\right)\mm(\d \bz)\nonumber\\
    &=\int_{\cR }\int_{\cR }\hat{\kappa}^{(\bz,u)}(\d \bz',\d u')\left(\int_{A_\bz}\int_{B_{\bz'}} c(x,y)\kappa_\bz^x(\{\bz\}\times \d u)\widetilde{\kappa}_{\bz'}^{(\bz',u')}(\d y)\mu_{\bz}(\d x)\right)\mm(\d \bz),\label{eq:costafterrep}
\end{align}
where the third equality follows since   $ \bmu'(x)=\bz$ for  $x\in A_\bz$. 
The infimum cost $\inf\C(\kappa)$ may be computed by taking an infimum  over  all $\hat{\kappa}$ while fixing $\kappa_{\bz}\in \K_{\bz}$ and $\widetilde{\kappa}_{\bz'}\in \widetilde{\K}_{\bz'}$ for $\bz,\bz'\in\rd$. Alternatively, taking an infimum  over all $\kappa_\bz,~\widetilde{\kappa}_{\bz'}$,  and $\hat{\kappa}$ leads to the same value. 

Note that the measure
$$\gamma_{\bz,\bz',u'}(V):=\int_{A_\bz}\int_{B_{\bz'}} c(x,y)\kappa_\bz^x(\{\bz\}\times V)\widetilde{\kappa}_{\bz'}^{(\bz',u')}(\d y)\mu_{\bz}(\d x)$$
satisfies $\gamma_{\bz,\bz',u'}\ll \tau$ for all $\bz,\bz'\in\rd\text{ and }u'\in[0,1]$. By Theorem 58 in \cite{DM11}, there exists
 a jointly measurable  function $\hat{c}:\cR^2\to\R$ such that $$\int_{A_\bz}\int_{B_{\bz'}} c(x,y)\kappa_\bz^x(\{\bz\}\times \d u)\widetilde{\kappa}_{\bz'}^{(\bz',u')}(\d y)\mu_{\bz}(\d x)=\hat{c}(\bz,u,\bz',u')\tau(\d u).$$
We are then left with
\begin{align*}
    \inf_{\kappa\in\K(\bmu,\bnu)}\C(\kappa)    & =\inf_{\hat{\kappa}\in\S_{b,1}}\int_{\cR }\int_{\cR }\hat{\kappa}^{(\bz,u)}(\d \bz',\d u')\hat{c}(\bz,u,\bz',u')\mm\times\tau(\d \bz,\d u)
\\& = \inf \E\left [\hat c (Z,U; Z',U')  \right ]
\end{align*}
where the last infimum is taken over all possible couplings  $(Z,U;Z',U')\lawis  (\mm\times \tau)\otimes \hat \kappa$ where $\E[Z|(Z',U')]=Z'$.
This becomes an optimal transport problem on ``backward martingale over the first marginal" in  $\R^{d+1}$. In fact, we may reduce the dimension to $(d-1)+1$, simply because the Radon--Nikodym derivatives sum up to a constant.   The connection to MOT also explains some bizarre behaviors of SOT. For example, in Example \ref{xy}, the transport cost is a constant if $\bmu$ is linear and the cost function is quadratic. This stems from the well-known fact that the quadratic cost is trivial for MOT. We next discuss a few special classes and explicitly solvable examples below.

\begin{example}
When $\bmu'$ is injective and $Y =\R$, we may pick $\kappa_{\bz}$ to map the point $(\bmu')^{-1}(\bz)$ to $\tau$ and $\widetilde{\kappa}_{\bz'}$ the comonotone coupling between $\nu_{\bz'}$ and $\tau$. We arrive at
$$\hat{c}(\bz,u,\bz',u')=\int_{B_{\bz'}} c((\bmu')^{-1}(\bz),y)\widetilde{\kappa}_{\bz'}^{u'}(\d y)=c((\bmu')^{-1}(\bz),F^\leftarrow_{\bz'}(u')),$$where $F^\leftarrow_{\bz'}$ is the left-continuous inverse of the measure $\nu_{\bz'}$.
\end{example}

\begin{example}\label{4}
Assume that $c(x,y)$ depends only on $\bmu'(x)$ and $\bnu'(y)$ (for instance, when both $\bmu'$ and $\bnu'$ are injective), then with $\hat{c}(\bmu'(x),\bnu'(y))=c(x,y)$, we have  for any $V\subseteq [0,1]$,
\begin{align*}
    \int_{A_\bz}\int_{B_{\bz'}} &c(x,y)\kappa_\bz^x(\{\bz\}\times V)\widetilde{\kappa}_{\bz'}^{(\bz',u')}(\d y)\mu_{\bz}(\d x)\\
    &=\hat{c}(\bz,\bz')\int_{A_\bz}\int_{B_{\bz'}} \kappa_\bz^x(\{\bz\}\times V)\widetilde{\kappa}_{\bz'}^{(\bz',u')}(\d y)\mu_{\bz}(\d x)=\hat{c}(\bz,\bz')\tau(V).
\end{align*}
This yields
\begin{align}\inf_{\kappa\in\K(\bmu,\bnu)}\C(\kappa)=\inf_{\hat{\kappa}\in\M(\nn,\mm)}\int_{\rd}\int_{\rd}\hat{\kappa}^{\bz'}(\d \bz)\hat{c}(\bz,\bz')\nn(\d\bz').\label{eq:inj~mot}\end{align}Intuitively, in the optimal transport problem, we may remove the extra dimension where $\tau$ is supported, because the cost function depends only on the $\rd$-coordinate.  In particular,  \eqref{eq:inj~mot} is now equivalent to an MOT problem. 

In this setting, it is possible to recover an optimal simultaneous transport from an optimal martingale coupling $\pi$: let $\hat{\kappa}^{(\bz,u)}$ follow $\pi$ in the first coordinate and identity in the second, then the kernel $\kappa$ defined in \eqref{eq:rep} gives an optimal simultaneous transport.
\end{example}

An immediate consequence of Example \ref{4} is that MOT on compact Euclidean spaces can be realized as a special case of SOT. This is called the \emph{MOT-SOT parity} as suggested by the title of this section.

\begin{example}
    We may strengthen Theorem \ref{thm:rep} when both $\bmu'(x)$ and $\bnu'(y)$ are injective, as follows. It can be checked that the same result goes through if we remove our parameterization space $([0,1],\tau)$. In other words, there is a correspondence between simultaneous transport and backward martingale transport (on $\R^d$). Using Theorem 2.1 of \cite{NWZ22}, we thus obtain a stronger version of Proposition \ref{prop:ssww}, that a Monge simultaneous transport exists when $\mm$ is atomless.\footnote{This is also true if we only assume $\bnu'$ is injective, by removing the parameterization space on the $\nn$ side.}   If moreover $c$, $(\bmu')^{-1}$, and $(\bnu')^{-1}$ are  continuous and $c$ is bounded, we conclude using Corollary 2.4 of \cite{NWZ22} the equivalence between Monge and Kantorovich costs, complementing Theorem \ref{infs}. 
\end{example}

\begin{remark}\label{sup?}
    Recall that the supremum in the MOT duality formula may not always be attained. Indeed, a simple  counterexample is given by \cite{BNT17}, Example 8.2. By the MOT-SOT parity, the supremum in our SOT duality formula \eqref{dua} is not always attained either. For details, see the dual MOT-SOT parity discussions in Section \ref{dualparity}. A direct construction of a counterexample can also be found in \cite{G20}.
\end{remark}

\begin{example}
 Consider $\bmu'$ taking values on only two points, say $\bz_1$ and $\bz_2$ (so that $\bnu'$ takes values only on the line segment joining these two points). In this case, there is a further decomposition of the transport $\hat{\kappa}\in\M_{b,1}$, into a collection of independent transports from $([0,1],\tau)$ to $([0,1],\tau)$. This is because any such $\hat{\kappa}$ must transport a positive fraction of the measure on $\bz'\times[0,1]$ to $\bz_1\times[0,1]$ and the rest to $\bz_2\times[0,1]$. Further solutions are available if $\hat{c}$ is nicely behaved (however, this might be difficult to achieve in general).
In the special case where $\bmu'$ is also injective,  every simultaneous transport has the same cost.
\end{example}

\begin{example}
Let $d=2$ and consider measures $\bmu,\bnu$ on $\R$ such that $\d\mu_1/\d\mub(x)$ and $\d\nu_1/\d\nub(y)$ are affine in $x,y$ respectively with positive slopes  (the cases with negative slopes are analogous). 
Assume that $c(x,y)=h(x-y)$ for some differentiable $h$ with $h'$ strictly convex, and such that $|c(x,y)|\leq a(x)+b(y)$ for some $a\in L^1(\mm),~b\in L^1(\nn)$.
This SOT problem is then reduced to a MOT problem on $\R$, in the form of \eqref{eq:inj~mot}, with a martingale Spence-Mirrlees cost function.
Using Theorem 1.7 in \cite{BJ16}, the MOT problem \eqref{eq:inj~mot} is uniquely  solved by the \emph{left-curtain transport} from $\nn$ to $\mm$
This coupling uniquely induces a simultaneous transport from $\bmu$ to $\bnu$ since $\bmu',\bnu'$ are injective.
This easily generalizes to when $f_1:=\d\mu_1/\d\mub$ and $g_1:=\d\nu_1/\d\nub$ not being linear. For example, assuming $f_1'',f_1',g_1',c_{xxy}$ all being positive suffices.
\end{example}

\subsection{Decomposition of   two-way transport}\label{61}

Define an equivalence relation $\simeq$ among $\R^d_+$-valued probability measures as follows: $\bmu\simeq\bnu$ if $\mm=\nn$  (or equivalently, both $\Pi(\bmu,\bnu)$ and $\Pi(\bnu,\bmu)$
are non-empty). 
For $P\in\P(\R^d)$, we define $$\cE_P=\{\bmu\in\Pi(X)^d \mid \mm=P\},$$ the equivalence class under $\simeq$.
The transitivity of $\simeq$ follows from Corollary \ref{comp}.  
 We also define the minimum transport cost between $\bmu$ and $\bnu$ as
$$
\I_c(\bmu,\bnu):=
\inf_{\pi\in\Pi(\bmu,\bnu)}\int_{X\times Y}c(x,y)\pi(\d x,\d y)=\inf_{\kappa\in\K(\bmu,\bnu)}\C(\kappa).$$

\begin{theorem}[Decomposition of  two-way  transport]\label{2way}
 Suppose that $c$ is continuous. For $\bmu,\bnu\in \cE_P$ and $\kappa\in\K(\bmu,\bnu)$, we have $\kappa\in\K(\mu_{\bz},\nu_{\bz})$ for $P$-a.e.~$\bz$. Moreover, the following are equivalent:
\begin{enumerate}[(i)]
    \item $\kappa$ is an optimal transport from $\bmu$ to $\bnu$;
    \item $\kappa$ is an optimal transport from $\mu_{\bz}$ to $\nu_{\bz}$ for $P$-a.s.~$\bz$;
    \item we have 
    \begin{align}\C(\kappa)=\int_{\R_+^d}\I_c(\mu_{\bz},\nu_{\bz}) P (\d {\bf z}).\label{2ic}\end{align}
\end{enumerate}
In particular,
$$\I_c(\bmu,\bnu) =\int_{\R_+^d}\I_c(\mu_{\bz},\nu_{\bz}) P (\d {\bf z}).$$ 
\end{theorem}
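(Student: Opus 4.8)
\textbf{Proof proposal for Theorem \ref{2way}.}

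The plan is to exploit the two-way structure, i.e.\ $\mm=\nn=P$, which forces the MOT component of the MOT-SOT parity (Theorem \ref{thm:rep}) to be \emph{degenerate}: since a backward martingale coupling between $P$ and $P$ must be the identity coupling, the only freedom left in \eqref{eq:rep} is how mass is transported on each slice from $(A_\bz,\mu_\bz)$ to $(B_\bz,\nu_\bz)$. Concretely, when $\bmu,\bnu\in\cE_P$, the set $\S_{b,1}$ of admissible $\hat\kappa$ collapses: any $\hat\kappa\in\S_{b,1}$ satisfies $\E[Z\mid (Z',U')]=Z'$ with $Z,Z'\sim P$, hence $Z=Z'$ $P$-a.s., so $\hat\kappa$ transports the fibre $\{\bz\}\times[0,1]$ into itself. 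First I would record this fact rigorously (it is essentially Strassen plus the fact that a martingale with constant marginal law in the convex order sense is a.s.\ constant), and deduce that for $\kappa\in\K(\bmu,\bnu)$ the representation \eqref{eq:rep} only couples points within the same fibre; translating back through the Monge maps $\kappa_\bz,\widetilde\kappa_\bz$ this gives $\kappa\in\K(\mu_\bz,\nu_\bz)$ for $P$-a.e.\ $\bz$. Alternatively one can argue this directly: apply the disintegration identity $\int_X f\,\d\mub=\int_{\rd}\int_{A_\bz} f\,\d\mu_\bz\,\mm(\d\bz)$ to the marginal constraints $\kappa_\#\mu_j=\nu_j$ and use that $\bmu'\equiv\bz$ on $A_\bz$, $\bnu'\equiv\bz$ on $B_\bz$, together with $\mm=\nn$, to conclude that $\kappa$ cannot move mass from $A_\bz$ to $B_{\bz'}$ for $\bz\neq\bz'$.

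Given the slicewise membership $\kappa\in\K(\mu_\bz,\nu_\bz)$, the cost decomposes additively over fibres: starting from \eqref{eq:costafterrep} with the degenerate $\hat\kappa$ (identity in the first coordinate) one gets
\begin{align*}
\C(\kappa)=\int_{\rd}\left(\int_{A_\bz}\int_{B_\bz} c(x,y)\,\kappa^x(\d y)\,\mu_\bz(\d x)\right)\mm(\d\bz)=\int_{\rd}\C_{\mu_\bz}(\kappa)\,P(\d\bz),
\end{align*}
where $\C_{\mu_\bz}(\kappa)$ denotes the transport cost of $\kappa$ viewed as a transport from $\mu_\bz$ to $\nu_\bz$ (with reference measure $\mu_\bz$, after normalization). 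Since $\I_c(\mu_\bz,\nu_\bz)\le\C_{\mu_\bz}(\kappa)$ pointwise in $\bz$, integrating gives $\int\I_c(\mu_\bz,\nu_\bz)\,P(\d\bz)\le\C(\kappa)$ for every $\kappa\in\K(\bmu,\bnu)$, hence $\int\I_c(\mu_\bz,\nu_\bz)\,P(\d\bz)\le\I_c(\bmu,\bnu)$. For the reverse inequality I would take a measurable family $\bz\mapsto\kappa_\bz^\ast$ of (near-)optimal transports from $\mu_\bz$ to $\nu_\bz$ --- invoking a measurable selection theorem for optimal plans as in \cite[Corollary 5.22]{V09}, together with continuity of $c$ to ensure the value $\I_c(\mu_\bz,\nu_\bz)$ is measurable in $\bz$ and the selection exists --- and glue them into $\kappa^\ast(x;\cdot):=\kappa_{\bmu'(x)}^{\ast,x}(\cdot)$; one checks $\kappa^\ast\in\K(\bmu,\bnu)$ via the disintegration identity and $\C(\kappa^\ast)=\int\I_c(\mu_\bz,\nu_\bz)\,P(\d\bz)$. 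This proves the displayed identity $\I_c(\bmu,\bnu)=\int\I_c(\mu_\bz,\nu_\bz)\,P(\d\bz)$, and the same computation shows the chain of equivalences (i)$\Leftrightarrow$(ii)$\Leftrightarrow$(iii): (ii)$\Rightarrow$(iii) and (iii)$\Rightarrow$(i) are immediate from the additive decomposition, while (i)$\Rightarrow$(ii) follows because if $\C_{\mu_\bz}(\kappa)>\I_c(\mu_\bz,\nu_\bz)$ on a set of positive $P$-measure, one could strictly improve $\kappa$ on that set (using the measurable selection again) contradicting optimality.

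The main obstacle I anticipate is the measurability bookkeeping: ensuring that $\bz\mapsto\I_c(\mu_\bz,\nu_\bz)$ is measurable, that a measurable selection of (near-)optimal slice transports exists, and that the glued kernel $\kappa^\ast$ is a genuine stochastic kernel in $\K(\bmu,\bnu)$. Continuity of $c$ is used precisely here (lower semicontinuity of $\kappa\mapsto\C_{\mu_\bz}(\kappa)$ and compactness/tightness of the relevant sets of plans, or the cited selection results), and one must also check the disintegration $\{\mu_\bz\},\{\nu_\bz\}$ behaves well enough --- uniqueness $P$-a.s.\ is already recorded in Section \ref{W}. A secondary, more conceptual point worth stating carefully is the degeneracy claim $\hat\kappa=\mathrm{id}$: I would phrase it as ``a backward martingale coupling of $P$ with itself is concentrated on the diagonal,'' which follows since $\E[\,|Z-Z'|^2\mid Z'\,]=\E[|Z|^2\mid Z']-|Z'|^2$ integrates to $\E|Z|^2-\E|Z'|^2=0$ when $Z\dd Z'$ (assuming a second moment; in general truncate or use that the convex order $P\lcx P$ is an equality, forcing a.s.\ constancy of the martingale). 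Everything else is routine given Theorem \ref{thm:rep} and the disintegration formalism already set up.
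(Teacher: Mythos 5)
Your proposal is correct and follows essentially the same route as the paper's proof: both exploit the degeneracy of the backward martingale component (the only martingale coupling of $P$ with itself in convex order is the identity), deduce the fibrewise decomposition of the cost from the representation in Theorem \ref{thm:rep}, obtain the lower bound by integrating the pointwise inequality $\I_c(\mu_{\bz},\nu_{\bz})\le\C_{\mu_{\bz}}(\kappa)$, and close the gap with a measurable selection of optimal slice plans via \cite[Corollary 5.22]{V09}. The one place you go slightly beyond the paper is in explicitly justifying the degeneracy of $\hat\kappa$ (the variance/truncation argument); the paper simply asserts that the martingale transport is unique, so your added justification is a welcome clarification rather than a deviation.
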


\begin{remark}
That the right-hand side of \eqref{2ic} is indeed well-defined will be discussed in the proof using a measure selection argument.
\end{remark}

\begin{remark}
    If $\bmu,\bnu\in\cE_P$ and $\bmu'$ is injective, the proof of Theorem \ref{2way} also indicates that $\K(\bmu,\bnu)$ consists of a single element, i.e., the simultaneous transport from $\bmu$ to $\bnu$ is unique.  
\end{remark}

A few comments are in place. Roughly speaking, two-way transports exist if and only if  each transport from $\bmu$ to $\bnu$ (provided it exists) can be inverted to produce a transport from $\bnu$ to $\bmu$. This inversion is not in general possible, because multiple points with different Radon--Nikodym derivatives may be transported to the same point in the destination, while any inversion transports back with the same Radon--Nikodym derivative as the destination point; see Figure \ref{fig:2} (a).

If both $X,Y$ are discrete, the two-way transports exist if and only if for each $\bz\in\R_+^d$,
\begin{align}\sum_{x\in X:\ \bmu'(x)=\bz}\mub(\{x\})=\sum_{y\in Y:\ \bnu'(y)=\bz}\nub(\{y\}).\label{2xy}\end{align}

 Theorem \ref{2way} provides us with an explicit expression of the minimum cost $\I_c(\bmu,\bnu)$. Intuitively, it amounts to optimizing a (possibly infinite) collection of individual classic transport problems with the same cost function. For the important case of a convex cost, i.e., $c(x,y) = h( y-x )$ with $h$ strictly convex,  existing techniques can be applied to solve these individual problems; see \cite{GM96}. 
In the special case where $X=Y=\R$ and 
$c$ is continuous and strictly submodular\footnote{A function $c$ on $X\times Y$ is submodular  if 
$c(x,y) + c(x',y') \le c(x,y') + c(x',y)$ whenever $x\le x'$ and $y\le y'$. 
It is strictly submodular if the above inequality is strict as soon as $(x,y)\ne (x',y')$. An example of a (strictly) submodular function on $\R^2$ is $(x,y)\mapsto h(y-x)$ for a  (strictly) convex $h$.
} on $\R^2$, this transport problem is uniquely optimized by taking comonotone transport plans from $\mu_{\bf z}$ to $\nu_{\bf z}$ for each ${\bf z}\in\R_+^d$ by the Fr\'{e}chet-Hoeffding theorem. We summarize this in the following corollary.

\begin{corollary}\label{convex}
Suppose that  $X=Y=\R$, $\bmu,\bnu\in\mathcal E_P$ and $c$ is continuous and submodular. Then
$$\I_c(\bmu,\bnu)=\int_{\R_+^d}\int_0^1 c\left(F^{-1}_{\bz}(t),G^{-1}_{\bz}(t)\right)\d t\, P(\d {\bf z}),$$
where $F^{-1}_{\bz},G^{-1}_{\bz}$ are the distribution functions of $\mu_{\bz},\nu_{\bz}$ respectively.
\end{corollary}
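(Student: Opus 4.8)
\textbf{Proof proposal for Corollary \ref{convex}.}

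The plan is to derive this directly from Theorem \ref{2way}, which already gives
$$\I_c(\bmu,\bnu) = \int_{\R_+^d} \I_c(\mu_{\bz},\nu_{\bz})\, P(\d\bz),$$
so the only work is to identify $\I_c(\mu_{\bz},\nu_{\bz})$ for $P$-a.e.\ $\bz$ when $X=Y=\R$ and $c$ is continuous and submodular. First I would recall that $\mu_{\bz}$ and $\nu_{\bz}$ are probability measures on $\R$ (the disintegration slices, which are probability measures by construction since $\mub,\nub$ are probabilities and by the balancedness $\mm=\nn=P$). Since these are one-dimensional marginals, the set $\K(\mu_{\bz},\nu_{\bz})$ — in the classic $d=1$ sense — is just the set of couplings of $\mu_{\bz}$ and $\nu_{\bz}$, which is always nonempty, and the classic Monge--Kantorovich problem on $\R$ with a submodular (equivalently, by the standard terminology, with $c$ such that $-c$ is supermodular) cost is solved by the comonotone (Hoeffding--Fr\'echet) coupling. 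Concretely, I would invoke the Fr\'echet--Hoeffding theorem (see e.g.\ \cite{S15,V03}): for a submodular continuous cost $c$ on $\R^2$, the optimal coupling of any two probability measures on $\R$ is the quantile coupling $t\mapsto (F_{\bz}^{-1}(t),G_{\bz}^{-1}(t))$ for $t$ uniform on $[0,1]$, giving
$$\I_c(\mu_{\bz},\nu_{\bz}) = \int_0^1 c\bigl(F_{\bz}^{-1}(t),G_{\bz}^{-1}(t)\bigr)\,\d t.$$

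Then I would substitute this expression into the integral formula from Theorem \ref{2way}, yielding exactly the claimed identity
$$\I_c(\bmu,\bnu) = \int_{\R_+^d}\int_0^1 c\bigl(F_{\bz}^{-1}(t),G_{\bz}^{-1}(t)\bigr)\,\d t\, P(\d\bz).$$
A small bookkeeping point is that the statement uses the notation $F^{-1}_{\bz},G^{-1}_{\bz}$ for what are more precisely the (generalized, left-continuous) quantile functions of $\mu_{\bz},\nu_{\bz}$; since the paper's phrasing already calls them ``distribution functions'' of these measures, I would simply adopt the same convention and note, if needed, that on atoms one uses left-continuous inverses, which does not affect the integral.

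The only genuine subtlety — and the step I expect to need the most care — is measurability: one must check that $\bz\mapsto \I_c(\mu_{\bz},\nu_{\bz})$ and $\bz\mapsto F_{\bz}^{-1}(t)$ are jointly measurable, so that the outer integral against $P$ makes sense. This is, however, already acknowledged and handled in the proof of Theorem \ref{2way} (cf.\ the remark that the right-hand side of \eqref{2ic} is well-defined ``using a measure selection argument''), so I would simply cite that: the family $\{\mu_{\bz}\}$ depends measurably on $\bz$ by the disintegration theorem, the map sending a pair of measures on $\R$ to its comonotone coupling is measurable, and continuity of $c$ makes $\bz\mapsto\int_0^1 c(F_{\bz}^{-1},G_{\bz}^{-1})\,\d t$ measurable. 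Everything else is a direct substitution, so the corollary follows with no further computation.
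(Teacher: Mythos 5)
Your proposal is correct and follows exactly the route the paper takes: it applies the decomposition from Theorem \ref{2way} to reduce to the one-dimensional slice problems $\I_c(\mu_{\bz},\nu_{\bz})$, then invokes the Fr\'echet--Hoeffding theorem (comonotone coupling is optimal for continuous submodular cost on $\R$) to evaluate each slice cost as $\int_0^1 c(F_{\bz}^{-1}(t),G_{\bz}^{-1}(t))\,\d t$, and folds the measurability of $\bz\mapsto\I_c(\mu_{\bz},\nu_{\bz})$ back into the measure-selection argument already supplied in the proof of Theorem \ref{2way}. The small remark about the paper's wording ``distribution functions'' versus quantile functions is a correct reading of what is a minor notational imprecision in the statement and does not affect the argument.
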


In Appendix \ref{sec:W}, we further  discuss an application of Corollary \ref{convex} to Wasserstein distances between $\R^d_+$-valued probability measures. Each of the distances will be defined on some equivalence class $\mathcal E_P$.

As another consequence of  Theorem \ref{2way}, we obtain the following slightly stronger  duality result. This duality formula appears in a different form compared to the one in Theorem \ref{duality1}, and it is  similar to the duality formula in the classic setting ($d=1$). This is due to  Theorem \ref{2way}, which is only possible in case of two-way transport problems. 
Recall that in case $d=1$, all transport problems are two-way.

\begin{proposition}
Suppose that both $\Pi(\bmu,\bnu)$ and $\Pi(\bnu,\bmu)$ are non-empty and $c:X\times Y\to[0,\infty)$ is uniformly continuous and bounded, then duality holds as
\begin{align} \inf_{\pi\in\Pi(\bmu,\bnu)}\int_{X\times Y}c \, \d \pi =\sup_{(\phi,\psi)\in \widetilde{\Phi}_c}\int_X\phi\,  \d\mub+\int_Y \psi\, \d \nub, \label{dua2}\end{align}
where  \begin{align*} \widetilde{\Phi}_c  =\Bigg\{(\phi,\psi)\in L^1(\mub)\times L^1(\nub)\mid \phi (x)+\psi(y)\le c(x,y)\text{ if }\bmu'(x)=\bnu'(y)\Bigg\}.\end{align*}  
Moreover, both the infimum and supremum in \eqref{dua2} are attained.\label{newdua}
\end{proposition}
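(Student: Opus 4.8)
The plan is to leverage Theorem \ref{2way}, which reduces the two-way SOT problem to a family of classic optimal transport problems indexed by $\bz \in \R_+^d$, and then to glue together the classic Kantorovich dualities for each slice. First I would invoke Theorem \ref{2way} to write
\begin{align}
\inf_{\pi\in\Pi(\bmu,\bnu)}\int_{X\times Y}c\,\d\pi = \int_{\R_+^d}\I_c(\mu_\bz,\nu_\bz)\,P(\d\bz),\label{eq:planstart}
\end{align}
where $\mu_\bz,\nu_\bz$ are the disintegrations of $\mub,\nub$ along the level sets $A_\bz,B_\bz$ of $\bmu',\bnu'$ (note $\mm=\nn=P$ here, so the indexing measure is the same on both sides). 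Since $c$ is bounded and uniformly continuous on $X\times Y=\R\times\R$ (or a general Polish setting), each slice problem $\I_c(\mu_\bz,\nu_\bz)$ is a classic Monge--Kantorovich problem to which the standard Kantorovich duality applies: $\I_c(\mu_\bz,\nu_\bz)=\sup_{(\phi_\bz,\psi_\bz)}\big(\int\phi_\bz\,\d\mu_\bz+\int\psi_\bz\,\d\nu_\bz\big)$ over pairs with $\phi_\bz(x)+\psi_\bz(y)\le c(x,y)$, and the supremum is attained (e.g., by $c$-conjugate pairs, which are uniformly bounded and equicontinuous because $c$ is).

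Next I would assemble the slice-wise optimal dual pairs into global functions $\phi(x):=\phi_{\bmu'(x)}(x)$ and $\psi(y):=\psi_{\bnu'(y)}(y)$. The pointwise constraint $\phi(x)+\psi(y)\le c(x,y)$ whenever $\bmu'(x)=\bnu'(y)=\bz$ is then immediate from the slice constraint, so $(\phi,\psi)\in\widetilde\Phi_c$; and since $c$ is bounded the functions are in $L^1(\mub)\times L^1(\nub)$. Integrating and using the disintegration formula,
\begin{align}
\int_X\phi\,\d\mub+\int_Y\psi\,\d\nub
=\int_{\R_+^d}\Big(\int_{A_\bz}\phi_\bz\,\d\mu_\bz+\int_{B_\bz}\psi_\bz\,\d\nu_\bz\Big)P(\d\bz)
=\int_{\R_+^d}\I_c(\mu_\bz,\nu_\bz)\,P(\d\bz),\label{eq:planassemble}
\end{align}
which together with \eqref{eq:planstart} gives ``$\le$'' is actually an equality: the RHS supremum in \eqref{dua2} is at least the LHS. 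The reverse inequality ``$\sup\le\inf$'' (weak duality) is the routine direction: for any $(\phi,\psi)\in\widetilde\Phi_c$ and any $\pi\in\Pi(\bmu,\bnu)$, one disintegrates $\pi$ along $\bz$ (using the first part of Theorem \ref{2way}, $\pi$ restricted to $A_\bz\times B_\bz$ is a coupling of $\mu_\bz,\nu_\bz$), applies $\phi(x)+\psi(y)\le c(x,y)$ on the support (where $\bmu'(x)=\bnu'(y)$), and integrates. Attainment of the infimum in \eqref{dua2} follows from weak compactness of $\Pi(\bmu,\bnu)$ and lower semicontinuity of $\pi\mapsto\int c\,\d\pi$ (here $c$ continuous bounded), as already noted after Theorem \ref{duality1}.

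The main obstacle I anticipate is the \emph{measurable-selection} issue in \eqref{eq:planassemble}: I must choose the optimal dual pair $(\phi_\bz,\psi_\bz)$ for each $\bz$ in a way that makes $(\bz,x)\mapsto\phi_\bz(x)$ and $(\bz,y)\mapsto\psi_\bz(y)$ jointly measurable, so that $\phi,\psi$ are genuine measurable functions and the Fubini-type computation is legitimate. This is exactly the kind of measurable-selection-of-optimizers argument flagged in the Remark after the statement; it can be handled by taking the canonical $c$-transform pair normalized at a fixed base point, invoking that $\bz\mapsto(\mu_\bz,\nu_\bz)$ is measurable (from the disintegration theorem) and that the $c$-conjugate construction depends measurably on the data, perhaps via a Kuratowski--Ryll-Nardzewski selection on the (compact, by equicontinuity and Arzelà--Ascoli) set of optimal dual pairs. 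A secondary technical point is checking the slice problems are genuinely balanced classical OT problems — which holds because $\mu_\bz(A_\bz)=\nu_\bz(B_\bz)$ for $P$-a.e.\ $\bz$ as $\mm=\nn=P$ — and that the slice dual optima are uniformly bounded in $\bz$ (true since $0\le c\le \|c\|_\infty$), ensuring the integrals converge and dominated convergence / Fubini apply without extra integrability hypotheses.
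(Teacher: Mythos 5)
Your proposal follows essentially the same route as the paper's own proof: both invoke Theorem \ref{2way} to decompose the SOT problem into slice-wise classic Kantorovich problems, apply the classic duality (with attainment) on each slice, use a measurable-selection argument to glue the slice-optimal dual pairs into $\phi(x)=\phi_{\bmu'(x)}(x)$ and $\psi(y)=\psi_{\bnu'(y)}(y)$, and obtain weak duality from the fact that any $\pi\in\Pi(\bmu,\bnu)$ is concentrated on $\{\bmu'(x)=\bnu'(y)\}$. The paper handles the measurable-selection step by Theorem 18.19 of \cite{AB06} rather than a Kuratowski--Ryll-Nardzewski/Arzel\`a--Ascoli argument, and takes infimum-attainment directly from the construction in Theorem \ref{2way} rather than weak compactness, but these are cosmetic variations on the same proof.
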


Using the Decomposition Theorem, we discuss a few interesting examples illustrating the peculiarities of simultaneous transport (complementing Section \ref{sec:12}) on an  equivalence class $\cE_P$. 
From classic optimal transport theory $(d=1)$, we first recall the following result.

\begin{proposition}[Theorem 1.17 in \cite{S15}]
Suppose that probability measures $\mu,\nu$ are supported on a compact domain $\Omega\subseteq \R^N$ where $\partial\Omega$ is $\mu$-negligible, $\mu$ is absolutely continuous, and $c(x,y)=h(y-x)$ with $h$ strictly convex, then there exists a unique transport that is optimal among all Kantorovich transports and such a transport is Monge.\label{oldp}
\end{proposition}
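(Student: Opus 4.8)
This is the classical Brenier--Gangbo--McCann existence and uniqueness result (\cite{GM96}), and the plan is to reconstruct its proof through Kantorovich duality. Since $\Omega$ is compact, $c(x,y)=h(y-x)$ is continuous and bounded on $\Omega\times\Omega$ (a finite convex function is continuous on the interior of its domain), so the set of couplings of $\mu$ and $\nu$ is weakly compact and $\gamma\mapsto\int c\,\d\gamma$ is weakly continuous; hence an optimal Kantorovich plan $\gamma_*$ exists. The goal is then to show that $\gamma_*$ is concentrated on the graph of the map $x\mapsto x-\nabla h^*(\nabla\varphi(x))$, where $\varphi$ is a Kantorovich potential and $h^*$ is the Legendre transform of $h$. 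The crucial structural input is that strict convexity of $h$ is equivalent to $h^*$ being essentially smooth (Rockafellar), so $\nabla h^*$ is single-valued wherever $\partial h^*$ is nonempty; this is exactly what upgrades the a.e.\ first-order optimality condition into a genuine transport map.

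Concretely, I would first invoke Kantorovich duality (see \cite{V09}) to obtain a $c$-concave potential $\varphi$ on $\R^N$, locally Lipschitz since $h$ is locally Lipschitz, such that every optimal $\gamma_*$ is concentrated on the set of pairs $(x,y)$ for which $z\mapsto\varphi(z)-h(z-y)$ is maximised at $z=x$. By Rademacher's theorem $\varphi$ is Lebesgue-a.e.\ differentiable, hence $\mu$-a.e.\ differentiable since $\mu\ll\mathcal L^N$, and the hypothesis $\mu(\partial\Omega)=0$ lets us also assume such points lie in $\mathrm{int}\,\Omega$. Fixing such an $x$ with $(x,y)\in\supp\gamma_*$ and testing the maximality inequality along increments $z=x+te$ as $t\to0^+$ yields $\nabla\varphi(x)\cdot e\le h'(x-y;e)$ for every direction $e$, i.e.\ $\nabla\varphi(x)\in\partial h(x-y)$, equivalently $x-y\in\partial h^*(\nabla\varphi(x))$; essential smoothness of $h^*$ then forces $x-y=\nabla h^*(\nabla\varphi(x))$. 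Thus $y=T(x):=x-\nabla h^*(\nabla\varphi(x))$ is determined by $x$ for $\mu$-a.e.\ $x$, so $\gamma_*=(\mathrm{id},T)_\#\mu$ is induced by a map, and in particular $T_\#\mu=\nu$. Uniqueness follows from the standard convexity trick: if $\gamma_0,\gamma_1$ are both optimal, so is $\tfrac12(\gamma_0+\gamma_1)$, which by the above is carried by the graph of a single map; hence so are $\gamma_0$ and $\gamma_1$, forcing $\gamma_0=\gamma_1$.

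I expect the main obstacle to be handling the non-smoothness of $h$ itself: strict convexity does \emph{not} imply $h$ is differentiable, so one cannot write $\nabla_x c(x,y)=\nabla h(x-y)$ directly, and the remedy is to transfer regularity to the conjugate via ``$h$ strictly convex $\iff h^*$ essentially smooth'', which makes $\partial h^*$ single-valued wherever it is nonempty (so the interiority of $\nabla\varphi(x)$ in $\mathrm{dom}\,h^*$ is automatic). A secondary technical point is justifying the first-order condition $\nabla\varphi(x)\in\partial h(x-y)$ at an a.e.\ maximiser of $\varphi(\cdot)-h(\cdot-y)$; rather than invoking a general subdifferential sum rule at a non-smooth point, it is cleanest to restrict to the full-$\mu$-measure set where $\varphi$ is differentiable and argue directly from the finite-increment inequality above, using that $h'(q;\cdot)$ is the support function of $\partial h(q)$.
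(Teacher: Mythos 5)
The paper does not supply its own proof of this statement; it is cited verbatim from Santambrogio's monograph, so there is nothing to compare against. Your reconstruction is the standard Gangbo--McCann argument and is correct in substance: use compactness for existence of an optimal plan, Kantorovich duality to get a locally Lipschitz $c$-concave potential $\varphi$, Rademacher plus $\mu\ll\mathcal L^N$ and $\mu(\partial\Omega)=0$ to differentiate $\varphi$ at $\mu$-a.e.\ $x$, pass the maximality condition on $\supp\gamma_*$ to a subdifferential relation $\nabla\varphi(x)\in\partial h(\cdot)$, and invert via the conjugate. Your observation that the delicate point is $h$ not being differentiable, handled by transferring regularity to $h^*$, is exactly right.

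Two small slips worth fixing. First, a sign inconsistency: the statement has $c(x,y)=h(y-x)$, but your complementary-slackness expression ``$z\mapsto\varphi(z)-h(z-y)$ maximised at $z=x$'' and the formula $T(x)=x-\nabla h^*(\nabla\varphi(x))$ correspond to $c(x,y)=h(x-y)$. Tracked correctly for $c(x,y)=h(y-x)$, the first-order condition reads $-\nabla\varphi(x)\in\partial h(y-x)$ and the map is $T(x)=x+\nabla h^*(-\nabla\varphi(x))$; alternatively, declare at the outset that replacing $h$ by $q\mapsto h(-q)$ preserves all hypotheses, so one may assume the cost is $h(x-y)$. Second, ``strict convexity of $h$ is equivalent to $h^*$ essentially smooth'' is slightly overstated: Rockafellar's Theorem 26.3 is an equivalence between \emph{essential} strict convexity of $h$ and essential smoothness of $h^*$. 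Since here $h$ is finite on all of $\R^N$, strict convexity does coincide with essential strict convexity, so the direction you use is valid -- but the converse implication would require the additional hypothesis $\mathrm{dom}\,h=\R^N$, and it is cleaner to state only the implication you need (strictly convex, finite $h$ $\Rightarrow$ $h^*$ essentially smooth $\Rightarrow$ $\partial h^*$ is either empty or a singleton). With these two cosmetic fixes the argument is complete; the uniqueness step via the midpoint $\tfrac12(\gamma_0+\gamma_1)$ being optimal and hence graph-supported is standard and correct.
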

 Example \ref{ex3} below shows that, in the setting of simultaneous transport $(d=2)$,
there may not exist an optimal Monge transport 
even if we assume moreover that both $\bmu$ and $\bnu$ are absolutely continuous with respect to the Lebesgue measure on $[0,1]^2$ and are jointly atomless. 

We first recall 
from Exercise 2.14 in \cite{V03} that if we remove the absolute continuity condition of $\mu$ while still assuming $\mu$ is atomless, Proposition \ref{oldp} may fail to hold. A counterexample is given by $\mu$ being  uniform on $[0,1]\times \{0\}$, and  $\mu$  uniformly distributed on $[0,1]\times\{a,b\}$ where $a\neq b$,  with $N=2$ and $c(\bx,\by)=\Vert \bx-\by\Vert^2$.

\begin{example}
Consider $N=2$ and $c(\bx,\by)=\n{\bx-\by}^2$. Define $\mu_1,\nu_1$ being uniformly distributed on $[0,1]\times[0,1]$ and $[0,1]\times [2,3]$ respectively. Define $\mu_2$ supported on $[0,1]\times [0,1]$ such that $\d\mu_2/\d\mu_1(x,y)=2y$ and $\nu_2$ supported on $[0,1]\times [2,3]$ such that $\d\nu_2/\d\nu_1(x,y)=2-4|y-5/2|$. 

Observe that $\mub,\nub$ are compactly supported and $\bmu,\bnu$ are jointly atomless (e.g., the uniform distribution on $[0,1]\times \{0\}$ and $\mu_1'$ are independent). For each $z\in\R_+$, using notations similarly as in Section \ref{W}, we have $A_z:=(\mu_1')^{-1}(z)=[0,1]\times\{(1-z)/2z\}$ and $B_z:=(\nu_1')^{-1}(z)=[0,1]\times\{(5/2)\pm((3z-1)/2z)\}$. Moreover, $\mu_z,\nu_z$ are uniformly distributed on $A_z,B_z$ respectively. Thus, from the counterexample mentioned above, the unique optimal transport from $\mu_z$ to $\nu_z$ is not Monge unless $z=1/3$. This proves that the unique optimal transport from $\bmu$ to $\bnu$ is not Monge.
\label{ex3}
\end{example}

We next discuss an example of simultaneous transport between Gaussian measures. For simplicity we focus on the case $d=2$ with $L^2$ cost. First, we record a general result stating that for $\bmu,\bnu$ in the same equivalence class $\cE_P$, there must exist a linear transport between them. That is, $\bmu$ and $\bnu$ differ by a nonsingular linear transformation. 

\begin{proposition}\label{d2}
If $\bmu,\bnu$ are $\R^2$-valued  Gaussian measures on $\R^N$ with positive densities everywhere, then $\bmu,\bnu$ belong to the same equivalence class $\cE_P$ if and only if $\bmu\circ T^{-1}=\bnu$ where $T(\mathbf{x})=A\mathbf{x}+\mathbf{b}$ with $A$ invertible. 
\end{proposition}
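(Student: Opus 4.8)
The plan is to prove the two implications separately; the ``if'' direction is easy and uses no Gaussianity, while the converse is the substance. For the easy direction: if $\bmu\circ T^{-1}=\bnu$ for an invertible affine $T$, then also $\mub\circ T^{-1}=\nub$, and a change of variables in $\int h\,\d\nu_j=\int (h\circ T)\,\d\mu_j$ gives $\d\nu_j/\d\nub=(\d\mu_j/\d\mub)\circ T^{-1}$ for each $j$, i.e.\ $\bnu'=\bmu'\circ T^{-1}$ $\nub$-a.e.; so if $Y\sim\nub$ then $T^{-1}(Y)\sim\mub$ and $\bnu'(Y)=\bmu'(T^{-1}(Y))$, whence the law $\nn$ of $\bnu'$ under $\nub$ equals the law $\mm$ of $\bmu'$ under $\mub$, i.e.\ $\bmu\simeq\bnu$.

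For the converse I would first normalize. Each $\mu_j,\nu_j$ is a non-degenerate Gaussian on $\R^N$, so choose invertible affine maps $L,L'$ with $\mu_1\circ L^{-1}=\nu_1\circ L'^{-1}=\mathrm N(0,I)$; by the ``if'' direction, replacing $(\bmu,\bnu)$ by $(\bmu\circ L^{-1},\bnu\circ L'^{-1})$ changes neither $\mm,\nn$ nor the assertion to be proved, so we may assume $\mu_1=\nu_1=\mathrm N(0,I)$. Since $d=2$ and $\mu_1'+\mu_2'\equiv 2$ (because $\mub=\tfrac12(\mu_1+\mu_2)$), we have $\mu_1'=2/(1+e^{p})$ with $p:=\log(\d\mu_2/\d\mu_1)$, an inhomogeneous quadratic $p(x)=\tfrac12 x^\top Mx+v^\top x+c$ with $M$ symmetric ($M=I-\Sigma_2^{-1}$ if $\mu_2=\mathrm N(b,\Sigma_2)$). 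Moreover $\mub=\tfrac12(1+e^{p})\mu_1$, so the law of $\mu_1'$ under $\mub$ is a fixed invertible transform of the law of $p(X)$ under $\mathrm N(0,I)$; likewise on the $\bnu$ side with $p'$. Since $\mm$ (resp.\ $\nn$) is just the image of the law of $\mu_1'$ (resp.\ $\nu_1'$) under $t\mapsto(t,2-t)$, the hypothesis $\mm=\nn$ is equivalent to: $p(X)$ and $p'(X)$ are equal in distribution for $X\sim\mathrm N(0,I)$.

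The heart of the matter is then the claim that this forces $p'=p\circ O$ for some orthogonal $O$. I would prove it via the cumulant generating function
$$\Lambda(s):=\log\E\big[e^{sp(X)}\big]=sc-\tfrac12\log\det(I-sM)+\tfrac{s^2}{2}\,v^\top(I-sM)^{-1}v,$$
which is finite near $s=0$, so equality of the laws of $p(X)$ and $p'(X)$ is equality of $\Lambda$ near $0$, hence (by analytic continuation) wherever both sides are analytic. Along the real axis $\Lambda$ acquires, at $s=1/\lambda$ for each nonzero eigenvalue $\lambda$ of $M$, a logarithmic singularity with coefficient $-\tfrac12\dim\ker(M-\lambda)$ from $-\tfrac12\log\det(I-sM)$, together with a simple pole of residue proportional to $\n{P_\lambda v}^2$ from $\tfrac{s^2}{2}\sum_\lambda \n{P_\lambda v}^2/(1-\lambda s)$; as these two singularity types cannot cancel, $\Lambda\equiv\Lambda'$ forces $M$ and $M'$ to have the same nonzero eigenvalues with the same multiplicities and $\n{P_\lambda v}=\n{P'_\lambda v'}$ for each nonzero $\lambda$. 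The multiplicity of the eigenvalue $0$ then also agrees (it is $N$ minus the others), and matching the $s^1$ and $s^2$ Taylor coefficients of $\Lambda\equiv\Lambda'$ yields $c=c'$ and $|v|=|v'|$, hence $\n{P_0 v}=\n{P'_0 v'}$. A symmetric matrix together with a vector is determined up to simultaneous orthogonal conjugation by exactly this data (put $M$ in a block-scalar orthonormal eigenbasis and align $P_\lambda v$ with the first basis vector of each block), so there is an orthogonal $O$ with $OMO^\top=M'$, $Ov=v'$, $c=c'$, i.e.\ $p'(x)=p(O^\top x)$.

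Finally I would unwind the normalization: taking $T:=O$ (orthogonal, hence invertible affine), $\mu_1\circ T^{-1}=\mathrm N(0,I)=\nu_1$ by rotation-invariance of $\mathrm N(0,I)$, while the density of $\mu_2\circ T^{-1}$ at $x$ equals $f_{\mathrm N(0,I)}(T^{-1}x)e^{p(T^{-1}x)}=f_{\mathrm N(0,I)}(x)e^{p'(x)}$, which is the density of $\nu_2$; thus $\bmu\circ T^{-1}=\bnu$, and composing with $L,L'$ produces the required map $T(\mathbf x)=A\mathbf x+\mathbf b$ with $A$ invertible in the original coordinates. The one genuinely delicate point I expect is the identifiability of a quadratic form in Gaussians from its law — in particular cleanly separating the logarithmic from the polar singularities of $\Lambda$ and pinning down the behaviour on $\ker M$, where $p$ is only affine and so contributes no singularity and must be recovered from the low-order cumulants instead; the rest is routine manipulation of pushforwards and Radon--Nikodym derivatives.
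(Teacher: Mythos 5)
Your proof is correct, and it follows the same broad skeleton as the paper's (normalize both tuples so that $\mu_1=\nu_1=\mathrm N(0,I)$, reduce $\mm=\nn$ to equidistribution of quadratic forms $p(X)\dd p'(X)$ for $X\sim\mathrm N(0,I)$, extract enough spectral/vector data to build an orthogonal conjugation, then unwind the normalization). Where you diverge is in the extraction step and in how the orthogonal map is assembled. The paper computes the Laplace transforms explicitly, squares them, appeals to \cite{MP92} to split each side as (rational function)$\times\exp$(rational function), and then works directly with the resulting identities on characteristic polynomials (its equation (\ref{a})) and the resolvent quadratic form (its (\ref{b})), pushing the Vandermonde argument through the spectral decompositions $\Sigma^{-1}=PDP^{-1}$, $\Omega^{-1}=QD'Q^{-1}$ and juggling the permutation/orthogonal blocks $PE$, $QOQ^{-1}$. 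You instead read off the spectral data from the singularity structure of the cumulant generating function $\Lambda(s)=sc-\tfrac12\log\det(I-sM)+\tfrac{s^2}{2}v^\top(I-sM)^{-1}v$: the $\log$ term has a logarithmic singularity at $s=1/\lambda$ of weight $-\tfrac12\dim\ker(M-\lambda)$, the resolvent term has a simple pole with residue controlled by $\n{P_\lambda v}^2$, and the two singularity types cannot cancel; the $\lambda=0$ data (which contributes no singularity since $p$ restricts to an affine function on $\ker M$) is recovered from the $s^1,s^2$ Taylor coefficients via $\mathrm{tr}(M),\mathrm{tr}(M^2),c,|v|$. This is somewhat more self-contained than the paper's route: you avoid the \cite{MP92} citation and the rational/exponential separation lemma, and your orthogonal map is built transparently as the change of basis between the two block-aligned eigenbases (first vector of each block pointing along $P_\lambda v/\n{P_\lambda v}$), which is neater than the paper's explicit product $QO(PE)^{-1}$. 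One presentational point: your phrase that the law of $\mu_1'$ under $\mub$ is a ``fixed invertible transform'' of the law of $p(X)$ under $\mathrm N(0,I)$ understates what is going on, since the two laws are linked by a change of measure ($\mathrm{d}\mub=\tfrac12(1+e^{p})\,\mathrm{d}\mu_1$) and not merely a pushforward; the dictionary is still a bijection on laws (as the paper's Lemma 3.5 of \cite{SSWW19} records directly), but the sentence as written hides the normalization $\E_{\mu_1}[e^{p}]=1$ that makes it work, so it is worth spelling out.
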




\begin{example}
     We discuss an example where the optimal transport may not be the linear transport given in Proposition \ref{d2}. 
Consider $\delta>0$ and Gaussian measures $\mu_1,\nu_1\sim N(0,I_2)$, $\mu_2\sim N(0,\Sigma)$, and $\nu_2\sim N(0,\Omega)$ where 
$$\Sigma=\begin{pmatrix}1+\delta & 0\\
0&1
\end{pmatrix}\text{ and }\Omega=\begin{pmatrix}1 & 0\\
0&1+\delta
\end{pmatrix}.$$
It is straightforward to compute all the linear transports in $\K(\bmu,\bnu)$. These are given by reflections along $y=\pm x$ axes and rotations of $\pm \pi/2$ degrees at zero. Our goal is to show that  these transports are not optimal, in contrast to the case $d=1$ where optimal transports are linear. Observe that $\bmu,\bnu$ belong to the same equivalence class $\cE_P$ (since two-way transports exist), so that we may apply  Theorem \ref{2way}. Consider $z\in(0,2)$, then computing the density yields that $\d\mu_1/\d\mub((x,y))=z$ if and only if 
$$x=\pm\sqrt{\frac{2(1+\delta)^{3/2}}{\delta}\log\left(\frac{2}{z}-1\right)}=:\pm h_\delta(z).$$
Similarly, $\d\nu_1/\d\nub((x,y))=z$ if and only if $y=\pm h_\delta(z)$. The optimal transport problem from $\bmu$ to $\bnu$ is then reduced to transporting from  
$\{(x,y)\mid x=\pm c\}$ to   $\{(x,y)\mid y = \pm c\}$ for each $c =h_\delta(z)\ge 0$ on which some copies of Gaussian measures are equipped. Direct computation shows that a transport 
from $ \mu_z$ to   $\nu_z$ 
is given by
$$T((x,y))=(\text{sgn}(x)|y|,\text{sgn}(y)  |x|).$$
This is illustrated by the following Figure \ref{fig:7}. 

\begin{figure}[h!]
\begin{center}
    
 \begin{tikzpicture}
    \begin{axis}[axis lines=middle,
            enlargelimits,
            ytick=\empty,
            xtick=\empty,]
\addplot[name path=F,blue,domain={-2:2}] {1} node[pos=.575, above]{$h_\delta(z)$};
\addplot[name path=F2,blue,domain={-2:2}]{-1}node[pos=.6, below]{$-h_\delta(z)$};
\addplot[domain={-2:2}
,purple, name path=three] coordinates {(-1,-2)(-1,2)};
\addplot[domain={-2:2},purple, name path=threee] coordinates {(1,-2)(1,2)};
\draw[   ->](axis cs:1,1.5)--(axis cs:1.5,1);  
\draw[   ->](axis cs:1,0.5)--(axis cs:0.5,1);  
\draw[   ->](axis cs:1,-1.5)--(axis cs:1.5,-1);  
\draw[   ->](axis cs:1,-0.5)--(axis cs:0.5,-1);  
\draw[   ->](axis cs:-1,1.5)--(axis cs:-1.5,1);  
\draw[   ->](axis cs:-1,0.5)--(axis cs:-0.5,1);  
\draw[   ->](axis cs:-1,-1.5)--(axis cs:-1.5,-1);  
\draw[   ->](axis cs:-1,-0.5)--(axis cs:-0.5,-1);  

\draw[   ->](axis cs:1, 1.3)--(axis cs: 1.3,1);  
\draw[   ->](axis cs:1, 0.7)--(axis cs: 0.7,1);  
\draw[   ->](axis cs:1,- 1.3)--(axis cs: 1.3,-1);  
\draw[   ->](axis cs:1,- 0.7)--(axis cs: 0.7,-1);  
\draw[   ->](axis cs:-1, 1.3)--(axis cs:- 1.3,1);  
\draw[   ->](axis cs:-1, 0.7)--(axis cs:- 0.7,1);  
\draw[   ->](axis cs:-1,- 1.3)--(axis cs:- 1.3,-1);  
\draw[   ->](axis cs:-1,- 0.7)--(axis cs:- 0.7,-1); 

\draw[   ->](axis cs:1, 1.7)--(axis cs: 1.7,1);  
\draw[   ->](axis cs:1, 0.3)--(axis cs: 0.3,1);  
\draw[   ->](axis cs:1,- 1.7)--(axis cs: 1.7,-1);  
\draw[   ->](axis cs:1,- 0.3)--(axis cs: 0.3,-1);  
\draw[   ->](axis cs:-1, 1.7)--(axis cs:- 1.7,1);  
\draw[   ->](axis cs:-1, 0.3)--(axis cs:- 0.3,1);  
\draw[   ->](axis cs:-1,- 1.7)--(axis cs:-1.7,-1);  
\draw[   ->](axis cs:-1,- 0.3)--(axis cs:-0.3,-1); 
     \draw [dashed] (axis cs:0,1) -- (axis cs:1,0);
     \draw [dashed] (axis cs:0,1) -- (axis cs:-1,0);
     \draw [dashed] (axis cs:0,-1) -- (axis cs:1,0);
     \draw [dashed] (axis cs:0,-1) -- (axis cs:-1,0);


\end{axis}
    
    \end{tikzpicture}
\end{center}\caption{Optimal transport from $\mu_z$ to $\nu_z$: the red and blue lines indicate the supports of $\mu_z$ and $\nu_z$ respectively. Black arrows indicate the transports.}
\label{fig:7}
\end{figure}

Recall from Theorem 3.2.9 of \cite{R98} that $T$ is optimal if and only if 
$$(\text{sgn}(x)|y|,\text{sgn}(y) |x|)\in\partial f(x,y)$$
for some lower semi-continuous convex function $f$ on $\R^2$, where the subdifferential $\partial f$ is given by 
$$\partial f(\bx):=\{\bx^*\in X^*\mid f(\bx)-f(\by)\geq \langle \bx-\by,\bx^*\rangle \text{ for all }\by\in X\}, \ \bx\in X=\R^2.$$
It is straightforward to check that $f(x,y)=|y|$ meets these criteria, so that $T$ is indeed an optimal transport from $\mu_z$ to $\nu_z$. Since this holds for all $z\in(0,2)$, by Theorem \ref{2way}, the   transport 
$ T $ on $\R^2$ 
is an optimal transport from $\bmu$ to $\bnu$. Evidently, this is not given by a linear map. Intuitively, even if there always exists an optimal linear transport map when considering transports of a single measure, in our case the measures are weaved in such a way that under both constraints, none of the   linear maps become optimal.

\label{ex:gaussian2}

\end{example}

\section{Concluding remarks}
\label{sec:7}

The simultaneous optimal transport is introduced and studied in this paper.   
In view of the wide applications of optimal transport in  economic studies, such as  contract design (\cite{E10}), Cournot-Nash equilibria in non-atomic games (\cite{BC16}), multiple-good monopoly (\cite{DDT17}), implementation problems (\cite{NS18}), and team matching (\cite{BTZ21}, there are many directions  of SOT for future exploration,
in addition to our motivating examples and   equilibrium analysis in Section \ref{sec:model} and Appendix  \ref{sec:equilibrium}.
More broadly, optimal transport also has strong presence in 
robust risk assessment (\cite{EPR13}), option pricing (\cite{BHP13}),  machine learning (e.g., \cite{PC19}), operations research (e.g., \cite{BM19}) and statistics (e.g., \cite{CCG16}), which offer natural locations to look for   applications of our new framework.

The framework is shown to be technically  much more complicated than 
the classic setting which corresponds to $d=1$ and many new mathematical results are obtained. 
Due to the additional technical richness,
there are many directions to explore within the framework of SOT. We discuss a few  directions below.
\begin{enumerate}[(i)]
\item The MOT-SOT parity (Theorem \ref{thm:rep}) could potentially pave the path to many future developments of SOT. For example, some results on MOT such as complete duality may be translatable to SOT, shedding light on some of our open questions  below.  Computational methods for  SOT may be developed based on those of MOT; see \cite{D182} and \cite{GO19}. Exploration along these directions is left for future study. 
\item  We have focused on the case where $d$ is an integer. 
The problem can be naturally formulated for infinite dimension, by looking at 
$ 
\K(\bmu,\bnu):=\bigcap_{j \in J} \K(\mu_j,\nu_j) 
$ 
where $J$ is an infinite set which is possibly a continuum.
The optimal transport problem in this setting can be seen as a limit in some sense
of our setting as $d\to\infty$. 
A significant technical challenge arises because $\{\mu_j\mid j\in J\}$ may not admit a dominating measure.
For studies involving collections of probabilities without a dominating measure, see  e.g., \cite{STZ11} in the context of stochastic analysis with applications to mathematical finance. Assuming existence of a dominating measure and under some additional assumptions, \cite{G20} proved duality results in infinite dimension via an abstract duality theorem.

\item  The setting of this paper involves two tuples of measures to transport between. 
A natural question is how to generalize the framework to accommodate multiple marginals $\bmu^1,\dots,\bmu^n \in \P(X)^d$. For simplicity, assume all marginals are probabilities and   defined on the same space $X$.
In case $d=1$, such a generalization can be conveniently described via the Kantorovich formulation such that the optimal transport problem is 
$$
\inf_{\pi\in \Pi(\mu^1,\dots,\mu^n)} \int_{X^n} c \, \d \pi,
$$
where $c: X^n\to \R$ is the cost function and $\Pi(\mu^1,\dots,\mu^n)$ is the collection of measures with marginals $\mu^1,\dots,\mu^n\in \P(X)$; 
see e.g., \cite{R13} and \cite{P15} for results in multi-marginal transports for $d=1$. 
In contrast to the case  $d=1$ or $n=2$, such a generalization cannot be easily described via the Kantorovich formulation for   $d\ge 2$ and $n\ge 3$. A possible formulation  via kernels is given by defining, for each $j\in [d]$,
$ 
\mathcal K(\mu^1_j,\dots,\mu^n_j) = \{\kappa: X\to \P(X^{n-1})\mid \kappa_\# \mu^1_j \in \Pi(\mu_j^2,\dots,\mu_j^n)  \} 
$ 
and letting 
 $\mathcal K(\bmu^1,\dots,\bmu^n ) =\bigcap_{j=1}^d   \mathcal K(\mu^1_j,\dots,\mu^n_j).$  
Each $\kappa \in \mathcal K(\bmu^1,\dots,\bmu^n ) $ corresponds to a  multi-marginal simultaneous transport, with $n=2$ corresponding  our setting and $d=1$ corresponding to the classic multi-marginal transport setting. 
\item  Recall that in the Monge formulation,
the objective is to minimize
\begin{equation}
 \mathcal C_\eta (T) = \int_Xc(x,T(x))\eta (\d x).
\label{eq:non-linear}
\end{equation}
One may consider a nonlinear reference, i.e., $\eta$ in  \eqref{eq:non-linear} is a Choquet capacity\footnote{A Choquet capacity $\eta$ on a $\sigma$-field $\mathcal B$ of $X$ is a   function $\eta:\mathcal B \to [0,\infty]$ such that $\eta(\emptyset)=0$ and $\eta(A)\le\eta (B)$ for $A\subseteq B\subseteq X$, and the integration of $L:X\to \R$ with respect to $\eta$ is defined as $\int L \d \eta = \int_0^\infty \eta (L >t)\d t + \int_{-\infty}^0( \eta (L >t)-\eta(X))\d t  $.}  instead of a measure. 
The motivation of this formulation can be easily explained in the context of Example \ref{opt}, where the objective is
\begin{equation}
\mbox{to minimize~}\int f(L) \d \eta, \mbox{~~~~subject to $L\in \T(\bmu,\bnu)$}.
\label{eq:non-linear2}
\end{equation} 
By taking $\eta$ as a capacity, \eqref{eq:non-linear2}  includes many popular objectives
in risk management and decision analysis.
For instance, if $\eta$ is given by $\eta: A\mapsto \id_{\{\mathbb{P}(A)>1-\alpha\}}$ where $\mathbb P \in \P(X)$, 
then $\int f(L)\d \eta$ is the (left) $\alpha$-quantile of $f(L)$,
and the problem \eqref{eq:non-linear2} is a quantile optimization problem; see e.g., \cite{R10} for an axiomatization of quantile optimization in decision theory.
This formulation also includes optimization of risk measures (\cite{FS16}) or rank-dependent utilities (\cite{Q93}) of the financial position $f(L)$.
More generally, one may optimize $\mathcal R(L)$ subject to $L\in \mathcal T(\bmu,\bnu)$
for a general mapping $\mathcal R:\mathcal L\to \R$, such as many other quantities developed in decision theory (e.g., \cite{HS01, MMR06}).
Alternatively, instead of choosing $\eta$ as a capacity,  $\bmu$ and $\bnu$  may also be  chosen as tuples of capacities instead of measures.

\item Our optimal transport is allowed to be 
chosen from the entire set of transports $\K(\bmu,\bnu)$ (kernel) or $\T(\bmu,\bnu)$ (Monge).
There is an active stream of research on optimal transport with constraints such as MOT and directional optimal transport (e.g., \cite{NW22}).
Adding these constraints to the simultaneous transport gives rise to many new challenges and requires further studies. 

\item There are a few  technical open questions related to results in this paper.
\begin{enumerate}[(a)]
 \item We have explained in Remark  \ref{sup?} that the supremum may not always be attained in the duality formula \eqref{dua}. Establishing a complete duality remains a challenging problem, where we refer to \cite{BNT17}, \cite{NuS18}, and \cite{DT19} for relevant results for MOT. Sufficient conditions for attainability were established by \cite{G20}.\label{open}
\item There are several places in the paper where compactness of $X$ and $Y$ is assumed. For instance, compactness is  used in Theorem \ref{infs} and Proposition \ref{connect}. We expect that this condition can be removed. In particular, we note that Theorem \ref{infs} for $d=1$ holds without the compactness assumption as shown by \citet{P07}.


   \end{enumerate} 

\end{enumerate}


\subsubsection*{Acknowledgements}
The authors thank Itai Ashlagi, Jose Blanchet, Job Boerma, Ibrahim Ekren, Henry Lam, Fabio Maccheroni, Max Nendel, Marcel Nutz, Giovanni Puccetti, Ludger R\"uschendorf, Aleh Tsyvinski,  and 	Kelvin Shuangjian Zhang  for helpful comments on a previous version of the paper. We are also grateful to three anonymous referees for their helpful feedback. RW acknowledges financial support from the Natural Sciences and Engineering Research Council of Canada (NSERC, RGPIN-2018-03823, RGPAS-2018-522590).

\appendix

\setcounter{lemma}{0}
\renewcommand{\thelemma}{A.\arabic{lemma}}
\setcounter{proposition}{0}
\renewcommand{\theproposition}{A.\arabic{proposition}}
\setcounter{theorem}{0}
\renewcommand{\thetheorem}{A.\arabic{theorem}}
\setcounter{definition}{0}
\renewcommand{\thedefinition}{A.\arabic{definition}}
\setcounter{corollary}{0}
\renewcommand{\thecorollary}{A.\arabic{corollary}}
\setcounter{example}{0}
\renewcommand{\theexample}{A.\arabic{example}} 
\setcounter{equation}{0}
\renewcommand{\theequation}{A.\arabic{equation}} 

\begin{center}
\bf \Large Appendices 
\end{center}

In the appendices, we first present proofs and some  additional results   in Appendix \ref{A:pf}. We then collect a small review of literature on various generalizations of optimal transport in Appendix \ref{rev}. Finally, we discuss an application of SOT duality to a labour market equilibrium model in Appendix \ref{sec:equilibrium}.

\section{Proofs and additional results}\label{A:pf}

\subsection{Proofs  of results  in Section \ref{S2}}\label{A2}

\begin{proof}[Proof of Proposition \ref{prop:ssww}]
The first statement is implied by Proposition 9.7.1 of \cite{T91} and the remarks that follow it. The second statement can be shown  by  the same arguments as in Theorem 3.17 of \cite{SSWW19} where $\bmu(X)=\bnu(Y)$ is assumed. 
\end{proof}

 \begin{proof}[Proof of Proposition \ref{prop:gaussian1}]
 Note that it suffices to prove the case $T=3$, as the necessity statement for $T>3$ follows from that for $T=3$.  
Write $\alpha= \sigma_2^2/\sigma_1^2>0  $ and $\beta=\sigma_3^2/\sigma_2^2>0  $.  
  Increasing log-concavity of $t\mapsto \sigma_t$ 
means $\alpha\ge \beta\ge 1$ (case i) and   decreasing log-convexity of $t\mapsto \sigma_t$ means $\alpha \le \beta\le 1$ (case ii). 

Using Lemma 3.5  of \cite{SSWW19}, the following are equivalent:
\begin{enumerate}[(a)]
\item $\K((\mu_1,\mu_2),(\mu_2,\mu_3))\neq\emptyset  $;
 \item $\frac{\d \mu_2}{\d \mu_3} \big |_{\mu_3} \lcx \frac{\d \mu_1}{\d \mu_2} \big|_{\mu_2}$;
 \item  $\frac{\d \mu_3}{\d \mu_2} \big |_{\mu_2} \lcx \frac{\d \mu_2}{\d \mu_1} \big|_{\mu_1},$ 
 \end{enumerate} 
 where $\lcx$ is the one-dimensional convex order on $\mathcal P$.
  We shall use the equivalent condition (b) for the case $\alpha,\beta\ge 1$ and the condition (c) for the case $\alpha,\beta \le 1$.
Writing $\xi$ as a standard Gaussian random variable, 
 and $\laweq$ as  equality in distribution, by direct calculation,  
 \begin{align*}
 \frac{\d \mu_1}{\d \mu_2} \Big |_{\mu_2} 
\laweq \frac{\sigma_2}{\sigma_1} e^{-\frac{Z^2}{2\sigma_1^2} + \frac{ Z^2 }{2 \sigma_2^2} } \Big |_{Z\sim \mu_2}  
 \laweq \sqrt{ {\alpha}} e^{ {\xi ^2}   ( \frac{1} 2 - \frac \alpha  {2 }  ) } ,
\end{align*} 
\begin{align*}
 \frac{\d \mu_2}{\d \mu_3} \Big |_{\mu_3} 
  \laweq \frac{\sigma_3}{\sigma_2} e^{-\frac{Z^2}{2\sigma_2^2} + \frac{ Z^2 }{2 \sigma_3^2} } \Big |_{Z\sim \mu_3}  
 \laweq \sqrt{ {\beta}} e^{ {\xi ^2}   ( \frac{1} 2 -  \frac \beta   {2}   ) } ,
\end{align*}   \begin{align*} 
 \frac{\d \mu_2}{\d \mu_1} \Big |_{\mu_1} 
\laweq \frac{\sigma_1}{\sigma_2} e^{-\frac{Z^2}{2\sigma_2^2} + \frac{ Z^2 }{2 \sigma_1^2} } \Big |_{Z\sim \mu_1}  
\laweq \sqrt{\frac1 {\alpha}} e^{ {\xi ^2}   ( \frac{1} 2 - \frac1  {2\alpha}  ) } ,
\end{align*}
and 
\begin{align*}
 \frac{\d \mu_3}{\d \mu_2} \Big |_{\mu_2} 
\laweq\frac{\sigma_2}{\sigma_3} e^{-\frac{Z^2}{2\sigma_3^2} + \frac{ Z^2 }{2 \sigma_2^2} } \Big |_{Z\sim \mu_2}  
\laweq \sqrt{ \frac 1 {\beta}} e^{ {\xi ^2}   ( \frac{1} 2 -  \frac 1  {2\beta }   ) } .
\end{align*} 
Therefore,   $\K((\mu_1,\mu_2),(\mu_2,\mu_3))\neq\emptyset  $   is equivalent to  
\begin{equation}
\label{eq:condition}
 \beta^{1/2} e^{ {\xi ^2}   ( \frac{1} 2 -  \frac \beta   {2}   ) }  \lcx \alpha^{1/2}  e^{ {\xi ^2}   ( \frac{1} 2 - \frac \alpha  {2 }  ) }  ~   \Longleftrightarrow ~
 \beta^{-1/2}  e^{ {\xi ^2}   ( \frac{1} 2 -  \frac 1  {2\beta }   ) }  \lcx \alpha^{-1/2}  e^{ {\xi ^2}   ( \frac{1} 2 -  \frac 1  {2\alpha }   ) }  . 
 \end{equation}
 A convenient result  we use here is Corollary 1.2 of \cite{HPRY11},  which says that
the stochastic process $( ( {1+2t})^{1/2} e^{ - {\xi ^2} t }  )_{t\ge 0} $ is a peacock; moreover, it is obvious that 
this process is non-stationary. This implies that, for $x,y\ge 1$,
$ \sqrt{ {y}} e^{ {\xi ^2}   ( \frac{1} 2 -  \frac y   {2}   ) }  \lcx \sqrt{ {x}} e^{ {\xi ^2}   ( \frac{1} 2 - \frac x  {2 }  ) }    $
if and only if $y\le x$. 
Hence, if $\alpha,\beta \ge 1$, then \eqref{eq:condition} is equivalent to $\beta \le \alpha$, thus case (i), and 
if $\alpha,\beta \le 1$,  then \eqref{eq:condition} is equivalent to $\beta \ge \alpha$, thus case (ii).

To show that   (i) and (ii) are the only cases where a transport from $(\mu_1,\mu_2)$ to $(\mu_2,\mu_3)$ exists, it suffices to exclude the case $\alpha<1< \beta$ or $\beta<1<\alpha$.
Note that in this case 
 $ \sqrt{ {\beta}} e^{ {\xi ^2}   ( \frac{1} 2 -  \frac \beta   {2}   ) }  $  and  $ \sqrt{ {\alpha}} e^{ {\xi ^2}   ( \frac{1} 2 - \frac \alpha  {2 }  ) }    $ 
 have mismatch supports  (one bounded away from $-\infty$ and one bounded away from $\infty$), and the hence either order  in \eqref{eq:condition}  is not possible.    
\end{proof} 

\sloppy The condition in Proposition \ref{prop:gaussian1} is not sufficient when $T>3$. For example, consider $(\sigma_t)=(8,4,2,\sqrt{2},1)$. If $\kappa\in\K((\mu_1,\dots,\mu_{d-1}),(\mu_2,\dots,\mu_d))$, then by Theorem \ref{2way} in Appendix \ref{app:Pf6}, $\kappa(x;\{\pm x/2\})=\kappa(x;\{\pm x/\sqrt{2}\})=1$, a contradiction.

\begin{proof}[Proof of Proposition \ref{2infp}]
By symmetry, it suffices to consider the case $d=2$ and we assume that $i=1,j=2$.

 Consider the decomposition $Y=Y_1\cup Y_2$ where $Y_1=\{y\in Y\mid \nu_1'(y)\geq 1\}=Y_2^c$. Also fix an arbitrary $\kappa\in \K(\bmu,\bnu)$. Then for $B\subseteq Y_1$, we have
 \begin{align*}
     \kappa_\#(\mu_1-\mu_2)_+(B)&=\int_X\kappa(x;B)(\mu_1-\mu_2)_+(\d x)\\
     &\geq \int_X\kappa(x;B)(\mu_1-\mu_2)(\d x)=(\nu_1-\nu_2)(B)=(\nu_1-\nu_2)_+(B).
 \end{align*}
 In fact, this holds with $\kappa$ replaced by $\kappa|_{X_1}$, where we define $X_1=\{x\in X\mid \mu_1'(x)\geq 1\}=X_2^c$. Similarly, for $B\subseteq Y_2$, we have
 $$(\kappa|_{X_2})_\#(\mu_2-\mu_1)_+(B)\geq (\nu_2-\nu_1)_+(B).$$
 Therefore,
 denoting by $\eta_1$ the restriction of $\eta$ on the set $\{x\in X\mid \mu_1'(x)\geq 1\}$ and $\eta_2=\eta-\eta_1$, we obtain \begin{align*}
    \C_{\eta}(\kappa) &=\int_{X\times Y}c(x,y)\kappa^x(\d y)\eta(\d x)\\
    &=\int_{X_1\times Y}c(x,y)\kappa|_{X_1}(x;\d y)\eta_1(\d x)+\int_{X_2\times Y}c(x,y)\kappa|_{X_2}(x;\d y)\eta_2(\d x)\\
    &\geq   \inf_{\kappa\in \K((\mu_1-\mu_2)_+,(\nu_1-\nu_2)_+} \C_{\eta_1}(\kappa)  +\inf_{\kappa\in \K((\mu_2-\mu_1)_+,(\nu_2-\nu_1)_+}     \C_{\eta_2}({\kappa}) \\
    &=   \inf_{\kappa\in \K((\mu_1-\mu_2)_+,(\nu_1-\nu_2)_+) }\C_{\eta}(\kappa)+\inf_{{\kappa}\in \K((\mu_2-\mu_1)_+,(\nu_2-\nu_1)_+)}\C_{\eta}({\kappa}),
\end{align*}where in the last step we used the condition that for any $x\in X$, there exists $y\in Y$ such that $c(x,y)=0$.
Taking infimum over $\kappa\in \K(\bmu,\bnu)$ proves (\ref{2inf}). 
\end{proof}


\label{D}

\subsection{Proofs  of results in Section \ref{sec:Kanto}}\label{AA}

\begin{proof}[Proof of Proposition \ref{prop4.1}]

For each stochastic kernel $\kappa\in \mathcal K(\bmu,\bnu)$, we can define a measure $\pi\in \P(X\times Y)$ such that \begin{align}
    \pi(A\times B)=\int_A\kappa(x;B)\eta(\d x)\text{ for all }A\subseteq X,\ B\subseteq Y. \label{pidef}
 \end{align} Such a measure $\pi$ exists and is unique by Carath\'{e}odory's extension theorem. It follows that for a nonnegative measurable function $f:X\to\R$, 
$$\int_{X\times B}f(x)\pi(\d x,\d y)=\int_X\kappa(x;B)f(x)\eta(\d x),$$
which can be proved by considering indicator functions first and then using monotone convergence. Plugging in $f:={\d\mu_j}/{\d\eta}$ we obtain for any $B\subseteq Y,$
$$\nu_j(B)\leq \int_X\kappa(x;B)\mu_j(\d x)=\int_X\kappa(x;B)\frac{\d\mu_j}{\d\eta}(x)\eta(\d x)=\int_{X\times B}\frac{\d\mu_j}{\d\eta}(x)\pi(\d x,\d y).$$ In addition, for any $A\subseteq X$, $\pi(A\times Y)=\int_A\kappa(x;Y)\eta(\d x)=\eta(A)$, so by definition, $\pi\in \Pi_\eta(\bmu,\bnu)$.

On the other hand, given $\pi\in \Pi_\eta(\bmu,\bnu)$, we have by definition $\pi\circ\pi_1^{-1}=\eta$ where $\pi_1$ is projection onto $X$. By the disintegration theorem for product spaces, there exists a stochastic kernel $\kappa:\R\to\P(Y)$ such that for $A\subseteq X,\ B\subseteq Y,$
$$\pi(A\times B)=\int_A\kappa(x;B)\pi\circ\pi_1^{-1}(\d x)=\int_A\kappa(x;B)\eta(\d x),$$
which is exactly (\ref{pidef}). Similarly as above, we have
$$\nu_j(B)\leq \int_{X\times B}\frac{\d\mu_j}{\d\eta}(x)\pi(\d x,\d y)=\int_X\kappa(x;B)\frac{\d\mu_j}{\d\eta}(x)\eta(\d x)=\int_X\kappa(x;B)\mu_j(\d x),$$
thus $\kappa\in \mathcal K(\bmu,\bnu).$ 
\end{proof}

Next, we turn to the proof of Theorem \ref{infs}. We first show a useful lemma, Lemma \ref{ca} below, which will be used to show that joint non-atomicity is sufficient for the equality between the optimal values of Monge and Kantorovich formulations of simultaneous transport in Section \ref{sec:Kanto}.

In what follows,  $\mathcal B$ is always  the Borel $\sigma$-field on $\R$.
We  first define another notion of  joint non-atomicity   introduced by \cite{D21}. This notion is similar to our Definition \ref{def2}, which was proposed by \cite{SSWW19}, but this time defined for $\sigma$-fields.  
Both \cite{SSWW19} and \cite{D21} called their properties as being ``conditionally atomless" (and they are indeed equivalent in some sense as discussed by \cite{D21}; see Lemma \ref{lem:useful}). Recall that we renamed the notion from \cite{SSWW19} as  joint non-atomicity. All inequalities below involving conditional expectations are in the almost sure sense. 
\begin{definition}\label{def2prime}
Let  $(\Omega,\mathcal G,\mu)$ be a measure space. 
We say that $(\mathcal G,\mu)$ is atomless conditionally to the sub-$\sigma$-field $\mathcal F\subseteq \mathcal G$, if 
for all $A\in \mathcal G$ with $\mu(A)>0$, there exists $A'\subseteq A$, $A'\in \mathcal G$, such that 
 $$
\E^{\mu}[\bone_A|\mathcal F]>0 ~\Longrightarrow~ 0< \E^{\mu}[\bone_{A'}|\mathcal F]<\E^{\mu}[\bone_A|\mathcal F].
 $$
\end{definition}

Intuitively, the requirement in Definition \ref{def2prime} means that any set $A$ can be divided into smaller (measured by $\mu$) sets, conditionally on $\mathcal F$.
\cite{D21} showed that the two notions of conditional non-atomicity are equivalent in the   sense of Lemma \ref{lem:useful}.
This equivalence is anticipated because, in the unconditional setting, any set being divisible (corresponding to Definition \ref{def2prime}) is equivalent to the existence of a continuously distributed random variable (corresponding to Definition \ref{def2}); see e.g., Lemma D.1 of \cite{VW21}. 

\begin{lemma}\label{lem:useful}
Let   $ \mu$ be any strictly positive convex combination of $\bmu \in \mathcal P(X)^d$.
Then $\mathbf \bmu$ is jointly atomless if and only if $(\mathcal B(X),\mu)$ is atomless conditionally to $\sigma (\d \bmu/\d \mu)$.
\end{lemma}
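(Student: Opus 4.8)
The plan is to pass to a regular conditional distribution and to reduce both notions to the single statement that the conditional law is almost surely atomless. First I would dispose of the particular convex combination $\mu$. Write $\mu=\sum_{j=1}^d a_j\mu_j$ with all $a_j>0$ and $\sum_j a_j=1$; then $\mu\sim\mub$, and $\d\mu/\d\mub=\sum_{j=1}^d a_j\mu_j'$ is $\sigma(\bmu')$-measurable, so $\d\bmu/\d\mu=\bmu'/\sum_{j=1}^d a_j\mu_j'$ is a function of $\bmu'$; conversely, using the identity $\sum_{j=1}^d\mu_j'=d$ valid $\mub$-a.s., the scalar $\sum_{j=1}^d a_j\mu_j'$ is recovered from $\d\bmu/\d\mu$, hence so is $\bmu'$. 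Thus $\sigma(\d\bmu/\d\mu)=\sigma(\bmu')=:\mathcal F$ modulo null sets. Since $\d\mu/\d\mub$ is $\mathcal F$-measurable, conditional expectations given $\mathcal F$ agree under $\mu$ and $\mub$, and the two measures have the same null sets; consequently Definition \ref{def2prime} for $(\mathcal B(X),\mu,\mathcal F)$ is word-for-word the same condition as for $(\mathcal B(X),\mub,\mathcal F)$. So it suffices to prove that $\bmu$ is jointly atomless if and only if $(\mathcal B(X),\mub)$ is atomless conditionally to $\mathcal F$.

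Since $X$ is Polish, $\mathcal B(X)$ is standard Borel; fix a regular conditional distribution $(P_\omega)_{\omega\in X}$ of $\mub$ given $\mathcal F$, so that $\omega\mapsto P_\omega$ is $\mathcal F$-measurable, $P_\omega(F)=\bone_F(\omega)$ for $F\in\mathcal F$, and $\E^{\mub}[\bone_A\mid\mathcal F](\omega)=P_\omega(A)$ a.s.\ for $A\in\mathcal B(X)$. The crux is the chain of equivalences
$$\bmu\text{ jointly atomless}\iff P_\omega\text{ atomless for }\mub\text{-a.e.\ }\omega\iff(\mathcal B(X),\mub)\text{ atomless conditionally to }\mathcal F.$$
For the first equivalence, identify $X$ with a Borel subset of $\R$: if $P_\omega$ is a.s.\ atomless then $U(x):=P_x((-\infty,x])$ is measurable and, conditionally on $\mathcal F$, has law $\mathrm{Unif}[0,1]$ (the distribution-function transform of an atomless law), hence $U$ is atomless and independent of $\bmu'$ under $\mub$; conversely, given $\xi$ atomless and independent of $\mathcal F$ we may assume $\xi\sim\mathrm{Unif}[0,1]$ (replacing $\xi$ by $F_\xi(\xi)$), whence $P_\omega\circ\xi^{-1}=\mathrm{Unif}[0,1]$ for a.e.\ $\omega$, so $P_\omega$ can have no atom, as an atom would push forward to an atom of $\mathrm{Unif}[0,1]$.

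For the second equivalence, if $P_\omega$ is a.s.\ atomless and $\mub(A)>0$, then with the $\mathcal F$-measurable function $t(\omega):=\inf\{t:P_\omega(A\cap(-\infty,t])\ge P_\omega(A)/2\}$ the set $A':=A\cap\{x:x\le t(x)\}$ is measurable and satisfies $P_\omega(A')=P_\omega(A)/2$ whenever $P_\omega(A)>0$, which is exactly the splitting demanded by Definition \ref{def2prime}; conversely, if $\{\omega:P_\omega\text{ has an atom}\}\in\mathcal F$ has positive measure, a measurable selection of, say, an atom of maximal mass yields an $\mathcal F$-measurable $\omega\mapsto a(\omega)$ and the set $A:=\{x:P_x\text{ has an atom and }x=a(x)\}$, on which $P_\omega(A)=P_\omega(\{a(\omega)\})>0$ (because $P_\omega$ is, $P_\omega$-a.s., concentrated on the $\mathcal F$-atom of $\omega$, so under $P_\omega$ one has $a(x)=a(\omega)$ a.s.) while every measurable $A'\subseteq A$ has $P_\omega(A')\in\{0,P_\omega(A)\}$, contradicting conditional atomlessness. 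Combining the two equivalences with the reduction above finishes the proof.

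The routine parts are the measurability bookkeeping — joint measurability of $(\mu,x)\mapsto\mu((-\infty,x])$, $\mathcal F$-measurability of $t$ and of $\omega\mapsto P_\omega$, and upgrading ``$P_\omega\circ\xi^{-1}(B)=\mathrm{Unif}(B)$ a.s.\ for each fixed $B$'' to ``$P_\omega\circ\xi^{-1}=\mathrm{Unif}[0,1]$ for a.e.\ $\omega$'' via a countable generator of $\mathcal B(\R)$ — together with the classical distribution-function transform. The main obstacle is the implication ``conditionally atomless $\Rightarrow$ $P_\omega$ atomless a.e.'': one must extract an atom of $P_\omega$ measurably on the (a priori only $\mathcal F$-measurable) event where atoms occur, which calls for a measurable selection theorem, and then exploit that $P_\omega$ lives $P_\omega$-a.s.\ on the $\mathcal F$-atom of $\omega$ to rule out any admissible proper splitting of the associated set. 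This is in essence the argument of \cite{D21}; I would cite it for the selection step and include the reductions above for completeness.
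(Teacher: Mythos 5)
Your proof is correct and follows the same route as the paper, whose proof consists entirely of citing Theorem 2.3 and Remark 2.11 of \cite{D21}; your sketch simply unpacks what that citation conceals. The preliminary reduction to $\mub$ via $\sigma(\d\bmu/\d\mu)=\sigma(\bmu')$ (using $\sum_j\mu_j'=d$), the observation that conditional expectations given $\mathcal F$ agree under equivalent measures with $\mathcal F$-measurable density, and the chain of equivalences through ``the regular conditional distribution $P_\omega$ is a.e.\ atomless'' are all faithful reconstructions of Delbaen's argument, and you correctly identify the measurable-selection step as the one place where a citation is genuinely needed.
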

\begin{proof}
This statement follows from Theorem 2.3 of \cite{D21}.
The connection between the two notions of conditional non-atomicity is   discussed in Remark 2.11 of \cite{D21}.
\end{proof}

Next, we are ready to give a useful lemma for non-atomicity on a subset of the sample space. 

\begin{lemma}
Let $\mathbf \bmu=(\mu_1,\dots,\mu_d)\in \P(X)^d$ be  jointly atomless. Consider an arbitrary Borel set $B\subseteq \R$ and, without loss of generality, assume $\mu_j(B)>0$ for $1\leq j\leq m$ where $m\leq d$. The normalized tuple $\bmu_B $ of probability measures on $B$, given by
$$\bmu_B=\left(\frac{\mu_1|_B}{\mu_1(B)},\dots,\frac{\mu_m|_B}{\mu_m(B)}\right),$$
is again jointly atomless.\label{ca}
\end{lemma}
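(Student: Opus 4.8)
The plan is to verify Definition \ref{def2} for the normalized tuple $\bmu_B$ directly, by producing the required atomless random variable from the one furnished by joint non-atomicity of $\bmu$. First, I would unwind what $\bmu_B$ being jointly atomless means: writing $\mub_B$ for the normalized average of $\mu_1|_B/\mu_1(B),\dots,\mu_m|_B/\mu_m(B)$, one needs a random variable $\zeta:B\to\R$ which, under $\mub_B$, is atomless and independent of the Radon–Nikodym vector $\bmu_B' := \d\bmu_B/\d\mub_B$. The key observation is that $\mub_B$ is mutually absolutely continuous with respect to $\mub|_B/\mub(B)$ (both are strictly positive combinations of the same measures restricted to $B$), and more importantly that $\bmu_B'(x)$, as a function on $B$, is a deterministic (measurable) function of $\bmu'(x) = \d\bmu/\d\mub(x)$ — indeed each component $\mu_j|_B/\mu_j(B)$ has density with respect to $\mub_B$ that is a fixed scalar multiple of the density of $\mu_j$ with respect to $\mub$ on $B$, with the scalars depending only on the masses $\mu_j(B)$ and $\mub(B)$. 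Hence $\sigma(\bmu_B')\subseteq\sigma(\bmu'|_B)$.

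The cleanest route is via the $\sigma$-field formulation and Lemma \ref{lem:useful}. By Lemma \ref{lem:useful} applied to $\mu:=\mub$ (a strictly positive convex combination of $\bmu$), joint non-atomicity of $\bmu$ is equivalent to $(\mathcal B(X),\mub)$ being atomless conditionally to $\mathcal F:=\sigma(\bmu')$. I would then argue that this conditional non-atomicity passes to the restricted space: for the set $B$ with $\mub(B)>0$, the trace $\sigma$-field $\mathcal B(X)\cap B$ with the normalized measure $\mub|_B/\mub(B)$ is atomless conditionally to the trace of $\mathcal F$ on $B$. This is essentially immediate from Definition \ref{def2prime} — given $A\subseteq B$ with positive measure, the same splitting set $A'\subseteq A$ provided by conditional non-atomicity on the whole space works, since conditional expectations with respect to $\mathcal F$ restricted to $B$ agree (up to the normalizing constant $\mub(B)$) with conditional expectations computed on $B$ against $\sigma(\mathcal F|_B)$, using that $\bone_B$ is $\mathcal F|_B$-measurable on $B$ trivially and $\mathcal F$-measurable is not needed because we only test sets inside $B$. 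Then, since $\sigma(\bmu_B')\subseteq \sigma(\bmu'|_B) = \mathcal F|_B$, being conditionally atomless with respect to the larger $\sigma$-field $\mathcal F|_B$ implies the same with respect to the smaller one $\sigma(\bmu_B')$ (a coarser conditioning $\sigma$-field makes the splitting condition easier to satisfy: $\E[\bone_A\mid\sigma(\bmu_B')] = \E[\E[\bone_A\mid\mathcal F|_B]\mid\sigma(\bmu_B')]$, so positivity is preserved and one can average the splitting). Finally, applying Lemma \ref{lem:useful} in the reverse direction to $\bmu_B$ and $\mu := \mub_B$ converts conditional non-atomicity of $(\mathcal B(B),\mub_B)$ relative to $\sigma(\bmu_B')$ back into joint non-atomicity of $\bmu_B$.

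The main obstacle I anticipate is the bookkeeping in the step ``conditional non-atomicity restricts to $B$'': one must be careful that $\E^{\mub}[\,\cdot\mid\sigma(\bmu')]$ and $\E^{\mub_B}[\,\cdot\mid\sigma(\bmu_B')]$ are genuinely compatible, i.e., that restricting to $B$ and renormalizing does not destroy the almost-sure inequalities in Definition \ref{def2prime}. Concretely, for $A'\subseteq A\subseteq B$ one has, $\mub$-a.s.\ on $B$, that $\E^{\mub_B}[\bone_{A'}\mid\sigma(\bmu_B')]$ equals $\E^{\mub}[\bone_{A'}\mid\sigma(\bmu')\vee\{B\}]/\mub(B)$ times an $\mathcal F$-measurable factor that is bounded away from $0$ and $\infty$ on $B$ (coming from $\d\mub_B/\d(\mub|_B)$), so the strict inequality $0<\E[\bone_{A'}\mid\cdot]<\E[\bone_A\mid\cdot]$ is preserved. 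Alternatively — and this may be the shorter write-up — one avoids Lemma \ref{lem:useful} entirely: take the witness $\xi$ for $\bmu$ from Definition \ref{def2}, restrict it to $B$, and check directly that under $\mub_B$ it remains atomless (a restriction of an atomless law to a positive-measure set, renormalized, is atomless) and remains independent of $\bmu_B'$ (since $\bmu_B'$ is a measurable function of $\bmu'|_B$, and $\xi$ is independent of $\bmu'$ under $\mub$, hence independent of $\bmu'|_B$ under the normalized restriction $\mub_B \ll \mub|_B$ — here one uses that $\mub_B$ differs from $\mub|_B/\mub(B)$ only by a function of $\bmu'$, which does not affect conditional independence structure relative to $\sigma(\bmu')$). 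I would present this second, more elementary argument as the main proof and relegate the $\sigma$-field viewpoint to a remark.
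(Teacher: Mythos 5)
Your second, ``more elementary'' route --- the one you say you would present as the main proof --- has a genuine gap. Restricting the witness $\xi$ to $B$ and renormalizing preserves atomlessness of the marginal law of $\xi$, but it does \emph{not} preserve independence of $\xi$ from $\bmu'$ (equivalently from $\bmu_B'$). The reweighting $\d\mub_B/\d(\mub|_B)$ is indeed a function of $\bmu'$ and hence harmless, but the restriction to $B$ itself is the dangerous step, and you cannot absorb it into a function of $\bmu'$: the set $B$ is an arbitrary Borel set and is typically not $\sigma(\bmu')$-measurable. Concretely, take $X=[0,1]^2$ with $\mub$ Lebesgue, $\mu_1$ of density $2\omega_1$, $\mu_2$ of density $2(1-\omega_1)$, so that $\bmu'$ is a function of $\omega_1$ alone, and $\xi(\omega_1,\omega_2)=\omega_2$ is a legitimate witness. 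With $B=\{\omega_2<\omega_1\}$, under $\mub_B$ the conditional law of $\xi$ given $\bmu_B'$ (equivalently given $\omega_1$) is supported on $[0,\omega_1)$, so $\xi$ and $\bmu_B'$ are manifestly dependent. The lemma is still true, but the witness for $\bmu_B$ must be constructed afresh (in this example $\omega_2/\omega_1$ works); it cannot simply be $\xi|_B$.

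Your first route --- pass to the conditional non-atomicity formulation via Lemma \ref{lem:useful}, restrict the $\sigma$-field and the measure to $B$, observe that the same splitting set $A'\subseteq A$ works because the conditional expectations under $\mub_B$ differ from those under $\mub$ by a strictly positive $\sigma(\bmu'|_B)$-measurable factor, and convert back --- is the correct argument, and it is essentially the paper's proof. Two small remarks on your write-up of it. First, the multiplicative factor is $1/\E^{\mu}[\bone_B\mid\mathcal F]$ (up to a constant), not the expression involving $\sigma(\bmu')\vee\{B\}$ and $\mub(B)$ that you wrote down; the important point you do get right is that a strictly positive $\mathcal F$-measurable factor preserves the strict inequalities $0<\E[\bone_{A'}\mid\cdot]<\E[\bone_A\mid\cdot]$. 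Second, you can avoid your ``coarsening of the conditioning $\sigma$-field'' step entirely by taking the reference measure for Lemma \ref{lem:useful} to be $\mu=(\mu_1+\cdots+\mu_m)/m$ rather than $\mub$: then $\sigma(\d\bmu_B/\d\mu_B)$ equals the trace of $\sigma(\d\bmu/\d\mu)$ on $B$ (the Radon--Nikodym vectors differ on $B$ by constant rescalings), and the identification is exact rather than an inclusion. That said, your coarsening claim is also correct, so the route still goes through as you sketched it. The bottom line: promote Route~1 to the main proof and drop Route~2.
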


\begin{proof} 
Let $\mu=(\mu_1+\dots+\mu_m)/m $ and $\mathcal F=\sigma (\d \bmu/\d \mu)=\sigma (\d \mu_1/\d \mu,\dots,\d \mu_m/\d \mu)$.
Define $\mathcal F_B= \{A\cap B\mid A\in \mathcal F\}$ and similarly for $\mathcal B_B$,
 and $\mu_B(A)=\mu(A\cap B)/\mu(B)$ for $A\in \mathcal B$.  
 
Take $A\in \mathcal B_B$ with $\mu (A) = \mu(B)\mu_B(A)>0$. 
 Note that $(\mu_1,\dots,\mu_m)$ is jointly atomless. 
Using Lemma \ref{lem:useful}, $(\mathcal B,\mu)$ is atomless conditionally to $\mathcal F$. 
By definition, there exists $A'\subseteq A$, $A'\in \mathcal B$  such that 
\begin{equation}
\E^{\mu}[\bone_A|\mathcal F]>0 ~\Longrightarrow~ 0< \E^{\mu}[\bone_{A'}|\mathcal F] <\E^{\mu}[\bone_A|\mathcal F].
\label{eq:cal1}
 \end{equation}
 Since $A'\subseteq A\subseteq B$, we have 
 $$ \E^{\mu_B}[\bone_{A}|\mathcal F_B]= \E^{\mu_B}[\bone_{A}|\mathcal F]= \E^{\mu}[\bone_{A}|\mathcal F],$$
 and the same holds for $A'$ in place of $A$. 
 As a consequence,  \eqref{eq:cal1} leads to  
\begin{equation}
\E^{\mu_B}[\bone_{A}|\mathcal F_B]>0  ~ \Longrightarrow~ 0< \E^{\mu_B}[\bone_{A'}|\mathcal F_B] <\E^{\mu_B}[\bone_{A'}|\mathcal F_B] 
\label{eq:cal2}
 \end{equation} 
 Note also that $A'\in \mathcal B_B$ by definition. 
Therefore,  by treating $\mu_B$ as a probability measure on $\mathcal B_B$,   \eqref{eq:cal2} implies that $(\mathcal B_B,\mu_B)$ is   atomless conditionally to $\mathcal F_B$.
Noting that $\mu_B$ is a strictly positive convex combination of components of $\boldsymbol\mu_B$,
and using Lemma \ref{lem:useful} again,   we conclude that $\bmu_B$ is jointly atomless.
\end{proof}

\begin{proof}[Proof of Theorem \ref{infs}]
We can without loss of generality assume that $X=Y$ by considering $\bmu,\bnu$ as measures on the compact space $X\times Y$, and that each $\mu_j$ is a probability measure. We have shown above that Monge transports are special cases as Kantorovich transports, thus the infimum cost among Monge transports is bounded below by that among Kantorovich transports. 

To prove the other direction, we first assume that there is $\delta>0$ such that $\frac{\d\eta}{\d\mub}(x)\geq\delta$ for all $x\in X$. For each $n\in\N$ we partition $X$ into countably many Borel sets $\{K_{i,n}\}_{i\in\N}$ of diameter smaller than $1/n$ and such that for each $i$, $$\frac{\sup_{x\in K_{i,n}}\frac{\d\eta}{\d\mub}(x)}{\inf_{x\in K_{i,n}}\frac{\d\eta}{\d\mub}(x)}\leq 1+\frac{1}{n}.$$ Consider a transport plan $\kappa\in \K(\bmu,\bnu)$. Define \begin{align}\bmu^{i,n}:=\bmu|_{K_{i,n}}\text{ and for }B\subseteq X,\ \bnu^{i,n}(B):=\int_{K_{i,n}}\kappa(x;B)\bmu(\d x).\label{2in}\end{align}It is then obvious that $\kappa_{i,n}:=\kappa|_{K_{i,n}}\in \K(\bmu^{i,n},\bnu^{i,n})$. Consider the normalized probability measures
$$\d \widetilde{\mu}_{j}^{i,n}=\frac{\d\mu^{i,n}_j}{\mu^{i,n}_j(K_{i,n})};\ \d \widetilde{\nu}_{j}^{i,n}=\frac{\d\nu^{i,n}_j}{\nu^{i,n}_j(X)}.$$ 
It is also easy to check that $\kappa_{i,n}\in \K(\widetilde{\bmu}^{i,n},\widetilde{\bnu}^{i,n})$. By Proposition \ref{prop:ssww}, $(\widetilde{\bmu}^{i,n})'|_{\mub^{i,n}}\gcx(\widetilde{\bnu}^{i,n})'|_{\nub^{i,n}}$. By Lemma \ref{ca}, $\widetilde{\bmu}^{i,n}$ is jointly atomless, so that applying Proposition \ref{prop:ssww} again, we conclude that $\mathcal T(\widetilde{\bmu}^{i,n},\widetilde{\bnu}^{i,n})$ is non-empty.\footnote{We can forget about the components $j$ where $\widetilde{\mu}_j^{i,n}(K_{i,n})=0$ because the transport condition is trivially satisfied there.} That is, there exist Monge transports $T_{i,n}:K_{i,n}\to X$ such that $\bmu^{i,n}\circ T_{i,n}^{-1}=\bnu^{i,n}$. By gluing these, we obtain a Monge transport $T_n:X\to X$. Note that $T_n\in\T(\bmu,\bnu)$ since 
$$\sum_{i\in\N}\bnu^{i,n}(B)=\int_X\kappa(x;B)\bmu(\d x)\geq \nub(B).$$

Define $\kappa_n(x;B):=\bone_{\{T_n(x)\in B\}}$, then $\kappa_n\in \K(\bmu,\bnu)$. Our goal now is to show that 
\begin{align}
    \C_\eta(T_n)=\int_{X\times X}c(x,y)\eta\otimes \kappa_n(\d x,\d y)\to \int_{X\times X}c(x,y)\eta\otimes \kappa(\d x,\d y).\label{q3}
\end{align}
Let us define cost functions 
$$\bar{c}_n(x,y):=\sup_{x_0\in K_{i,n},y_0\in K_{\ell,n}}c(x_0,y_0)\text{ if }x\in K_{i,n}\text{ and }y\in K_{\ell,n}.$$ Then since $c$ is uniform continuous on $X\times X$, we have \begin{align}
    \int_{X\times X}|\bar{c}_n(x,y)-c(x,y)|\eta\otimes \kappa(\d x,\d y)\to 0.\label{q1}
\end{align}
On the other hand, 
\begin{align*}
    \int_{K_{i,n}\times K_{\ell,n}}\eta\otimes \kappa_n(\d x,\d y)&=\int_{K_{i,n}}\bone_{\{T_n(x)\in K_{\ell,n}\}}\eta(\d x)\\
    &=\int_{K_{i,n}}\bone_{\{T_n(x)\in K_{\ell,n}\}}\frac{\d\eta|_{K_{i,n}}}{\d\mub^{i,n}}(x)\mub^{i,n}(\d x)\\
    &\leq \sup_{x\in K_{i,n}}\frac{\d\eta}{\d\mub}(x)\nub^{i,n}(K_{\ell,n})\\
    &\leq \left(1+\frac{1}{n}\right)\inf_{x\in K_{i,n}}\frac{\d\eta}{\d\mub}(x)\nub^{i,n}(K_{\ell,n})\\
    &\leq \left(1+\frac{1}{n}\right)\int_{K_{i,n}\times K_{\ell,n}}\eta\otimes \kappa(\d x,\d y).
\end{align*}
Applying this in the second inequality below yields that
\begin{align}
    \C_\eta(T_n)&\leq \int_{X\times X}\bar{c}_n(x,y)\eta\otimes \kappa_n(\d x,\d y)\nonumber\\
    &=\sum_{i\in\N}\sum_{\ell\in \N}\sup_{x\in K_{i,n},y\in K_{\ell,n}}c(x,y)\int_{K_{i,n}\times K_{\ell,n}}\eta\otimes \kappa_n(\d x,\d y)\nonumber\\
    &\leq \left(1+\frac{1}{n}\right)\sum_{i\in\N}\sum_{\ell\in \N}\sup_{x\in K_{i,n},y\in K_{\ell,n}}c(x,y)\int_{K_{i,n}\times K_{\ell,n}}\eta\otimes \kappa(\d x,\d y)\nonumber\\
    &=\left(1+\frac{1}{n}\right)\int_{X\times X}\bar{c}_n(x,y)\eta\otimes \kappa(\d x,\d y).\label{q2}
\end{align}
Combining (\ref{q1}) and (\ref{q2}), and since $c\geq 0$, we obtain $$\limsup_{n\to\infty} \C_\eta(T_n)\leq  \int_{X\times X}c(x,y)\eta\otimes \kappa(\d x,\d y).$$
The liminf part is similar. We have thus proved (\ref{q3}).

In the general case where $\d\eta/\d\mub$ is not bounded below by $\delta>0$, we consider $\eta_\delta:=\eta+\delta\mub$. Since $c$ is bounded, we have uniformly for $T\in \T(\bmu,\bnu)$ and $\kappa\in\K(\bmu,\bnu)$,
$$\C_{\eta_\delta}(T)\to \C_{\eta}(T)\text{ and }\C_{\eta_\delta}(\kappa)\to \C_{\eta}(\kappa)\text{ as }\delta\to 0.$$
This completes the proof.
\end{proof}

\begin{proof}[Proof of Proposition \ref{connect}]
Denote by $$J_n:=\inf_{\pi\in\Pi_{\eta}(\bmu,\bnu^n)}\C(\pi).$$It suffices to show for each subsequence $\{n_k\}$ there exists a further subsequence $\{n_{k_\ell}\}$ such that $J_{n_{k_\ell}}\to \inf_{\pi\in\Pi_{\eta}(\bmu,\bnu)}\C(\pi).$

Consider for each $n$ a measure $\pi_n\in\Pi_\eta(\bmu,\bnu^n)$. Then since $X,Y$ are compact, the sequence $(\pi_{n_k})$ is tight, so a subsequence $(\pi_{n_{k_\ell}})$ converges weakly to some $\pi\in\P(X\times Y)$. Since $\d\bmu/\d\eta$ is continuous, the operations defining $\Pi_\eta(\bmu,\bnu)$ is continuous with respect to weak topology in \eqref{newpi}, thus we have $\pi\in\Pi_\eta(\bmu,\bnu)$. Since $c(x,y)$ is continuous, this gives that
$$\lim_{k\to\infty}\C(\pi_{n_k})=\C(\pi).$$
Taking infimum yields that $$\liminf_{\ell\to\infty} J_{n_{k_\ell}}\geq \inf_{\pi\in\Pi_{\eta}(\bmu,\bnu)}\C(\pi).$$
Since $\bnu^n\leq\bnu$, we also have$$J_{n_{k_\ell}}\leq \inf_{\pi\in\Pi_{\eta}(\bmu,\bnu)}\C(\pi),$$ thus $J_{n_{k_\ell}}\to \inf_{\pi\in\Pi_{\eta}(\bmu,\bnu)}\C(\pi),$ completing the proof.
\end{proof}

The key to proving Theorem \ref{duality1} is the following minimax theorem,  which could be found in \cite[Theorem 2.4.1]{AH99} and sometimes referred to as Sion's minimax theorem.

\begin{lemma}\label{thm:minimax}
Let $X$ be a compact Hausdorff space, $Y$ be an arbitrary set, and $f:X\times Y\to \R\cup\{\infty\}$. Assume that $f$ is lower semi-continuous in $x$ for each fixed $y$, convex in $x$, and concave in $y$. Then 
$$\min_{x\in X}\sup_{y\in Y}f(x,y)=\sup_{y\in Y}\min_{x\in X}f(x,y).$$
\end{lemma}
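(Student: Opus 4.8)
The plan is to recognize Lemma~\ref{thm:minimax} as a version of Sion's minimax theorem in which compactness and only \emph{one-sided} semi-continuity are imposed on the $X$-side (here $X$, resp.\ $Y$, is read as a convex subset of a topological vector space, resp.\ a vector space, so that the convexity/concavity hypotheses make sense); a complete proof is \cite[Theorem~2.4.1]{AH99}, and I sketch the argument. First one checks the two ``$\min$''s are legitimate: $x\mapsto \sup_{y\in Y}f(x,y)$ is a supremum of lower semi-continuous functions, hence lower semi-continuous, and $X$ is compact, so its infimum is attained; likewise $x\mapsto f(x,y)$ attains its infimum for each fixed $y$. The inequality $\sup_{y}\min_{x}f(x,y)\le \min_{x}\sup_{y}f(x,y)$ (weak duality) is immediate, so the whole content is the reverse inequality.

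For the reverse inequality I would argue by contradiction. Suppose $a:=\sup_{y\in Y}\min_{x\in X}f(x,y)<\min_{x\in X}\sup_{y\in Y}f(x,y)=:b$ and fix $c$ with $a<c<b$. For $y\in Y$ put $A_y=\{x\in X: f(x,y)\le c\}$, which is closed by lower semi-continuity in $x$ and convex by convexity in $x$. Since $b>c$, for every $x\in X$ some $y$ has $f(x,y)>c$, i.e.\ $x\notin A_y$; hence $\bigcap_{y\in Y}A_y=\emptyset$, and by compactness of $X$ there are finitely many $y_1,\dots,y_n\in Y$ with $\bigcap_{i=1}^n A_{y_i}=\emptyset$, equivalently $\min_{x\in X}\max_{1\le i\le n}f(x,y_i)>c$. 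To reach a contradiction it then suffices to prove the finite-$Y$ statement: for every finite $F=\{y_1,\dots,y_n\}\subseteq Y$ one has $\min_{x\in X}\max_{y\in F}f(x,y)\le a$.

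This finite statement I would prove by induction on $n$, the case $n=1$ being trivial since $\min_x f(x,y_1)\le a$. The inductive step rests on a two-point lemma: for an auxiliary lower semi-continuous convex function $g$ on $X$ and points $y',y''\in Y$, the segment $y_t=(1-t)y'+ty''$ carries a parameter value at which $\min_{x}\max\bigl(g(x),f(x,y_t)\bigr)$ is at least $\min_x\max\bigl(g(x),f(x,y'),f(x,y'')\bigr)$. The engine is that, by concavity of $f$ in $y$, on the sublevel set $B_t=\{x:\max(g(x),f(x,y_t))\le r\}$ one has $f(x,y_t)\ge\min(f(x,y'),f(x,y''))$, so $B_t$ lies inside the union of two \emph{disjoint} closed convex sets; being convex, $B_t$ is connected and therefore sits entirely in one of them, and a connectedness argument in $t\in[0,1]$ produces a $t^\ast$ with the claimed bound. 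Iterating this lemma collapses the $n$ points of $F$ into points of $Y$ over which $\min_x f$ is controlled by $a$. Combining with the standard fact $\min_x\sup_{y\in Y}f(x,y)=\sup_{F\ \mathrm{finite}}\min_x\max_{y\in F}f(x,y)$ (itself a consequence of lower semi-continuity and compactness: the sets $\{f(\cdot,y)>\beta\}$ are open and cover $X$), the finite statement forces $b\le a$, contradicting $a<c<b$.

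The main obstacle is making the two-point/induction step rigorous: one must show that the subsets of $[0,1]$ on which $B_t$ falls into one versus the other of the two disjoint closed convex sets are closed in $[0,1]$, which is delicate because $f$ is only lower semi-continuous in $x$ and carries no assumed regularity in $y$. The usual remedy is to run the connectedness argument with a threshold $r$ slightly below the target and then let $r\uparrow$, or to phrase the sublevel sets with strict inequalities. Everything else --- weak duality, the sublevel-set/compactness reduction, and the passage from finite subsets of $Y$ back to $\sup_{y\in Y}$ --- is routine.
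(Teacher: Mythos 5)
The paper does not actually prove this lemma; it records it as Lemma~A.1 and immediately defers to \cite[Theorem~2.4.1]{AH99}, which is precisely the citation you open with. So there is no in-paper argument to compare against --- your proposal supplies extra detail beyond what the authors provide.

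As a reconstruction of the underlying Sion/Kneser-type argument, your sketch is essentially correct: weak duality and attainment of the two minima (via lower semi-continuity and compactness), the reduction to finite $F\subseteq Y$ by taking closed convex sublevel sets $A_y$ and using the finite-intersection property, and the two-point induction step using concavity in $y$ to trap the sublevel set $B_t$ inside a disjoint union of two closed convex sets and then invoking connectedness in $x$ and in $t\in[0,1]$. Two small remarks on precision. First, for the iteration to close, you should take the auxiliary function $g$ to be $\max_{i\le n-2}f(\cdot,y_i)$, not an arbitrary lower semi-continuous convex function; otherwise the inequality you produce is not the one the induction hypothesis can absorb. Second, your flagged delicacy in the connectedness-in-$t$ step is real: the sets $\{t:B_t\subseteq C'\}$ and $\{t:B_t\subseteq C''\}$ need not be closed when $f$ carries no continuity in $y$ beyond what concavity gives, and the standard repair is exactly the one you name --- run the argument at a threshold $r$ strictly below the target value (or with strict sublevel sets) and pass to the limit $r\uparrow$. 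With those caveats, the proposal is a faithful outline of the proof the cited reference would give; since the paper offers none of its own, there is no divergence to report.
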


\begin{proof}[Proof of Theorem \ref{duality1}]
The $\geq$ direction of \eqref{dua} being obvious, we focus on the $\leq$ part. We first assume $c$ is bounded continuous. For a Polish space $X$, we denote by $C_{\rm b}(X)$ the space of all bounded continuous functions on $X$.  First observe that by definition \eqref{o1}, for $\pi\in\Pi_\eta(\mub,\nub)$,
\begin{align}
&\hspace{0.5cm}\sup_{\substack{\phi\in C_{\rm b}(X)\\ \bpsi\in C_{\rm b}^d(Y)}}\Bigg\{\int_X\phi\, \d\eta -\int_{X\times Y}\phi(x)\pi(\d x,\d y) + \int_Y \bpsi^\top\d\bnu-\int_{X\times Y}\bpsi(y)^{\top}\frac{\d\bmu}{\d \eta}(x)\pi(\d x,\d y)\Bigg\}\nonumber\\
&\qquad\qquad=\begin{cases}0&\text{ if }\pi\in\Pi_\eta(\bmu,\bnu);\\
\infty &\text{ elsewhere}.\end{cases}\label{3}
\end{align}
 For $\pi\in\Pi_\eta(\mub,\nub)$, we have by using (\ref{3}) that
\begin{align}
&\hspace{0.5cm}\sup\Bigg\{\int_{X\times Y}p(x,y)\pi(\d x,\d y)+\int_X\phi\, \d\eta+\int_Y \bpsi^\top\d\bnu   \mid p\in C_{\rm b}(X\times Y), \nonumber\\
&\hspace{1.5cm}\phi\in C_{\rm b}(X),\ \bpsi\in C_{\rm b}^d(Y),\ \phi(x)+\bpsi(y)^{\top}\frac{\d\bmu}{\d \eta}(x)\leq c(x,y)-p(x,y)\Bigg\}\nonumber\\
&=\sup\Bigg\{\int_{X\times Y}c(x,y)\pi(\d x,\d y)+\int_X\phi\, \d\eta-\int_{X\times Y}\phi(x)\pi(\d x,\d y)\nonumber\\
&\hspace{0.5cm}+\int_Y \bpsi^\top\d\bnu-\int_{X\times Y}\bpsi(y)^{\top}\frac{\d\bmu}{\d \eta}(x)\pi(\d x,\d y)\mid \phi\in C_{\rm b}(X),\ \bpsi\in C_{\rm b}^d(Y)\Bigg\}\nonumber\\
&=\begin{cases}\int_{X\times Y}c(x,y)\pi(\d x,\d y)&\text{ if }\pi\in\Pi_\eta(\bmu,\bnu);\\
\infty&\text{ elsewhere}.\end{cases}\label{eq:supinf}
\end{align} Since $\d\mub/\d\eta$ is bounded continuous, the set $\Pi_\eta(\mub,\nub)$ is weakly compact. 
Using \eqref{eq:supinf} and Lemma  \ref{thm:minimax}, we obtain
\begin{align*}
    &\hspace{0.5cm}\min_{\pi\in\Pi_{\eta}(\bmu,\bnu)}\int_{X\times Y}c(x,y)\pi(\d x,\d y)\\
    &=\min_{\pi\in \Pi_\eta(\mub,\nub)}\sup\Bigg\{\int_{X\times Y}p(x,y)\pi(\d x,\d y)+\int_X\phi\, \d\eta+\int_Y \bpsi^\top\d\bnu\mid \phi\in C_{\rm b}(X),\\
&\hspace{1cm} \bpsi\in C_{\rm b}^d(Y),\ p\in C_{\rm b}(X\times Y),\  \phi(x)+\bpsi(y)^{\top}\frac{\d\bmu}{\d \eta}(x)\leq c(x,y)-p(x,y)\Bigg\}\\
&=\sup\Bigg\{\min_{\pi\in \Pi_\eta(\mub,\nub)}\int_{X\times Y}p(x,y)\pi(\d x,\d y)+\int_X\phi\, \d\eta+\int_Y \bpsi^\top\d\bnu\mid\phi\in C_{\rm b}(X), \\
&\hspace{1cm} \bpsi\in C_{\rm b}^d(Y),\ p\in C_{\rm b}(X\times Y),\  \phi(x)+\bpsi(y)^{\top}\frac{\d\bmu}{\d \eta}(x)\leq c(x,y)-p(x,y)\Bigg\}.
\end{align*}
By duality for classic optimal transport, 
\begin{align*}&\hspace{0.5cm}\min_{\pi\in \Pi_\eta(\mub,\nub)}\int_{X\times Y}p(x,y)\pi(\d x,\d y)\\
&=\sup\Bigg\{\int_X\widetilde{\phi}\,\d\eta+\int_Y\widetilde{\psi}\,\d\nub\mid \widetilde{\phi}\in C_{\rm b}(X),\ \widetilde{\psi}\in C_{\rm b}(Y),\  \widetilde{\phi}(x)+\widetilde{\psi}(y)\frac{\d\mub}{\d\eta}(x)\leq p(x,y) \Bigg\}.\end{align*}
Rearranging the terms we have
\begin{align*}
&\hspace{0.5cm}\min_{\pi\in\Pi_{\eta}(\bmu,\bnu)}\int_{X\times Y}c(x,y)\pi(\d x,\d y)\\
&=\sup\Bigg\{\int_X\widetilde{\phi}\,\d\eta+\int_Y\widetilde{\psi}\,\d\nub +\int_X\phi\, \d\eta+\int_Y \bpsi^\top\d\bnu\mid  \widetilde{\phi}\in C_{\rm b}(X),\ \widetilde{\psi}\in C_{\rm b}(Y),\  \widetilde{\phi}(x)+\widetilde{\psi}(y)\leq p(x,y);\\
&\hspace{4cm}\phi\in C_{\rm b}(X),\ \bpsi\in C_{\rm b}^d(Y),\ \phi(x)+\bpsi(y)^{\top}\frac{\d\bmu}{\d \eta}(x)\leq c(x,y)-p(x,y)\Bigg\}\\
&\leq \sup\Bigg\{\int_X\widetilde{\phi}\,\d\eta+\int_Y\widetilde{\psi}\,\d\nub+\int_X\phi\, \d\eta+\int_Y \bpsi^\top\d\bnu\mid \widetilde{\phi}\in C_{\rm b}(X),\ \widetilde{\psi}\in C_{\rm b}(Y),\\
&\hspace{3cm}\phi\in C_{\rm b}(X),\ \bpsi\in C_{\rm b}^d(Y),\ \phi(x)+\bpsi(y)^{\top}\frac{\d\bmu}{\d \eta}(x)+\widetilde{\phi}(x)+\widetilde{\psi}(y)\leq c(x,y)\Bigg\}\\
&\leq \sup\Bigg\{\int_X\phi\,  \d\eta+\int_Y \bpsi^\top \d \bnu\mid (\phi,\bpsi)\in C_{\rm b}(X)\times C_{\rm b}^d (Y),\ \phi (x)+\bpsi(y)^\top  
\frac{\d\bmu}{\d\eta}(x)\le c(x,y)\Bigg\},
\end{align*}
thus proving the duality formula (\ref{dua}) in the case where $c$ is bounded continuous.

Consider the general case where $c$ is lower semi-continuous, possibly taking values in $\R\cup\{\infty\}$. As in \cite{V03}, we can write $c=\sup c_n$ where each $c_n$ is continuous bounded and $c_n$ is nondecreasing in $n$. For $(\phi,\bpsi)\in \Phi_c$, we denote $\varphi^{d}(\phi,\bpsi):=\int_X\phi\, \d\eta+\int_Y \bpsi^\top\d\bnu$. Also write $I_n(\pi)=\int_{X\times Y}c_n\d\pi$. We aim to show that
\begin{align}\inf_{\pi\in\Pi_\eta(\bmu,\bnu)}I(\pi)&\leq\sup_n\inf_{\pi\in\Pi_\eta(\bmu,\bnu)}I_n(\pi) \leq\sup_n\sup_{(\phi,\bpsi)\in \Phi_{c_n}}\varphi^{d}(\phi,\bpsi)\leq \sup_{(\phi,\bpsi)\in \Phi_c}\varphi^{d}(\phi,\bpsi).\label{s}\end{align}The second inequality follows from the first part of the proof, and the third inequality follows from that $\{c_n\}$ is nondecreasing in $n$, so it suffices to prove the first equality. 

Since $\d\eta/\d\mub$ is bounded, $\Pi_\eta(\bmu,\bnu)$ is tight. Consider a minimizing sequence $\{\pi_{n,k}\}$ for $\inf I_n(\pi)$. By Prokhorov's theorem, we can extract a subsequence, say $\pi_{n,k}\to\pi_n$ weakly as $k\to\infty$. Note that $\pi_n\in \Pi_\eta(\bmu,\bnu)$ since $\d\bmu/\d\eta$ is continuous. Thus the infimum is attained at $\pi_n$. Again by Prokhorov's theorem, $\pi_n\to\pi_*$ up to extracting a subsequence. By monotone convergence, $I_n(\pi_*)\to I(\pi_*)$. Thus for any $\ee>0$, we can find $N,M$ such that
$$\inf_{\pi\in\Pi_\eta(\bmu,\bnu)}I(\pi)\leq I(\pi_*)<I_N(\pi_*)+\ee<I_N(\pi_M)+2\ee.$$Letting $\ee\to 0$ proves the first inequality of (\ref{s}). Combining with the trivial bound $$\inf_{\pi\in\Pi_\eta(\bmu,\bnu)}I(\pi)\geq \sup_{(\phi,\bpsi)\in \Phi_c}\varphi^d(\phi,\bpsi)$$completes the proof of (\ref{dua}).

To show that the infimum of (\ref{dua}) is attained we still apply Prokhorov's theorem. For a minimizing sequence $\{\pi_k\}$ it has a subsequence converging to $\pi_*\in\Pi_\eta(\bmu,\bnu)$ (since $\d\bmu/\d\eta$ is continuous) and 
$$I(\pi_*)=\lim_{n\to\infty}I_n(\pi_*)\leq\lim_{n\to\infty}\limsup_{k\to\infty}I_n(\pi_k)\leq \limsup_{k\to\infty}I(\pi_k)=\inf_{\pi\in\Pi_\eta(\bmu,\bnu)}I(\pi).$$
This shows the desired attainability.
\end{proof}

\subsection{Proofs of results in Section \ref{W}}
\label{app:Pf6}

\begin{proof}[Proof of Theorem \ref{thm:rep}]
First we prove the easy direction. Suppose that $\kappa_{\bz}\in \K_{\bz},~\hat{\kappa}\in \S_{b,1},$ and $\widetilde{\kappa}_{\bz'}\in \widetilde{\K}_{\bz'}$ for all $\bz,\bz'\in\rd$. Fix a measurable set $B\subseteq Y $. Since $\kappa_{\bz}\in \K_{\bz}$, for $V\subseteq [0,1]$,
$$\int_{A_\bz}\kappa_\bz^x(\{\bz\}\times V)\mu_\bz(\d x)=\tau(V).$$
Therefore we have
\begin{align*}\int_X \kappa^x(B)\bmu(\d x)   &=\int_X \int_{\cR }\int_{[0,1]}\kappa_\bz^x(\bmu'(x),\d u)\hat{\kappa}^{(\bmu'(x),u)}(\d \bz',\d u')\widetilde{\kappa}_{\bz'}^{(\bz',u')}(B)\bmu'(x)\mub(\d x)\\
    &=\int_{\rd}\int_{A_\bz}\int_{\cR }\int_{[0,1]}\kappa_\bz^x(\{\bz\}\times \d u)\hat{\kappa}^{(\bz,u)}(\d \bz',\d u')\widetilde{\kappa}_{\bz'}^{(\bz',u')}(B)\bz\mu_\bz(\d x)\mm(\d\bz)\\
    &=\int_{\cR }\int_{\cR }\hat{\kappa}^{(\bz,u)}(\d \bz',\d u')\widetilde{\kappa}_{\bz'}^{(\bz',u')}(B)\bz\tau(\d u)\mm(\d \bz).
\end{align*}
Since $\hat{\kappa}\in \S_{b,1},$ it holds for $Z'\subseteq\rd$ and $V\subseteq[0,1]$,
$$\int_{\cR }\hat{\kappa}^{(\bz,u)}(Z'\times V)\bz\tau(\d u)\mm(\d \bz)= \int_{Z'}\bz'\tau(V)\nn(\d\bz').$$
This gives
\begin{align*}\int_X \kappa^x(B)\bmu(\d x)&= \int_{\cR }\widetilde{\kappa}_{\bz'}^{(\bz',u')}(B)\bz'\tau(\d u')\nn(\d\bz')\\
&=\int_{\cR }\int_{B_{\bz'}}\bone_B(y)\bz'\widetilde{\kappa}_{\bz'}^{(\bz',u')}(\d y)\tau(\d u')\nn(\d\bz').
\end{align*}Using $\widetilde{\kappa}_{\bz'}\in \widetilde{\K}_{\bz'}$, we have that for $B\subseteq Y $,
$$\int_{[0,1]}\widetilde{\kappa}_{\bz'}^{(\bz',u')}(B)\tau(\d u')=\nu_{\bz'}(B).$$
We conclude that
$$\int_X \kappa^x(B)\bmu(\d x)= \int_{\rd}\int_{B_{\bz'}}\bone_B(y)\bz'\nu_{\bz'}(\d y)\nn(\d\bz')=\bnu(B).$$

Consider now  kernels $\kappa_{\bz}\in \K_{\bz},~\widetilde{\kappa}_{\bz'}\in \widetilde{\K}_{\bz'},~\bz,\bz'\in\rd$ and $\kappa\in\K(\bmu,\bnu)$ as fixed, where $\kappa_\bz$ is backward Monge and $\widetilde{\kappa}_{\bz'}$ is Monge.   Denote by $T_\bz$ the inverse   of $\kappa_\bz$ which can be chosen as any Monge transport from $([0,1],\tau)$ to $(A_\bz,\mu_\bz)$. More precisely, we have for any $B\subseteq Y $,
\begin{align}
    \int_{X }\kappa^x(B)\mub(\d x)=\int_{\cR }\kappa^{T_\bz(u)}(B)\tau(\d u)\mm(\d\bz)\label{eq:tz}
\end{align}
In this case, we may compose the kernels to get a kernel
\begin{align}\hat{\kappa}^{(\bz,u)}(D):=\int_{\rd}\int_{B_{\bz'}}\kappa^{T_\bz(u)}(\d y)\widetilde{\kappa}_{\bz'}^y(D)\nn(\d \bz'),\label{eq:kappahat}\end{align}
as illustrated by Figure \ref{fig:commute}.

To show $\hat{\kappa}\in \S_{b,1}$, it suffices to show that for $Z'\subseteq \rd$ and $V\subseteq[0,1]$,
\begin{enumerate}[(i)]
    \item $\int_{\cR }\hat{\kappa}^{(\bz,u)}(Z'\times V)\tau(\d u)\mm(\d\bz)=\tau(V)\nn(Z')$;\label{t1}
    \item $\int_{\cR }\hat{\kappa}^{(\bz,u)}(Z'\times V)\bz\tau(\d u)\mm(\d\bz)= \int_{Z'}\bz'\tau(V)\nn(\d\bz').$\label{t2}
\end{enumerate}
To prove \eqref{t1}, we first claim that for $B\subseteq Y $,
\begin{align}
    \nu_{\bz'}(B)=\int_{X }\kappa^x(B\cap B_{\bz'})\mub(\d x).\label{eq:1}
\end{align}
This is a direct consequence of the uniqueness of disintegration and for $B\subseteq Y $,
\begin{align}
    \nub(B)&=\int_{\rd}\int_{A_\bz}\kappa^x(B)\mu_\bz(\d x)\mm(\d \bz)\label{step}\\
    &=\int_{\rd}\int_{A_\bz}\int_{\rd}\kappa^x(B\cap B_{\bz'})\nn(\d\bz')\mu_\bz(\d x)\mm(\d \bz)\nonumber\\
    &=\int_{\rd}\int_X \kappa^x(B\cap B_{\bz'})\mub(\d x)\nn(\d\bz').\nonumber
\end{align}
It follows that using \eqref{eq:1} in the second equality, \eqref{eq:tz} in the third, \eqref{eq:kappahat} in the fourth, that
\begin{align*}
    &\hspace{0.5cm}\tau(E_1)Q(E_2)\\&=\int_{\rd}\int_{B_{\bz'}}\widetilde{\kappa}_{\bz'}^y(E_1\times E_2)\nu_{\bz'}(\d y)\nn(\d \bz')\\
    &=\int_{\rd}\int_{B_{\bz'}}\widetilde{\kappa}_{\bz'}^y(E_1\times E_2)\int_{X }\kappa^x(\d y\cap B_{\bz'})\mub(\d x)\nn(\d \bz')\\
    &=\int_{\rd}\int_{B_{\bz'}}\widetilde{\kappa}_{\bz'}^y(E_1\times E_2)\int_{\cR }\kappa^{T_\bz(u)}(\d y\cap B_{\bz'})\tau(\d u)\mm(\d\bz)\nn(\d \bz')\\
    &=\int_{\cR }\hat{\kappa}^{(\bz,u)}(E_1\times E_2)\tau(\d u)\mm(\d\bz).
\end{align*}
To show \eqref{t2}, we first note that by definition of $\widetilde{\kappa}_{\bz'}$, for all $\bz'$ and $V\subseteq [0,1]$,
$$\tau(V)=\int_{B_{\bz'}}\widetilde{\kappa}_{\bz'}^y(\bz',V)\nub(\d y).$$
Therefore, since $\kappa\in\K(\bmu,\bnu)$, for $Z'\subseteq\rd$ and $V\subseteq[0,1]$, we have
\begin{align*}
    \int_{Z'}\bz'\tau(V)\nn(\d\bz')&=\int_{Z'}\int_{B_{\bz'}}\widetilde{\kappa}_{\bz'}^y(\bz',V)\nub(\d y)\bz'\nn(\d \bz')\\
    &=\int_{\rd}\int_{B_{\bz'}}\bone_{\{\bnu'(y)\in Z'\}}\widetilde{\kappa}_{\bz'}^y(\bz',V)\nub(\d y)\bz'\nn(\d \bz')\\
    &=\int_{\rd}\int_{B_{\bz'}}\widetilde{\kappa}_{\bz'}^y(Z'\times V)\bnu(\d y)\nn(\d \bz')\\
    &=\int_{\rd}\int_{B_{\bz'}}\widetilde{\kappa}_{\bz'}^y(Z'\times V)\int_X \kappa^x(\d y)\bmu'(x)\mub(\d x)\nn(\d\bz').
\end{align*}
By \eqref{eq:tz} and \eqref{eq:kappahat}, we obtain for an arbitrary $Z'\subseteq \rd$ that
\begin{align*}
    &\hspace{0.5cm}\int_{Z'}\bz'\tau(V)\nn(\d\bz')\\   &= \int_{\rd}\int_{B_{\bz'}}\widetilde{\kappa}_{\bz'}^y(Z'\times V)\int_{\cR }\kappa^{T_\bz(u)}(\d y)\bz\tau(\d u)\mm(\d\bz)\nn(\d \bz')\\
    &=\int_{\cR }\int_{\rd}\int_{B_{{\bz'}}}\kappa^{T_\bz(u)}(\d y)\widetilde{\kappa}_{{\hat{\bz}}}^y(Z'\times V)\nn(\d \bz')\bz\tau(\d u)\mm(\d\bz)\\
    &=\int_{\cR }\hat{\kappa}^{(\bz,u)}(Z'\times V)\bz\tau(\d u)\mm(\d\bz).
\end{align*}
This finishes the proof that $\hat{\kappa}\in\S_{b,1}$. Next we show that after composing these kernels we get back $\kappa$, i.e., for $\bmu'(x)=\bz$ and $B\subseteq Y$, that
\begin{align}
    \kappa^x(B)=\int_{\cR }\int_{[0,1]}\kappa_\bz^x(\{\bz\}\times \d u)\hat{\kappa}^{(\bz,u)}(\d\bz',\d u')\bone_{\{S_{\bz'}(u')\in B\}}.\label{eq:getbackkappa}
\end{align}
Since $T_\bz$ and $\kappa_\bz$ forms inverses of each other, we have
\begin{align*}
    \kappa^x(B)  &=\int_{[0,1]}\kappa_\bz^x(\{\bz\}\times \d u)\kappa^{T_\bz(u)}(B)\\
    &=\int_{[0,1]}\kappa_\bz^x(\{\bz\}\times \d u)\int_{\rd}\int_{B_{{\bz}'}}\kappa^{T_\bz(u)}(\d y)\nn(\d {\bz}')\bone_{\{y\in B\}}.
\end{align*}
Similarly, since $S_{\bz'}$ and $\widetilde{\kappa}_{\bz'}$ are inverses of each other, it holds that
$$\bone_{\{y\in B\}}=\widetilde{\kappa}_{{\bz}'}^y( {\bz}',(S_{\bz'})^{-1}(B)).$$
Therefore, using \eqref{eq:kappahat} in the last step yields
\begin{align*}
    &\hspace{0.5cm}\kappa^x(B)\\
    &=\int_{[0,1]}\kappa_\bz^x(\{\bz\}\times \d u)\int_{\rd}\int_{B_{{\bz}'}}\kappa^{T_\bz(u)}(\d y)\widetilde{\kappa}_{{\bz}'}^y( {\bz}',(S_{\bz'})^{-1}(B))\nn(\d {\bz}')\\
    &=\int_{\cR }\int_{[0,1]}\kappa_\bz^x(\{\bz\}\times \d u)\int_{B_{{\bz}'}}\kappa^{T_\bz(u)}(\d y)\widetilde{\kappa}_{{\bz}'}^y( \{{\bz}'\}\times\d u')\bone_{\{S_{\bz'}(u')\in B\}}\nn(\d {\bz}')\\
    &=\int_{\cR }\int_{[0,1]}\kappa_\bz^x(\{\bz\}\times \d u)\int_{\rd}\int_{B_{\widetilde{\bz}'}}\bone_{\{S_{\bz'}(u')\in B\}}\kappa^{T_\bz(u)}(\d y)\widetilde{\kappa}_{\widetilde{\bz}'}^y(\d {\bz}',\d u')\nn(\d \widetilde{\bz}')\\
    &=\int_{\cR }\int_{[0,1]}\kappa_\bz^x(\{\bz\}\times \d u)\hat{\kappa}^{(\bz,u)}(\d\bz',\d u')\bone_{\{S_{\bz'}(u')\in B\}},
\end{align*}
proving \eqref{eq:getbackkappa}, hence concluding the proof.
\end{proof}

\begin{proof}[Proof of Corollary \ref{coro:commute}]
    The existence of a backward martingale Monge coupling follows from Proposition \ref{prop:ssww} and Theorem 2.1 of \cite{NWZ22}. We let $\hat{\kappa}$ in \eqref{eq:rep} be induced by the Monge map $h$ in the $\rd$ dimension and identity in the $[0,1]$ dimension. Thus, there exists $\kappa\in \K(\bmu,\bnu)$ given by \eqref{eq:rep} such that
    \begin{align}
        \kappa^x(B)=\int_{[0,1]}\kappa_{\bmu'(x)}^x(\bmu'(x),\d    u)\widetilde{\kappa}_{h(\bmu'(x))}^{(h(\bmu'(x)),u)}(B).\label{eq:rep2}
    \end{align}
    Since $\bmu$ is jointly atomless, each $\mu_{\bz}$ is atomless, hence we may pick $\kappa_\bz$ that is Monge for $\bz\in\rd$. As a consequence, $\kappa$ is given by a composition of two Monge maps, hence is Monge. Denote by $f$ the map that induces $\kappa$. It is then immediate from \eqref{eq:rep2} that $\bnu'(f(x))=h(\bmu'(x))$.    
\end{proof}

\begin{proof}[Proof of Theorem \ref{2way}]
Assume that $\mm=\nn$ and $c(x,y)$  is continuous. Then the martingale transport is unique, so that any $\hat{\kappa}\in\M_{b,1}$ must be the identity in the first coordinate. The transport cost \eqref{eq:costafterrep} then simplifies into
\begin{align*}
    \C(\kappa)&=\int_{\cR }\int_{[0,1]}\hat{\kappa}^{(\bz,u)}(\{\bz\}\times \d u')\left(\int_{A_\bz}\int_{B_{\bz}} c(x,y)\kappa_\bz^x(\{\bz\}\times \d u)\widetilde{\kappa}_{\bz}^{(\bz,u')}(\d y)\mu_{\bz}(\d x)\right)\mm(\d \bz)\\
    &=\int_{\rd}\int_{A_\bz}\int_{B_{\bz}} c(x,y)\int_{[0,1]}\int_{[0,1]}\kappa_\bz^x(\{\bz\}\times \d u)\hat{\kappa}^{(\bz,u)}(\{\bz\}\times \d u')\widetilde{\kappa}_{\bz}^{(\bz,u')}(\d y)\mu_{\bz}(\d x)\mm(\d \bz),
\end{align*}
so that
\begin{align}\inf_{\hat{\kappa}\in \M_{b,1}}\C(\kappa)    &=\int_{\rd}\Bigg(\inf_{\hat{\kappa}\in \M_{b,1}}\int_{A_\bz}\int_{B_{\bz}} c(x,y)\int_{[0,1]}\int_{[0,1]}\kappa_\bz^x(\{\bz\}\times \d u)\nonumber\\
    &\hspace{2.5cm}\hat{\kappa}^{(\bz,u)}(\{\bz\}\times \d u')\widetilde{\kappa}_{\bz}^{(\bz,u')}(\d y)\mu_{\bz}(\d x)\Bigg)\mm(\d \bz)\nonumber\\
    &\geq \int_{\rd}\left(\inf_{\kappa\in\K(\mu_{\bz},\nu_{\bz})}\int_{A_\bz}\int_{B_{\bz}} c(x,y)\kappa^x(\d y)\mu_{\bz}(\d x)\right)\mm(\d \bz)\label{eq:ineqiseq}\\
    &=\int_{\R_+^d}\I_c(\mu_{\bz},\nu_{\bz}) P (\d {\bf z}).\nonumber
\end{align}
We next show the inequality in \eqref{eq:ineqiseq} is in fact an equality, so that (i) is equivalent to (iii). Recall from the disintegration theorem that the map
$$\R_+^d\to \P(X)\times\P(Y),\ \bz\mapsto (\mu_{\bz},\nu_{\bz})$$ is measurable. By Corollary 5.22 in \cite{V09} and since $c$ is continuous, there exists a measurable map $\bz\mapsto \pi_{\bz}$ such that for each $\bz$, $\pi_{\bz}$ is an optimal transport plan from $\mu_{\bz}$ to $\nu_{\bz}$. We then define the average measure
$$\pi:=\int_{\R^d_+}\pi_{\bz}\mm(\d \bz).$$
It is straightforward to check using (\ref{o})  that $\pi\in\Pi(\bmu,\bnu)$. Alternatively, using the kernel formulation, this means there exists a well-defined  stochastic kernel $\kappa\in\K(\bmu,\bnu)$ such that $\kappa\in\K(\mu_{\bz},\nu_{\bz})$ is an optimal transport from $\mu_{\bz}$ to $\nu_{\bz}$. Therefore, \eqref{eq:ineqiseq} is an equality, and equality holds if and only if
$$\I_c(\mu_{\bz},\nu_{\bz})=\int_{A_{\bf z}}\int_Yc(x,y)\kappa(x,\d y)\mu_{\bf z}(\d x).$$
That is, $\kappa$ is optimal from $\mu_{\bz}$ to $\nu_{\bz}$ for $P$-a.s.~$\bz$. This gives the equivalence of (ii) and (iii).
\end{proof}

\begin{proof}[Proof of Proposition \ref{newdua}]
To show the $\geq$ direction, consider any $\pi\in\Pi(\bmu,\bnu)$ and any $(\phi,\psi)\in\widetilde{\Phi}(c)$. Recall from Theorem \ref{2way} that
$$\pi(\{(x,y)\mid \bmu'(x)\neq \bnu'(y)\})=0.$$
It then holds that 
$$\int_X\phi\, \d\mub+\int_Y \psi\, \d \nub=\int_{X\times Y}\phi(x)+\psi(y)\pi(\d x,\d y)\leq \int_{X\times Y}c\, \d\pi.$$
This proves the $\geq$ in \eqref{dua2}.

Using \eqref{2ic} and the classic duality, it suffices to prove
\begin{align*}
    &\hspace{0.5cm}\sup\left\{\int_X\phi\, \d\mub+\int_Y \psi\, \d \nub\mid (\phi,\psi)\in\widetilde{\Phi}_c\right\}\\
    &\geq \int_{\R^d_+}\sup\left\{\int_X\phi_\bz\, \d\mu_\bz+\int_Y \psi_\bz\, \d \nu_\bz\mid\phi_\bz(x)+\psi_\bz(y)\leq c(x,y)\right\}\mm(\d \bz).
\end{align*}
By Theorem 1.39 of \cite{S15}, the suprema on the right-hand side are attained for bounded continuous functions $\phi_\bz,\psi_\bz$. By Theorem 18.19 of \cite{AB06}, there exists a measurable selection $\bz\to(\phi_\bz,\psi_\bz)$ where each $(\phi_\bz,\psi_\bz)$ is a maximizer.  We define $\phi(x)=\phi_{\bmu'(x)}(x)$ and $\psi(y)=\psi_{\bnu'(y)}(y)$. Since $\bz\mapsto\phi_{\bz}(x)$ is measurable and $x\mapsto \phi_{\bz}(x)$ is continuous, we have $(\bz,x)\mapsto \phi_\bz(x)$ is jointly measurable, and hence $\phi,\psi$ are measurable. Moreover, $(\phi,\psi)\in L^1(\mub)\times L^1(\nub)$ since $c$ is bounded. Evidently, $(\phi,\psi)\in\widetilde{\Phi}_c$. It also follows from the disintegration theorem that
$$\int_X\phi\, \d\mub=\int_{\R^d_+}\int_X\phi_\bz\, \d\mu_\bz \mm(\d \bz).$$
This proves the desired inequality and hence \eqref{dua2}.

By Theorem \ref{2way}, the infimum in \eqref{dua2} is attained. Our construction of the maximizers $\phi,\psi$ above also implies that the supremum is attained. 
\end{proof}

\begin{proof}[Proof of Proposition \ref{d2}]

We first note that, since $\mu_1\sim\mub$ and $\nu_1\sim\nub$, by  Lemma 3.5 of \cite{SSWW19}, $\bmu\simeq \bnu$ is equivalent to 
$$\left(\frac{\d\mu_1}{\d\mu_1},\frac{\d\mu_2}{\d\mu_1}\right)|_{\mu_1}\dd\left(\frac{\d\nu_1}{\d\nu_1},\frac{\d\nu_2}{\d\nu_1}\right)|_{\nu_1}.$$
By proper transformations we may without loss of generality assume that $\mu_1$ and $\nu_1$ are standard Gaussian, which we denote by $\chi$. We then have
$$\left(\frac{\d\mu_2}{\d \chi}\right)|_\chi\dd\left(\frac{\d\nu_2}{\d \chi}\right)|_\chi.$$

Suppose that $\mu_2 = N( \bm,\Sigma )$ and $\nu_2 = N(\bn,\Omega )$. Plugging in the densities we obtain (where $\mathbf{Z}$ is a standard Gaussian random vector)
\begin{align*}
&\hspace{0.5cm}\sqrt{\frac{1}{\det \Sigma }}\exp\left(-\frac{1}{2}((\mathbf{Z}- \bm)^{\top}\Sigma ^{-1}(\mathbf{Z}- \bm)-\mathbf{Z}^{\top}\mathbf{Z})\right)\\
&\dd\sqrt{\frac{1}{\det \Omega }}\exp\left(-\frac{1}{2}((\mathbf{Z}-\bn)^{\top}\Omega ^{-1}(\mathbf{Z}-\bn)-\mathbf{Z}^{\top}\mathbf{Z})\right).
\end{align*}
Taking logarithm we obtain 
\begin{align*}
&\hspace{0.5cm}(\mathbf{Z}- \bm)^{\top}\Sigma ^{-1}(\mathbf{Z}- \bm)-\mathbf{Z}^{\top}\mathbf{Z}+\log {\det \Sigma }\dd(\mathbf{Z}-\bn)^{\top}\Omega ^{-1}(\mathbf{Z}-\bn)-\mathbf{Z}^{\top}\mathbf{Z}+\log {\det \Omega }.
\end{align*}
Using (5) in \cite{G63} we can compute the Laplace transforms, so that for all $t$,
\begin{align*}
&\frac{\exp(-2( t \Sigma ^{-1} \bm)^{\top}(I-2  t (\Sigma ^{-1}-I))^{-1}( t \Sigma ^{-1} \bm))}{|\det(I-2  t (\Sigma ^{-1}-I))|^{1/2}}\times \exp\left( t ( \bm^{\top}\Sigma  \bm+\log\det\Sigma )\right)\\
&\hspace{2.5cm}=\frac{\exp(-2( t \Omega ^{-1}\bn)^{\top}(I-2  t (\Omega ^{-1}-I))^{-1}( t \Omega ^{-1}\bn))}{|\det(I-2  t (\Omega ^{-1}-I))|^{1/2}}\times \exp\left( t (\bn^{\top}\Omega \bn+\log\det\Omega )\right).
\end{align*}

After squaring both sides, we may recognize either side as a product of a rational function in $t$ and an exponential of a rational function in $t$ (see e.g.,  \cite{MP92}, Theorem 3.2a.2). The rational functions on both sides must coincide. Thus, for all $t$,
\begin{align}
    |\det(I-2t(\Sigma^{-1}-I))|=|\det(I-2t(\Omega^{-1}-I))|.\label{a}
\end{align}
Taking logarithm of the rest we see that the Taylor coefficients around $t=0$ of $-2( t \Sigma ^{-1} \bm)^{\top}(I-2  t (\Sigma ^{-1}-I))^{-1}( t \Sigma ^{-1} \bm)$ and $t ( \bm^{\top}\Sigma  \bm+\log\det\Sigma )$ separate. This yields  
\begin{align} &(\Sigma^{-1}\bm)^{\top}(I-2t(\Sigma^{-1}-I))^{-1}(\Sigma^{-1}\bm) =(\Omega^{-1}\bn)^{\top}(I-2t(\Omega^{-1}-I))^{-1}(\Omega^{-1}\bn).\label{b}
\end{align}
From (\ref{a}), we have that the characteristic polynomials of $\Sigma$ and $\Omega$ coincide. Since both of them are symmetric and positive definite, they have the same eigenvalues counted with multiplicity. Writing $\Sigma^{-1}=PDP^{-1}$ and $\Omega^{-1}=QD'Q^{-1}$ with $P,Q$ orthogonal, we have that there is an elementary permutation matrix $E$ such that $D=ED'E^{-1}$. This gives $\Sigma^{-1}=(PEQ^{-1})\Omega^{-1}(PEQ^{-1})^{-1}$. Plugging this into the (\ref{b}), we have for all $t$,
\begin{align*}&\hspace{0.5cm}((PEQ^{-1})^{-1}\Sigma^{-1}\bm)^{\top}(I-2t(\Omega^{-1}-I))^{-1}((PEQ^{-1})^{-1}\Sigma^{-1}\bm)\\
&=(\Omega^{-1}\bn)^{\top}(I-2t(\Omega^{-1}-I))^{-1}(\Omega^{-1}\bn).\end{align*}
By expanding the term $(I-2t(\Omega^{-1}-I))^{-1}$ and comparing the coefficients in the expansion, we have for any $k\geq 2$, 
$$((PEQ^{-1})^{-1}\bm)^{\top}\Omega^{-k}((PEQ^{-1})^{-1}\bm)=\bn^{\top}\Omega^{-k}\bn.$$
Since $\Omega^{-1}=QD'Q^{-1}$, we have
\begin{align}((PE)^{-1}\bm)^{\top}(D')^k((PE)^{-1}\bm)=(Q^{-1}\bn )^{\top}(D')^k(Q^{-1}\bn).\label{ev}\end{align}
Since $\Omega$ is positive definite, $D'$ is diagonal and has positive entries along the diagonal. Denote $\lambda_1,\dots,\lambda_{\ell}$ the distinct eigenvalues (or distinct diagonal entries) of $D'$ and $S_1,\dots,S_{\ell}$ the corresponding eigenspaces with dimensions $d_1,\dots,d_{\ell}$. The system of equations (\ref{ev}) then becomes $\ell$ linearly independent equations since the rank of the  Vandermonde matrix formed by diagonal entries of $D'$ is at most $\ell$. In this way, (\ref{ev}) reduces to $\ell$ restrictions that the lengths of the vectors $(PE)^{-1}\bm$ and $Q^{-1}\bn$ are the same on each $S_\ell$. Hence, there exists an orthogonal matrix $O$ consisting of $\ell$ blocks on the subspaces $S_{\ell}$, each of which is an element in $\mathcal O(d_{\ell})$ (the set of orthogonal matrices of dimension $d_{\ell}$), such that $Q^{-1}\bn=O(PE)^{-1}\bm$. Thus $\bn=QO(PE)^{-1}\bm=(QOQ^{-1})(PEQ^{-1})^{-1}\bm$. Since $D'$ is a multiple of identity on each $S_{\ell}$, it commutes with $O$ on each block, hence $D'$ commutes with $O$. Therefore, the matrix
$$(PEQ^{-1})^{-1}\Sigma^{-1}(PEQ^{-1})=\Omega^{-1}=QD'Q^{-1}$$commutes with $QOQ^{-1}$. We conclude that
$$\Omega^{-1}=(QO(PE)^{-1})^{-1}\Sigma^{-1}(QO(PE)^{-1}).$$
That is, there exists a matrix $M:=QO(PE)^{-1}$ such that $\Omega^{-1}=M^{-1}\Sigma^{-1}M$ and $\bn=M\bm$. Therefore, the linear map $M$ transports $\mu_2$ to $\nu_2$. Since $M$ is orthogonal, it also transports $\chi=\mu_1$ to $\chi=\nu_1$. This concludes the proof.
\end{proof}

A natural question to ask is whether Proposition \ref{d2} extends to dimensions $d>2$. In this case, computation of Laplace transforms yields that instead of the relation (\ref{a}) above, we have for all $\mathbf{t}=\{t_j\}_{2\leq j\leq d}$ that $$\left|\det\left(I-2\sum_{j=2}^d t_j(\Sigma_j^{-1}-I)\right)\right|=\left|\det\left(I-2\sum_{j=2}^d t_j(\Omega_j^{-1}-I)\right)\right|$$and our goal is to provide an orthogonal matrix $P$ such that for any $2\leq j\leq d$,  $\Sigma_j^{-1}=P\Omega_j^{-1}P^{-1}$.
This is related to the simultaneous similarity of matrices problem, which was solved in \cite{F83} in the complex case. \cite{F83} proved that given some mild conditions on the characteristic polynomial
$$p(\lambda,x):=\det\left(\lambda I-\sum_{j=1}^dA_jx^j\right),$$ there are only finitely many orbits of tuples of symmetric matrices $(A_1,\dots,A_d)$ under the action of simultaneous conjugation by an orthogonal matrix.
An open problem was raised whether the same holds for real-valued matrices in \cite{F83}. A counterexample was provided later in \cite{S98} with matrices that are not positive definite. In addition, note that to apply to our situation, we need a single orbit instead of a finite number of them. Nevertheless, we are not aware of counterexamples in the case $d>2$ to Proposition \ref{d2}. If two-way transports exist between tuples of Gaussian measures while no linear transport exists, it is interesting to know what such a transport looks like.

\subsection{On the Wasserstein distance between vector-valued measures}\label{sec:W}

The aim of this section is to  propose a notion of the Wasserstein distance between $\R^d_+$-valued probability measures on a Polish space $X$ equipped with a metric $\rho$, using the optimal cost in simultaneous transport. Throughout this section, we consider the reference measure $\eta=\bar{\mu}$ and a number $p\geq 1$. 

Let us first recall the classic definition of the Wasserstein distance. Consider a Polish space $(X,\rho)$ and define
$$\P_p(X):=\left\{\mu\in\P(X)\mid \int_X \rho(x_0,x)^p\mu(\d x)<\infty\text{ for some }x_0\in X\right\}.$$
The Wasserstein distance between probability measures $\mu,\nu\in\P_p(X)$ is the metric given by 
$$\W_p(\mu,\nu):=\left(\inf_{\pi\in\Pi(\mu,\nu)}\int_{X\times X}\rho(x,y)^p\pi(\d x,\d y)\right)^{1/p}.$$
The space $(\P_p(X),\W_p)$ is again a Polish space.

For $\R^d_+$-valued measures, we may similarly define
$$\P(X)_{p,\rho}^d:=\left\{\bmu\in\P(X)^d\mid \int_X\rho(x,x_0)^p\mub(\d x)<\infty \mbox{ for some $x_0\in X$}\right\}.$$ The following consequence of  Theorem \ref{2way} provides a collection $\cE$  of $\R^d_+$-valued probability measures $\cE\subseteq \P(X)^d_{p,\rho}$ such that for any $\bmu,\bnu\in \cE$, $\W_p(\bmu,\bnu)=\W_p(\bnu,\bmu)<\infty$. We recall the equivalence relation $\simeq$ from Section \ref{61}.

\begin{proposition}\label{sym} Let $\bmu,\bnu\in\P(X)^d$ and suppose that both $\Pi(\bmu,\bnu)$ and $\Pi(\bnu,\bmu)$
are non-empty and $c(x,y)$ is continuous and symmetric in $x,y$. Then 
$$\I_c(\bmu,\bnu)=\I_{\widetilde{c}}(\bnu,\bmu)$$ where $\widetilde{c}(y,x)=c(y,x)$.
\end{proposition}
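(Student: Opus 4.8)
The statement is a symmetry property of the optimal simultaneous transport cost restricted to a two-way equivalence class, so the natural strategy is to reduce it to the fiberwise decomposition provided by Theorem \ref{2way} and then invoke the well-known symmetry of the classical ($d=1$) optimal transport cost under a symmetric cost function. First I would note that since both $\Pi(\bmu,\bnu)$ and $\Pi(\bnu,\bmu)$ are non-empty, by Proposition \ref{prop:ssww} (and the remark following it) we have $\mm=\nn=:P$, so $\bmu,\bnu\in\cE_P$ and Theorem \ref{2way} applies in \emph{both} directions. Thus
\[
\I_c(\bmu,\bnu)=\int_{\R_+^d}\I_c(\mu_{\bz},\nu_{\bz})\,P(\d\bz)
\quad\text{and}\quad
\I_{\widetilde c}(\bnu,\bmu)=\int_{\R_+^d}\I_{\widetilde c}(\nu_{\bz},\mu_{\bz})\,P(\d\bz),
\]
where $\{\mu_{\bz}\}$, $\{\nu_{\bz}\}$ are the disintegrations of $\mub,\nub$ along $\bmu',\bnu'$ as in Section \ref{W}. (One must check that the disintegration fibers used in both applications of Theorem \ref{2way} are the same family, which they are because $\mm=\nn$ forces the reference probabilities $\mub$, $\nub$ to be transported to each other and the fibers $A_\bz$, $B_\bz$ are indexed by the common law $P$.)

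Second, I would prove the fiberwise identity $\I_c(\mu_{\bz},\nu_{\bz})=\I_{\widetilde c}(\nu_{\bz},\mu_{\bz})$ for $P$-a.e.\ $\bz$. This is just the classical statement that $\inf_{\pi\in\Pi(\mu,\nu)}\int c\,\d\pi=\inf_{\pi\in\Pi(\nu,\mu)}\int \widetilde c\,\d\pi$ where $\widetilde c(y,x)=c(x,y)$: given a plan $\pi\in\Pi(\mu_{\bz},\nu_{\bz})$, its pushforward $\sigma:=\pi\circ s^{-1}$ under the swap map $s(x,y)=(y,x)$ lies in $\Pi(\nu_{\bz},\mu_{\bz})$ and satisfies $\int \widetilde c\,\d\sigma=\int c\,\d\pi$; this correspondence is a bijection, so the two infima coincide. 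Since $c$ is assumed symmetric, $\widetilde c(y,x)=c(y,x)=c(x,y)$, so in fact $\I_c(\mu_{\bz},\nu_{\bz})=\I_c(\nu_{\bz},\mu_{\bz})$. Integrating against $P$ then gives $\I_c(\bmu,\bnu)=\I_{\widetilde c}(\bnu,\bmu)$ (which also equals $\I_c(\bnu,\bmu)$ by symmetry of $c$).

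\textbf{Main obstacle.} The computational steps are routine; the only genuine points of care are measurability and the well-definedness of the fiberwise infima $\bz\mapsto\I_c(\mu_{\bz},\nu_{\bz})$, i.e.\ that $\bz\mapsto(\mu_{\bz},\nu_{\bz})$ is a measurable selection and $\bz\mapsto\I_c(\mu_{\bz},\nu_{\bz})$ is measurable so the integrals make sense. But this is exactly what is already established (via a measurable-selection argument) in the proof of Theorem \ref{2way}, and the same argument applies verbatim to $\I_{\widetilde c}(\nu_{\bz},\mu_{\bz})$ since $\widetilde c$ is continuous whenever $c$ is. So the proof amounts to: apply Theorem \ref{2way} twice, use the swap-map bijection on plans fiberwise, and use the symmetry hypothesis on $c$ to identify $\widetilde c$ with $c$. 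I do not anticipate a substantive difficulty beyond carefully citing the measurable-selection machinery already in place.
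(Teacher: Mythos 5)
Your proposal is correct and follows essentially the same route as the paper's own proof: apply the decomposition of Theorem \ref{2way} in both directions (justified by $\mm=\nn=P$), then invoke the symmetry of the classical one-dimensional optimal transport cost fiberwise, and integrate against $P$. The paper states the fiberwise step more tersely ("the classic optimal transport problem is symmetric"), but your explicit swap-map bijection and your remarks on measurable selection are exactly the content of that step and are already handled in the proof of Theorem \ref{2way}.
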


\begin{proof}[Proof of Proposition \ref{sym}]
By Theorem \ref{2way}, we have 
$$\I_c(\bmu,\bnu) =\int_{\R_+^d}\I_c(\mu_{\bz},\nu_{\bz}) P (\d {\bf z})=\int_{\R_+^d}\I_{\widetilde{c}}(\nu_{\bz},\mu_{\bz}) P (\d {\bf z})=\I_{\widetilde{c}}(\bnu,\bmu),$$where the second step follows since the classic optimal transport problem is symmetric. 
\end{proof}
 The upshot of Proposition \ref{sym} is that, for $\bmu,\bnu$ belonging to the same equivalence class we can define the Wasserstein distance 
$$\W_p(\bmu,\bnu):=\left(\inf_{\pi\in\Pi(\bmu,\bnu)}\int_{X\times X}\rho(x,y)^p\pi(\d x,\d y)\right)^{1/p}.$$
By the Decomposition Theorem, for $\bmu,\bnu\in \cE_P$, we have
$$\W_p(\bmu,\bnu)^p=\int_{\R_+^d}\W_p(\mu_{\bz},\nu_{\bz})^p P (\d {\bf z}).$$
The following corollary then follows from standard results on the analysis on the space of random variables taking values in a Polish space; see \cite{C02}.

\begin{corollary}
For each $1\leq p<\infty$, the metric space $(\cE_P,\W_p)$ is complete and separable, hence a Polish space.
\end{corollary}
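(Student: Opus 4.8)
The plan is to transfer completeness and separability of $(\cE_P, \W_p)$ from the corresponding properties of an $L^p$-space of random elements valued in a Polish space, using the disintegration identity $\W_p(\bmu,\bnu)^p = \int_{\R^d_+} \W_p(\mu_{\bz},\nu_{\bz})^p\, P(\d\bz)$ established just above (a consequence of the Decomposition Theorem, Theorem~\ref{2way}, and Proposition~\ref{sym}). First I would set up the right identification: an element $\bmu\in\cE_P$ is determined, up to the ambiguity in the reference measure, by its disintegration $\bz\mapsto \mu_{\bz}$, which is a $P$-almost-everywhere defined measurable map from $\R^d_+$ into $(\P_p(X),\W_p)$. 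Conversely, any measurable selection $\bz\mapsto\rho_{\bz}\in\P_p(X)$ with $\int \W_p(\delta_{x_0},\rho_{\bz})^p\,P(\d\bz)<\infty$ gives rise to $\bmu\in\cE_P$ by setting $\mub=\int \rho_{\bz}\,P(\d\bz)$ and $\bmu' \equiv \bz$ on the slice indexed by $\bz$ (so that $\mu_{\bz}=\rho_{\bz}$); one checks that $\mm=P$, hence $\bmu\in\cE_P$. Thus $\cE_P$ is in bijection with the space $L^p\big(\R^d_+, P; (\P_p(X),\W_p)\big)$ of ($P$-a.e.\ equivalence classes of) $p$-integrable measurable maps into the Polish space $(\P_p(X),\W_p)$, and under this bijection the metric $\W_p$ on $\cE_P$ corresponds \emph{exactly} to the $L^p$-metric $\big(\int \W_p(\rho_{\bz},\rho'_{\bz})^p\,P(\d\bz)\big)^{1/p}$, by the displayed disintegration formula.

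Once the isometric identification $\cE_P \cong L^p(\R^d_+,P;(\P_p(X),\W_p))$ is in place, the conclusion is a citation: the Bochner-type $L^p$-space of measurable maps from a probability space into a separable metric space is itself separable, and if the target is a \emph{complete} separable metric space then the $L^p$-space is complete. This is precisely the content of the reference \cite{C02} cited in the statement (analysis on spaces of random variables valued in a Polish space); $(\P_p(X),\W_p)$ is a Polish space by the classical result recalled at the start of Appendix~\ref{sec:W}. So I would invoke that: separability of $(\P_p(X),\W_p)$ plus measurability of $P$ gives separability of the $L^p$-space (approximate by simple maps with values in a countable dense subset and rational-valued "coefficients" over a countable generating algebra of $\mathcal B(\R^d_+)$); completeness of $(\P_p(X),\W_p)$ gives completeness of the $L^p$-space by the standard Riesz--Fischer argument (a Cauchy sequence has a rapidly converging subsequence, whose slicewise limit exists $P$-a.e.\ by completeness of the target, and dominated convergence upgrades a.e.\ convergence to $L^p$ convergence). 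Transporting both properties back through the isometry yields that $(\cE_P,\W_p)$ is complete and separable, i.e.\ Polish.

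The main obstacle I anticipate is not the abstract $L^p$ theory but the careful verification that the correspondence $\bmu\leftrightarrow(\bz\mapsto\mu_{\bz})$ is genuinely a bijection respecting $\W_p$ \emph{and} that the image is exactly the $L^p$-space — in particular: (i) that the disintegration $\bz\mapsto\mu_{\bz}$ depends only on the equivalence class of $\bmu$ and not on incidental choices, (ii) that the integrability condition defining $\P(X)^d_{p,\rho}$ matches the $L^p$-integrability condition on $\bz\mapsto\mu_{\bz}$ (here one uses $\mub=\int\mu_{\bz}P(\d\bz)$ and Fubini to see $\int_X\rho(x_0,x)^p\mub(\d x)=\int_{\R^d_+}\int_X\rho(x_0,x)^p\mu_{\bz}(\d x)P(\d\bz)$), and (iii) joint measurability issues — that $\bz\mapsto\mu_{\bz}$ is Borel measurable as a map into $(\P_p(X),\W_p)$, which follows from the measurability statement in the disintegration theorem together with the fact that the Borel $\sigma$-field of $(\P_p(X),\W_p)$ is generated by the evaluation maps $\mu\mapsto\mu(B)$. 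These are all routine but need to be stated; after that the result is immediate from \cite{C02}.
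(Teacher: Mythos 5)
Your strategy is in the same spirit as the paper's one-line proof, which simply cites \cite{C02}: pass to the disintegration $\bz\mapsto\mu_\bz$ and treat $\cE_P$ as an $L^p$-space of $(\P_p(X),\W_p)$-valued maps. The gap is in your bijection claim. The forward map $\bmu\mapsto(\bz\mapsto\mu_\bz)$ is a well-defined isometric embedding into $L^p(\R_+^d,P;\P_p(X))$, and separability does transfer (a subspace of a separable metric space is separable). But your converse construction fails: the disintegration of any $\bmu\in\cE_P$ always produces \emph{mutually singular} measures $\mu_\bz$, concentrated on the disjoint level sets $A_\bz=(\bmu')^{-1}(\bz)$, whereas an arbitrary $L^p$-map $\bz\mapsto\rho_\bz$ need have no such singularity, and then there is no function $\bmu'\colon X\to\R_+^d$ with $\bmu'\equiv\bz$ on ``the slice indexed by $\bz$''. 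So the image of $\cE_P$ is a proper subset of $L^p(\R_+^d,P;\P_p(X))$, and completeness does not transfer through the embedding.

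Worse, the image is not closed, so no bookkeeping fix can rescue the completeness claim as it stands. Take $d=2$, $X=\R$, $P=\tfrac12\delta_{(2,0)}+\tfrac12\delta_{(0,2)}$, and $\bmu^n=(\delta_{-1/n},\delta_{1/n})$. Then $\mub^n=\tfrac12(\delta_{-1/n}+\delta_{1/n})$, the density $\frac{\d\bmu^n}{\d\mub^n}$ takes the value $(2,0)$ at $-1/n$ and $(0,2)$ at $1/n$, so $\mm^n=P$ and $\bmu^n\in\cE_P$; the disintegration is $\mu^n_{(2,0)}=\delta_{-1/n}$, $\mu^n_{(0,2)}=\delta_{1/n}$, and the decomposition formula gives $\W_p(\bmu^n,\bmu^m)=|1/n-1/m|\to 0$. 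Any candidate limit $\bnu\in\cE_P$ would have to satisfy $\nu_{(2,0)}=\nu_{(0,2)}=\delta_0$, which are not mutually singular; the corresponding pair $\bnu=(\delta_0,\delta_0)$ has $\frac{\d\bnu}{\d\nub}\equiv(1,1)$ and $\nn=\delta_{(1,1)}\ne P$, so $\bnu\notin\cE_P$. Thus $(\bmu^n)$ is $\W_p$-Cauchy with no limit in $\cE_P$. In short, your separability argument stands once ``bijection'' is weakened to ``isometric embedding,'' but the completeness claim requires a genuinely different idea (or a modified statement, e.g.\ passing to a completion), and neither your proposal nor the paper's bare citation to \cite{C02} closes this gap.
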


Since for each $\kappa\in\K(\bmu,\bnu)$,
$$\int_{X\times X} c(x,y)  \bar{\mu} \otimes \kappa (\d x,\d y)=\frac{1}{d}\sum_{j=1}^d\int_{X\times X} c(x,y)  \mu_j \otimes \kappa (\d x,\d y),$$
we have by taking infimum that
\begin{align}\W_p(\bmu,\bnu)^p\geq\frac{1}{d}\sum_{j=1}^d\W_p(\mu_j,\nu_j)^p.\label{wd}\end{align}
It is also straightforward to see that  \eqref{wd}  is not an equality in Example \ref{ex:gaussian2}.

In the case $d=1$, the metric $\W_p$ metrizes the weak topology if the space $X$ is compact. On the other hand, if $d>1$, in general the convergence of $\W_p(\bmu^{(n)},\bmu)$ to zero does not imply the weak convergence of each component of $\bmu^{(n)}$ to that of $\bmu$, even if $\bmu^{(n)},\bmu\in\mathcal E_P$ for $n\in\N$ and $X$ is compact.

\begin{example}
As a sanity check, let us consider the case where $\mu_1=\dots=\mu_d$ and $\nu_1=\dots=\nu_d$. Then according to discussions in Section \ref{sec:12}, the optimal transport from $\mu_1$ to $\nu_1$ is also optimal from $\bmu$ to $\bnu$. This means
$$\W_p(\bmu,\bnu)^p=\frac{1}{d}\sum_{j=1}^d\W_p(\mu_j,\nu_j)^p=\W_p(\mu_1,\nu_1)^p.$$In other words, in the trivial case where all measures are equal, our Wasserstein distance is the same as the classic Wasserstein distance between such measures.

As another sanity check, consider $d=1$, then for any $\mu,\nu$, both $\Pi(\mu,\nu)$ and $\Pi(\nu,\mu)$
are non-empty, so that $\W_p$ is a metric on $\P(X)_{p,\rho}$ and it coincides with the classic Wasserstein distance.
\end{example}

\begin{example}\label{ex0}
Suppose that $\bmu\in\P(\R)^d$, define $T(x)=ax+b$ for some $a>0$, $b\in\R$ and $\bnu:=\bmu\circ T^{-1}$. Consider the convex cost $c(x,y)=|x-y|^p,\ p\geq 1$. Then since the linear transformation is comonotone, the associated kernel $\kappa_T\in \K(\bar{\mu},\bar{\nu})$ is an optimal transport from $\bar{\mu}$ to $\bar{\nu}$. By arguments in Section \ref{sec:12}, $\kappa_T$ is also optimal from $\bmu$ to $\bnu$.  In particular, (\ref{wd}) is an equality. Moreover, by the arguments in Section \ref{sec:12}, in case $\mu_1,\dots,\mu_d$ have disjoint supports, (\ref{wd}) is also an equality.
\end{example}

\subsection{Dual MOT-SOT parity}
\label{dualparity} 
Duality for MOT was first established by \cite{BHP13} in the following form. Given probability measures $\mu,\nu$ on $\R$ with $\mu\lcx\nu$ and an upper semi-continuous cost function $c$, it holds\begin{align}
    &\inf_{\pi\in\M(\mu,\nu)}\int c(x,y)\pi(\d x,\d y)\nonumber\\&\hspace{2cm}=\sup\left\{\int\phi\,\d\mu+\int\psi\,\d\nu\mid\phi(x)+\psi(y)+h(x)(y-x)\leq c(x,y)\right\},\label{duamot}
\end{align}
where it is also noted that the supremum may not always be attained; see also \cite{BNT17}. Our goal in this section is to connect the dual problems in \eqref{dua} and \eqref{duamot} when the primal problems are connected via the MOT-SOT parity.

Let us consider two measures $P,Q$ supported on $[0,2]$ with mean $1$ (this extends natually to compactly supported measures), with $P\gcx Q$. Let $c(x,y)$ be a continuous cost function. We next construct measures $\bmu,\bnu$ on $[0,1]$ so that the corresponding SOT problem is connected to the MOT problem with marginals $P,Q$. Let $F,G$ be cdfs for $P,Q$ and assume they are continuously invertible.\footnote{These regularity conditions on $P,Q$  do not affect the non-attainability of the supremum in \eqref{duamot}. Indeed, it is the irreducibility of the martingale coupling that matters.}   Let $\tau$ be the Lebesgue measure on $[0,1]$ and define $\d\mu_1=F^{-1}\d\tau,~\d\mu_2=(2-F^{-1})\d\tau,~\d\nu_1=G^{-1}\d\tau,~\d\nu_2=(2-G^{-1})\d\tau$. In this case, $\bmu'$ and $\bnu'$  are injective.  By   \eqref{dua} and Example \ref{4}, the dual SOT problem solves
\begin{align*}
    &\hspace{0.5cm}\sup\Bigg\{\int\phi(x)\d x+\int\psi_1(y)G^{-1}(y)\d y+\int\psi_2(y)(2-G^{-1}(y))\d y\mid\\
    &\hspace{2cm}~{{ \phi(x)+\psi_1(y)F^{-1}(x)+\psi_2(y)(2-F^{-1}(x))\leq c(F^{-1}(x),F^{-1}(y))}}\Bigg\}\\
    &=\sup\Bigg\{\int\phi(x)\d x+\int\psi_1(y)G^{-1}(y)\d y+\int\psi_2(y)(2-G^{-1}(y))\d y\mid\\
    &\hspace{2cm}~{{ \phi(F(x))+\psi_1(G(y))x+\psi_2(G(y))(2-x)\leq c(x,y)}}\Bigg\}\\
    &=\sup\Bigg\{\int\phi(F^{-1}(x))\d x+\int\psi_1(G^{-1}(y))G^{-1}(y)\d y+\int\psi_2(G^{-1}(y))(2-G^{-1}(y))\d y\mid\\
    &\hspace{2cm}~{{ \phi(x)+(\psi_1(y)-\psi_2(y))x+2\psi_2(y)\leq c(x,y)}}\Bigg\}\\
    &=\sup\Bigg\{\int\phi(x)P(\d x)+\int\psi_1(y)yQ(\d y)+\int\psi_2(y)(2-y)Q(\d y)\mid\\
    &\hspace{2cm}~{{ \phi(x)+(\psi_1(y)-\psi_2(y))x+2\psi_2(y)\leq c(x,y)}}\Bigg\}\\
    &=\sup\Bigg\{\int\phi(x)P(\d x)+\int\psi(y)Q(\d y)\mid{{ \phi(x)+\psi(y)+(y-x)h(y)\leq c(x,y)}}\Bigg\},
\end{align*}
using change of variables. This is precisely the (pointwise) MOT duality \eqref{duamot}. The dual MOT-SOT parity can be phrased as follows: if $(\hat{\phi},\hat{\psi},h)$ is a dual optimizer for MOT and $(\phi,\psi_1,\psi_2)$ for SOT, then
$$\hat{\phi}(z)=\phi(F^{-1}(z))\ \text{ and }\ \hat{\psi}(z')=\phi_1(G^{-1}(z'))z'-\phi_2(G^{-1}(z'))z'+2\psi_2(G^{-1}(z)).$$

\section{A small review of   optimal transport in higher dimensions}\label{rev}

 As mentioned in the introduction, we briefly survey a few directions on generalizing the Monge--Kantorovich optimal transport problem in higher dimensions present in the existing literature. The closest to our setting is 
 \cite{W19} and \cite{G20} in point (\ref{point}) below.

\begin{enumerate}[(i)]
    \item The multi-marginal optimal transport problem is a generalization of the classic Monge--Kantorovich
    transport problem concerning couplings of more than two marginals. For example, the objective of the Kantorovich version of such problems is to minimize
    $$\int_{X_1\times \dots\times X_d}c(x_1,\dots,x_d)\pi(\d x_1,\dots,\d x_d)$$among measures $\pi\in\P(X_1\times\dots\times X_d)$ with marginals $\mu_1,\dots,\mu_d$. 
    A duality formula can be established. However, the existence of a Monge transport is a more delicate problem for dimension $d\geq 3$. This problem has applications in physics and economics. See \cite{P15} and \cite{S15} for a review and \cite{R98} for a rich treatment. A solution for the minimization problem with $c(x_1,\dots,x_d)=(x_1+\dots+x_d)^2$ is obtained by \cite{WW16} under some conditions on $(\mu_1,\dots,\mu_d)$.
    
    \item More generally, \cite{R91} considered the multivariate marginal problem. For a collection $\mathcal E$ of subsets of $\{1,\dots,n\}$, consider the set of measures on $X_1\times\dots\times X_n$ that have fixed projections onto each $\prod_{j\in J} X_j,\ J\in \mathcal E$.
    The existence of such measures is a non-trivial task.
    A duality formula in a more general context was established earlier by \cite{R84}. For more recent results, see \cite{G19,G21} for the special case where $\mathcal E$ consists of subsets of cardinality $k,\ k\leq n$. This problem is also connected to Monge--Kantorovich problem with linear constraints.

    
    \item \cite{B20} generalized the classic Monge--Kantorovich transport problem to multiple measures, with both transports and transfers allowed, with the name ``vector-valued optimal transport". Given probability measures $\bmu=(\mu_1,\dots,\mu_d)$ and $\bnu=(\nu_1,\dots,\nu_d),$ one is allowed to transport not only from each $\mu_j$ to $\nu_j$, but also from each $\mu_j$ to $\nu_{j'}$ where $j\neq j'$ (this is called a transfer), but the costs may be different. That is, the cost function is matrix-valued with $d^2$ components and the goal is to minimize the total cost (such a setting does not apply to our main motivating example in Example \ref{ex:1}). The existence of a transport is guaranteed and duality is obtained. \cite{B20} also investigated an extension of the Wasserstein distances.
    
    \item Some earlier studies are in a similar direction as \cite{B20}. To list a few, in   \cite{CCG18,C18}  and \cite{RCLO18}, the notion of ``vector-valued optimal transport" was proposed. Inspired by the dynamic formulation of classic optimal transport with the $L^2$ cost, they took the Benamou-Brenier perspective and formulated an optimal transport problem between vector-valued measures using divergences in a network flow problem. Similarly as \cite{B20}, both transports and transfers are allowed. In addition, numerical algorithms are available and applications to image processing are discussed.
    
    \item More recently, \cite{C21} proposed a generalization of the Kantorovich--Rubinstein transport problem to higher dimensions, with the name ``optimal transport for vector measures". Consider a metric space $(X,\rho)$ and a signed measure $\eta$ on $X$ such that $\eta(X)=0$ and there exists $x_0\in X$ such that $\int_X\rho(x,x_0)\n{\eta}(\d x)<\infty$, where $\n{\eta}$ is the total variation norm of $\eta$. This problem deals with
    $$\inf_{\pi: P_1\pi-P_2\pi=\eta}\int_{X\times X}\rho(x,y)\n{\pi}(\d x,\d y)$$where $\pi$ is an $\R^d$-valued measure, and $P_1,P_2$ are projections onto the first two coordinates. Existence of $\pi$ is guaranteed. The Kantorovich--Rubinstein duality formula is extended.
    \item A recent monograph \cite{W19} 
    and PhD thesis \cite{G20}
    discussed the notions of vector-valued transport and optimal multi-partitions. This is similar to our work as such vector-valued transports are indeed simultaneous transports. However, most of their results  concern duality formulas, existence of dual optimizers, and the structure (e.g., existence and uniqueness) of the \emph{optimal} multi-partition, where $Y$ is a finite set and under certain atomless assumptions.\footnote{which explains the name ``multi-partitions". Due to the nature of the problem, it seems mathematically difficult to approximate the general theory by the special case where $Y$ is discrete.} A different notion of Wasserstein distance between $\bmu$ and $\bnu$ was formulated by choosing both $\bmu$ and $\bnu$ as the measures at origin, defined as
    $$\V_p(\bmu,\bnu):=\left(\inf_{\bta\in\M(X)^d}\W_p(\bmu,\bta)^p+\W_p(\bnu,\bta)^p\right)^{1/p}.$$
    An application to learning theory is also discussed. The only mathematical overlaps between our paper and \cite{W19} and \cite{G20} are Proposition \ref{prop:ssww} and Theorem \ref{duality1}, where our results offer more generality.\label{point}
\end{enumerate}

\section{Application to a labour market equilibrium model}\label{sec:equilibrium}

We discuss a matching equilibrium model in a labour market via simultaneous transport, similar to that in the classic transport setting.\footnote{We refer to \cite{G16}, \cite{BCN23}, and the references therein for variations of labour market equilibrium models using an optimal transport approach.}
First, we state the relevant version of the duality formula in Theorem \ref{duality1}.
Suppose that $\bmu=(\mu_1,\dots,\mu_d)$ is a vector of probabilities on $X$, $\bnu=(\nu_1,\dots,\nu_d)$ is a vector of probabilities on $Y$, and $\eta\sim \mub$. Assume that $X$ and $Y$ are compact,
and $g:X\times Y\to [-\infty,\infty)$ is upper semi-continuous.
The duality formula, with a maximization in place of a minimization in \eqref{dua},  is 
\begin{align}&\hspace{0.5cm}\sup_{\pi\in\Pi_\eta(\mu,\bnu)}\int_{X\times Y}g \, \d \pi =\inf_{(\phi,\bpsi)\in \Phi_g}\int_X\phi\,  \d\eta+\int_Y \bpsi^{\top} \, \d \bnu,\label{eq:dua}\end{align}where  \begin{align*}&\Phi_g :=\Bigg\{(\phi,\bpsi)\in C(X)\times C^d (Y)\mid \phi (x)+\bpsi(y) \cdot
\frac{\d\bmu}{\d\eta}(x)\ge g(x,y)\Bigg\}.\end{align*}     
 
Let $x\in X$ represent worker labels and $y \in Y$ represent firms. 
The interpretation of $\eta$, $\bmu$ and $\bnu$ is given below.
\begin{enumerate}
\item $\eta$ is the distribution of the workers, i.e., how much proportion of the workers are labelled with $x\in X$.
In a discrete setting of $n$ workers in total, it would not hurt to imagine that $\eta(x)=1/n$; i.e., each worker has a different label.
\item There are $d$ types of skills in this production problem.
Workers with the same label have the same skills. 
The distribution $\mu_i$ describes the supply of type-$i$ skill  provided by the workers. 
In a discrete setting, $\mu_i(x)$ is  the  type-$i$ skill   provided by each worker label $x$. 
We denote by $\bmu'=\d \bmu/\d \eta$, that is, the (per-worker) skill vector. 
\item The distribution $\nu_i$ describes the demand of type-$i$ skill from the firms. In a discrete setting,  $\nu_i(y)$ is the type-$i$ skill demanded by each firm $y$. 
\end{enumerate} 
Assume that the total demand and the total supply of skills are equal, and hence both $\bmu$ and $\bnu$ are normalized to have total mass of $(1,\dots,1)$. A matching between the workers and the firms is an element $\pi$ of  $\Pi_{\eta}(\bmu,\bnu)$.
Let $g(x,y)$ represent  the production  of firm $y$ hiring worker $x$ per unit of worker. 
For a given matching $\pi$, 
 the total production in the economy is 
 $ 
  \int  g \, \d \pi .
 $ 

Take two arbitrary functions  $w:X\to \R$  and $\mathbf p:Y\to \R^d$.
As usual, $w(x)$  represents the wage of worker $x$.
The function $\mathbf p$ represents  the profit-per-skill vector of  firm  $y$ in the following sense:
if firm $y$ employs a skill vector $\mathbf q \in \R_+^d$, then the total profit of the firm is
$\mathbf p(y)\cdot \mathbf q $.  
 Taking $\mathbf q=\bmu'(x)$, the profit generated from hiring each worker $x$ is $\mathbf p(y) \cdot \bmu'(x)$. 
The total profit of all firms is 
$$
\int_{X\times Y }\mathbf p(y)\cdot \bmu'(x) \pi(\d x,\d y) = \int_{Y}\mathbf p^{\top} \d \bnu,
$$ 
which follows from the definition of $\pi$. 

For worker $x$, their  objective is to choose a firm to maximize their wage, that is 
$$
  \max_{y\in Y}\left\{  g(x,y) -    \mathbf p(y) \cdot  \bmu'(x) \right\}.
$$ 
For firm $y$, its   objective is  to hire  workers to maximize its  profit, that is
$$
  \max_{x\in X}\left \{   g(x,y) -   w(x) \right\}.
$$ 
For a social assignment $(w,\mathbf p)$ and a matching $\pi\in \Pi_{\eta}(\bmu,\bnu)$,
an \emph{equilibrium} is attained   if 
\begin{enumerate}[(a)]\item
the social assignment is optimal, that is
$$ w(x)= \max_{y\in Y}\left\{  g(x,y) -       \mathbf p(y) \cdot  \bmu'(x)  \right\}$$ 
 and  $$
\mathbf p(y)\cdot  \bmu'(x_y) = g(x_y, y)-w(x_y)=  \max_{x\in X}\left \{   g(x,y) -   w(x) \right\}.
$$
\item the total production in the economy is at least as large as the total wage plus the total profit, that is,  
\begin{equation}\label{eq:equilibrium4}
 \int_{X\times Y} g \,\d \pi  \ge     \int_X w \,\d\eta +\int_Y \mathbf p^{\top}   \d \bnu.
 \end{equation}
 \end{enumerate}  
 Since  (a) implies 
  \begin{equation}\label{eq:equilibrium6}
  w(x)+\mathbf p(y) \cdot \bmu'(x) \ge  g(x,y) 
  \end{equation}
   for all $x\in X$ and $y\in Y$,
   integrating \eqref{eq:equilibrium6} with respect to $\pi$ gives
   $$
     \int_X w \,\d\eta +\int_Y \mathbf  p^{\top}  \d \bnu \ge  \int_{X\times Y} g \,\d \pi   ,$$
   and hence, 
     \eqref{eq:equilibrium4} has to hold as an equality,  
 and this implies the duality \eqref{eq:dua}.
Again, an equilibrium exists if and only if duality  holds with both the infimum and the supremum attained. 
In the finite-state setting, the above attainability is automatic.

\end{document}